\documentclass[10pt,a4paper,twocolumn]{article}

\usepackage[utf8]{inputenc}
\usepackage[T1]{fontenc}
\usepackage{amsmath,amssymb,amsthm,mathtools}
\usepackage{bm}
\usepackage{mathrsfs}
\usepackage{graphicx}
\usepackage{geometry}
\usepackage[hidelinks]{hyperref}
\usepackage[super,sort&compress]{natbib}
\usepackage[nameinlink,noabbrev]{cleveref}
\usepackage[ruled,vlined,linesnumbered]{algorithm2e}
\usepackage{authblk}
\usepackage{microtype}
\usepackage{braket}
\usepackage[table]{xcolor}
\usepackage{chngcntr}

\newtheorem{theorem}{Theorem}
\newtheorem*{informaltheorem}{Theorem (Informal)}
\newtheorem{proposition}[theorem]{Proposition}

\newtheorem{corollary}[theorem]{Corollary}

\newcommand{\Data}{\mathcal{D}}

\newcommand{\Loss}{\mathcal{L}}

\newcommand{\poly}{\mathrm{poly}}
\newcommand{\polylog}{\mathrm{polylog}}

\DontPrintSemicolon
\title{Efficient Quantum Algorithm for Robust Training}

\author[1,2]{Yue Wang}
\author[3]{Guangyi He}
\author[5]{Liepeng Zhang}
\author[4,3]{Lukas Gonon}
\author[1]{Qi Zhao}

\affil[1]{QICI Quantum Information and Computation Initiative, School of Computing and Data Science, The University of Hong Kong, Pokfulam Road, Hong Kong SAR, China}
\affil[2]{Shenzhen International Quantum Research Institute, Shenzhen 518000, Guangdong, China}
\affil[3]{Department of Mathematics, Imperial College London}
\affil[4]{School of Computer Science, University of St. Gallen}
\affil[5]{Department of Physics and Astronomy, University College London}

\date{}

\begin{document}

\maketitle

\begin{abstract}
Adversarial training is a standard defense against malicious input perturbations in security-critical machine-learning systems. Its main burden is structural: before every parameter update, the current model must first be attacked to find a new adversarial perturbation, making training increasingly expensive and hard to sustain at large-model scale. Here we give an end-to-end quantum procedure for projected-gradient robust training under local stability and sparsity assumptions. The key step is to reformulate the coupled attacker--learner dynamics as a high-dimensional sparse linear system whose terminal block yields the final network-parameter state. In this formulation, the dominant query cost scales linearly with training time steps, up to logarithmic factors, and polylogarithmically with model size, while the full gate complexity records separate input-preparation and sparse-access overheads. This places core computational tasks for AI security on a concrete quantum footing and identifies a regime in which robust-training overhead can be reduced.

\end{abstract}

\section*{Introduction}

\begin{figure*}[!t]
\centering
\includegraphics[width=0.88\textwidth]{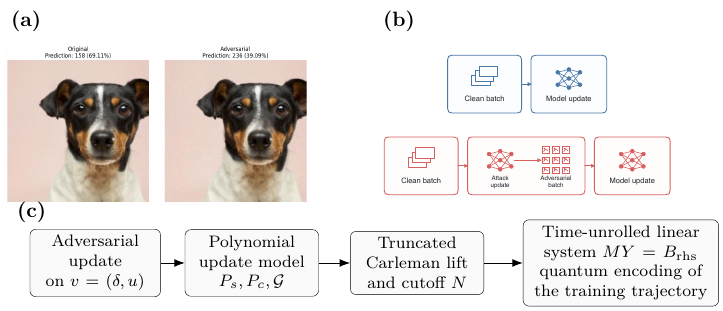}
\caption{\textbf{From robust training to a quantum-accessible linear system.} \textbf{(a)} A visually subtle perturbation changes the prediction of a fixed image classifier. \textbf{(b)} In robust training, each outer update may contain $K_t$ attack substeps and $L_t$ learner substeps. \textbf{(c)} Our reduction approximates one robust-training outer update by a polynomial map, lifts the training window to a high dimensional sparse linear system, and uses its normalized solution as a coherent encoding of the training trajectory.}
\label{fig:pipeline}
\end{figure*}

As machine-learning systems are deployed in applications where incorrect predictions can have serious consequences, securing them against adversarial manipulation has become increasingly important. Such failures now appear across deployed settings: adversarial inputs can evade face-recognition systems\cite{sharif2016accessorize}, mislead visual models in physical environments\cite{eykholt2018robust}, disrupt neural language systems\cite{belinkov2018synthetic,wallace2019universal}, and induce harmful responses from aligned large language models or expose memorized training data\cite{zou2023universal,carlini2021extracting}. Across these settings, small but strategically chosen changes to the input can produce disproportionately large changes in model behaviour while leaving the input visually or semantically close to the original data\cite{szegedyIntriguingPropertiesNeural,goodfellow2015explaining,madry2018towards,zhangTheoreticallyPrincipledTradeoff2019}. As illustrated in Fig.~\ref{fig:pipeline}a, an adversarial example differs from the original image by an imperceptible perturbation, yet a standard ResNet misclassifies it. These examples show that adversarial vulnerability is not confined to one benchmark or modality, but is a general security problem across modern machine-learning systems.

Robust training is a standard way to defend against adversarial manipulation\cite{schmidt2018robustgeneralization,wong2018provable,cohen2019certified,huangRevisitingResidualNetworks2022a}. At a high level, robust training repeatedly updates a model on perturbed inputs chosen to reveal the weaknesses of the current model. As illustrated in Fig.~\ref{fig:pipeline}b, the basic computational unit is a coupled attacker--learner procedure in which the perturbation is computed for the current model and the model is then updated on the attacked input within the same outer iteration\cite{shafahi2019free,wong2020fast,rice2020overfitting,gowalUncoveringLimitsAdversarial2021}. That coupling is also what makes robust training expensive. Ordinary training updates the model once per batch. The extra cost in robust training comes from computing an attack for the current model repeatedly. Over many iterations, this loop makes robust training substantially more expensive than ordinary training. This overhead matters most at scale, where training is already expensive and robust-training pipelines drift away from compute-efficient operating regimes\cite{bartoldsonAdversarialRobustnessLimits2024b}.

This bottleneck makes robust training a natural place to ask what quantum methods might contribute\cite{seekingQuantumAdvantageML2023,aaronson2015fineprint,tang2019quantuminspired,chia2022dequantizing}. A natural first comparison point is quantum machine learning itself. Recent work has argued that quantum learning systems may provide routes to adversarial robustness or provable robustness in specific classification settings\cite{towardsQuantumEnhancedAdversarialRobustness2023,optimalProvableRobustnessQHT2021}. But these works ask whether the learned model can itself be quantum and robust, not whether the projected-gradient attacker loop of classical robust training can be executed through a rigorous quantum reduction. Moreover, large-scale trainability of variational quantum models faces its own obstacle: barren-plateau phenomena can make optimization gradients exponentially small even before robustness enters the picture\cite{mccleanBarrenPlateaus2018,cerezoCostFunctionDependent2021}. Quantum machine learning is therefore a natural comparison point, but not yet a solution to the particular workload bottleneck considered here. Another nearby line of work studies quantum linear-algebra and linearized-dynamics primitives\cite{harrowQuantumAlgorithmSolving2009,childsQuantumAlgorithmSystems2017,gilyen2019qsvt,berry2017linearDE,krovi2023improvedDE}, which is especially efficient when the dominant cost is concentrated in a large, structured high-dimensional computation. These tools become useful only after the target computation has already been cast into a sparse linear system. Robust training does not arrive in that form yet. The attack is parameter-dependent and must be recomputed after each learner update, while sign, clipping and projection remain inside the same coupled outer step. The unresolved point is therefore whether this repeated attacker loop itself can be rewritten as a sparse linear system with a specified output state under explicit validity conditions.

Here we give an end-to-end quantum procedure that recovers a classical description of the final network-parameter state produced by robust training, under local stability and sparsity assumptions. The key step is to rewrite the coupled attacker--learner computation as a sparse linear system whose solution gives a representation of the full training trajectory. Starting from one robust-training outer update on the joint perturbation--parameter state, we replace the local nonsmooth pieces by a polynomial surrogate that preserves the alternating dependence between attack and learning, and then apply discrete-time Carleman lifting\cite{liuProvablyEfficientQuantum2024,forets2017carlemanError,amini2022finiteSectionCarleman} across the training horizon. The resulting quantum subroutine prepares the global trajectory state, and the final parameter is then extracted from its terminal degree-1 block under explicit terminal-weight and sparse-output readout assumptions\cite{cramerEfficientQuantumState2010}.

Our main result is a rigorous mathematical reformulation that maps projected-gradient robust training in a local stable regime to an end-to-end quantum algorithm. The main text develops and discusses the reduction at high level, whereas Methods states the formal procedure and the Appendix contains the detailed surrogate construction, input-model specializations, truncation bounds, error propagation and proofs.

\section*{Results}

The main theorem gives the query complexity of solving the sparse linear system associated with a fixed projected-gradient robust-training window and, under additional readout assumptions, recovering the terminal parameter block.

\begin{informaltheorem}
Consider a fixed $T$-step projected-gradient robust-training window. Suppose the coupled attacker--learner update admits a local polynomial surrogate, the corresponding truncated discrete-time Carleman lift of order $N$ remains stable over the window, and the resulting horizon system is sparse. Then a quantum procedure produces a classical description of the final network-parameter state to output error at most $\varepsilon_{\mathrm{out}}$, with dominant query complexity
$$
\widetilde{\mathcal O}\left(
 s_M\kappa_2(M)
 \mathrm{polylog}\frac{N_h}{\varepsilon_{\mathrm{LS}}}
\right).
$$
\end{informaltheorem}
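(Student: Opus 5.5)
The plan is to build the end-to-end procedure as a chain of four error-controlled reductions. The overall output error $\varepsilon_{\mathrm{out}}$ is split as
\[
\varepsilon_{\mathrm{out}} \le \varepsilon_{\mathrm{sur}} + \varepsilon_{\mathrm{Carl}} + \varepsilon_{\mathrm{LS}}' + \varepsilon_{\mathrm{read}} ,
\]
with one budget for each stage.

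\textbf{Surrogate.} Write one outer robust-training update as a map $F$ on the joint state $z=(\delta,\theta)$. This map is the composition of $K_t$ attack substeps (gradient in $\delta$, sign, clipping, projection) and $L_t$ learner substeps. On the local stable region assumed in the statement, replace each nonsmooth piece by a polynomial approximation of fixed degree, giving a polynomial map $\tilde F(z)=\sum_{k\le d} F_k z^{\otimes k}$. A Lipschitz/contraction bound on $\tilde F$ over the region then propagates the per-step surrogate error. Summed over $T$ steps, this bounds the trajectory deviation $\varepsilon_{\mathrm{sur}}$, provided the local Lipschitz constant stays at most about $1$.

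\textbf{Carleman lift.} Set $y_t=(z_t, z_t^{\otimes 2},\dots,z_t^{\otimes N})$. Then $y_{t+1}=A y_t + b$ up to truncation, where $A$ is block upper-triangular and built from the transfer matrices $A_{j,j+k-1}=\sum_{i} I^{\otimes (i-1)}\otimes F_k\otimes I^{\otimes (j-i)}$. Stacking $T$ steps gives the horizon system
\[
M\,\mathbf y = \mathbf b, \qquad M = I - \textstyle\sum_t |t{+}1\rangle\langle t|\otimes A ,
\]
where $\mathbf b$ encodes $y_0$. This system has dimension $N_h = T\cdot\sum_{j\le N} n^j$. Sparsity $s_M$ is inherited from the sparsity of the $F_k$, up to a factor of order $Nd$. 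The truncation error $\varepsilon_{\mathrm{Carl}}$ is controlled by the stability assumption: if $\|z_t\|$ stays in a ball where the lifted dynamics is contractive, the standard discrete Carleman estimate gives error decaying geometrically in $N$, uniformly over $T$. I expect this step to be the \textbf{main obstacle}. The surrogate degree, the clipping region and the step sizes must be balanced so that a single contraction constant controls both the truncation error and $\kappa_2(M)$. My first attempt is the standard rescaling $z\mapsto z/\gamma$, which makes the nonlinear blocks small. From this I would bound $\kappa_2(M)\lesssim T\cdot\sup_t\|A^t\|$ under the stability hypothesis.

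\textbf{Linear solve.} Given block-encoding or sparse oracle access to $M$, invoke an optimal QLSA (discrete adiabatic or QSVT-based). This prepares $|\mathbf y\rangle/\|\mathbf y\|$ to accuracy $\varepsilon_{\mathrm{LS}}$ with $\widetilde{\mathcal O}(s_M\kappa_2(M)\log(1/\varepsilon_{\mathrm{LS}}))$ queries. Adding the $\mathrm{polylog}\,N_h$ overhead of sparse access gives the stated dominant term.

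\textbf{Readout.} Postselect on the terminal time index $t=T$ and degree-$1$ block. Under the terminal-weight assumption, this block carries a constant (or $1/\mathrm{poly}$) fraction of the norm, so amplitude amplification costs only a lower-order factor. The normalized vector is proportional to $\theta_T$. Under the sparse-output assumption, tomography on its support recovers a classical description of $\theta_T$, with its norm fixed by the known weight. Converting state error into parameter error costs a factor of the inverse terminal weight. Choosing $\varepsilon_{\mathrm{LS}}$ proportional to $\varepsilon_{\mathrm{out}}$ times that weight completes the bound, and leaves $\kappa_2(M)$ and $s_M$ as the dominant factors.
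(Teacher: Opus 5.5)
\textbf{Gap 1: the Carleman lift.} Your overall pipeline (surrogate, lift, stacked block-bidiagonal system, QLSA, terminal-block isolation, sparse tomography) is the paper's. However, the lifting step as you wrote it is incorrect, and the whole reduction rests on it. The blocks $A_{j,j+k-1}=\sum_i I^{\otimes(i-1)}\otimes F_k\otimes I^{\otimes(j-i)}$ are the \emph{continuous-time} Carleman generator: they come from the product rule for $\frac{d}{dt}z^{\otimes j}$ under $\dot z=\sum_k F_k z^{\otimes k}$. For the discrete update $z_{t+1}=\tilde F(z_t)$ one has instead $z_{t+1}^{\otimes j}=\tilde F(z_t)^{\otimes j}=\sum_{\alpha\in\{0,\dots,D\}^j}F_{\alpha_1}\otimes\cdots\otimes F_{\alpha_j}\,z_t^{\otimes|\alpha|}$. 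This is the paper's $K_{j,s}(t)=\sum_{|\alpha|=s}Q^{(t)}_{\alpha_1}\otimes\cdots\otimes Q^{(t)}_{\alpha_j}$. Already for linear $\tilde F=F_1$ at level $j=2$, your matrix gives $F_1\otimes I+I\otimes F_1$ where the truth is $F_1\otimes F_1$, so $y_{t+1}=Ay_t+b$ fails before any truncation. For nonlinear $\tilde F$ the level-1 update reads these wrong higher levels. The solution of your $M\mathbf y=\mathbf b$ is then not the training trajectory even in its terminal degree-1 block, and no error budget accounts for this.

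The correct lift also changes the structure you rely on. Level $j$ couples to all levels $0,\dots,jD$, so truncation at $N$ cuts every level with $jD>N$; this is what $\Gamma_N$ measures. The step matrix is not block upper-triangular, because $\Psi_t(0)\neq 0$ in general (the attack and learner steps have nonzero constant terms). The right-hand side must therefore carry $c(t)=(K_{j,0}(t))_j$ in every time block, $B_{\mathrm{rhs}}=(\hat y(0),c(0),\dots,c(T-1))$, with a time-indexed $B(t)$ for the data and step-size schedule. Row sparsity grows like $s_*^{\,j}\binom{N+j}{j}$ rather than by an $O(Nd)$ factor; this is tolerable here only because $s_M$ is a parameter of the statement. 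With the correct $B(t)$, the paper assumes $\sup_t\|B(t)\|_2\le\rho<1$ and writes $M=I-L$ with $L$ nilpotent and $\|L\|_2\le\rho$. This gives $\kappa_2(M)\le(1+\rho)/(1-\rho)$, and dividing by $1+\rho$ gives the QLSA normalization $\|\overline M\|_2\le 1$.

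\textbf{Gap 2: error bookkeeping.} Your additive split of $\varepsilon_{\mathrm{out}}$ adds absolute $\ell_2$ errors on the unnormalized stacked vector (surrogate, truncation) to the normalized-state error of the QLSA. Converting one into the other requires a lower bound on the norm of the exact stacked vector, which you never state. The paper assumes $\|y^{(N)}(0)\|_2\ge\beta_0$ and notes that $y^{(N)}(0)$ is the first block of $Y_N^{\mathrm{ex}}$. It then uses $\bigl\|a/\|a\|_2-b/\|b\|_2\bigr\|_2\le 2\|a-b\|_2/\|a\|_2$ to obtain the state error $\varepsilon_{\mathrm{LS}}+2\sqrt{T+1}\,\Gamma_N/\bigl((1-\rho)\beta_0\bigr)$. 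The factor $\sqrt{T+1}$ appears because the QLSA prepares the whole stacked vector, even though the pointwise truncation error is uniform in $T$. The same inequality with $\|\Pi_{\mathrm{term},u}|Y_N^{\mathrm{ex}}\rangle\|_2\ge\sqrt{p_*}$ gives the $2/\sqrt{p_*}$ factor you anticipated for the terminal block. Only a lower bound $p_*$ is assumed, however, so the norm of $\theta_T$ is not ``fixed by the known weight''; what is recovered is the normalized terminal state.

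\textbf{Surrogate error.} The theorem concerns the surrogate trajectory, so $\varepsilon_{\mathrm{sur}}$ is not needed. If you do want the exact dynamics, note that a fixed-degree polynomial cannot approximate $\operatorname{sign}$ near $0$. You would need the dead-zone, sign-consistency and clip-safe conditions. The paper then feeds the per-step model error in as an extra lifted forcing term of size $L^{\mathrm{lift}}_{N,\bar v}\varepsilon_{\mathrm{base,step}}$, absorbed by $\rho<1$. Your alternative, propagating it with a base-space Lipschitz constant near $1$, would accumulate error linearly in $T$.
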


The theorem replaces the repeated inner-loop attack computation by a single sparse linear system. The quantum linear-system procedure then solves this system only polylogarithmically in the lifted horizon dimension $N_h=(T+1)\Delta_N$, where $\Delta_N=\sum_{j=1}^{N}d^j=O(d^N)$ and $d = m + n$ is the joint perturbation--parameter dimension. The resulting cost depends on the sparsity $s_M$ and condition number $\kappa_2(M)$ of the horizon matrix. In the contractive local regime used here, the horizon matrix obeys $\kappa_2(M)\le \min\{(1+\rho)/(1-\rho),\,2(T+1)\}$, so the condition number is constant-order when $\rho$ stays bounded away from $1$, and more generally is at worst linear in the training window $T$. The quantum solver keeps polylogarithmic dependence on the precision of the linear system $\varepsilon_{\mathrm{LS}}^{-1}$.

The quantum procedure is summarized in the box below.

\begin{center}
\fcolorbox{black}{gray!10}{%
\parbox{\dimexpr\columnwidth-2\fboxsep-2\fboxrule\relax}{%
\small
\textbf{Algorithmic summary}

\medskip
\textbf{1. Couple the state.} Write the joint state as
$v_t:=(\delta_t,u_t)\in\mathbb{R}^{m+n}$,
so one robust-training outer step is one coupled state update.

\textbf{2. Approximate one outer step by a polynomial map.} Replace sign, clipping and the required gradient blocks by local polynomial surrogates, so the full outer step becomes a polynomial update
$v_{t+1}=\Psi_t(v_t)$.

\textbf{3. Lift to linear dynamics.} Form the exact truncated monomial vector
$y^{(N)}(t)=(v_t,v_t^{\otimes 2},\ldots,v_t^{\otimes N})$,
and use it to initialize the order-$N$ truncated recurrence
$\hat y(t+1)=B(t)\hat y(t)+c(t)$ with $\hat y(0)=y^{(N)}(0)$.

\textbf{4. Stack the training window.} Collect the lifted states into
$Y=(\hat y(0),\hat y(1),\ldots,\hat y(T))$,
form the forcing vector
$B_{\mathrm{rhs}}=(\hat y(0),c(0),\ldots,c(T-1))$,
and assemble the block sparse horizon system
$MY=B_{\mathrm{rhs}}$.

\textbf{5. Prepare the input, solve, and extract the terminal parameter block.} Prepare the normalized input state
$|B_{\mathrm{rhs}}\rangle\propto B_{\mathrm{rhs}}$,
use a quantum linear-system routine to prepare
$\ket{Y}\propto\sum_{t,i}Y_{t,i}\ket{t,i}$,
then mark the terminal degree-1 parameter block with one ancilla, isolate it, and classically read out the final parameter state under the stated output assumptions.
}}
\end{center}

To state the full complexity, we additionally introduce the input-preparation term $C_{\mathrm{prep}}(B_{\mathrm{rhs}})$ and the sparse-access query cost $C_{\mathrm{SA}}$. Efficient preparation routines are known for several structured settings, including integrable distributions, black-box amplitude loading and sparse states\cite{grover2002stateprep,sanders2019blackboxStatePrep,ramacciotti2024sparseStatePrep}. Under the qRAM-style specialization stated in the Appendix, one may further take $C_{\mathrm{prep}}(B_{\mathrm{rhs}})=\widetilde{\mathcal O}(\mathrm{polylog}\,N_h)$\cite{giovannetti2008qram,hann2019qram}. The readout step further requires a lower bound on the weight of the terminal parameter block inside the trajectory state and an output-sparsity/tomography model for the final parameter state\cite{cramerEfficientQuantumState2010}. The formal theorem is stated in Methods, and the proof details are deferred to the Appendix.

\subsection*{Robust training as a coupled dynamical system}

The key idea behind robust training is to formulate the training as a min-max optimization problem. For a training set $\mathcal{D}$, clean input $x$, label $y$, perturbation $\delta$ and adversarial input $w=x+\delta$, the robust objective is
\begin{equation}
\label{eq:rt_objective}
\min_{u}\frac{1}{|\mathcal{D}|}\sum_{(x,y)\in\mathcal{D}}
\max_{\|\delta\|_p\le \epsilon}
\mathcal{L}\big(f_u(x+\delta),y\big).
\end{equation}
We study a fixed training window with deterministic sample order and step sizes, so the training workload is represented by a concrete sequence of attacker and learner updates. The attacker approximately solves the inner maximization problem over $\delta$ using projected gradient ascent (PGD)\cite{madry2018towards}, where each gradient step is followed by projection back onto the $\epsilon$ ball. The learner then updates the network parameter $u$ by standard gradient descent on the resulting adversarial example.
\begin{equation}
\label{eq:rt_step}
\begin{aligned}
\delta_{t+1}
&=
\Pi_{\mathbb{B}_{\infty}(0,\epsilon)}
\left(
\delta_t+\eta_{\delta}
\operatorname{sign}\big(\nabla_{w}\mathcal{L}(f_{u_t}(x+\delta_t),y)\big)
\right),\\
u_{t+1}
&=
u_t-\eta_{u}\nabla_{u}\mathcal{L}\big(f_{u_t}(x+\delta_{t+1}),y\big).
\end{aligned}
\end{equation}
More general deterministic inner-loop schedules, with $K_t$ attack substeps followed by $L_t$ learner substeps in each outer iteration, are treated as finite compositions of this basic attacker--learner block and are deferred to the Appendix.
The natural state variable is therefore the pair $(\delta_t,u_t)$. This coupled-state formulation treats the attacker and learner as a single discrete-time dynamical system and makes the fixed-window trajectory, rather than only the parameter iterate, the primary state variable in the analysis.

Equation~(\ref{eq:rt_step}) shows why robust training is harder to linearize than ordinary gradient descent. The perturbation cannot be treated as a static input to the learner: it must be recomputed for the current model and then fed back into the same outer step. The perturbation search and learner refinement are therefore interleaved inside one update, and the perturbation step contains sign and projection operations that do not expose a direct linear-algebra form. We therefore work on the enlarged state $v=(\delta,u)$ and reformulate one full outer iteration, in perturbation coordinates, as a local polynomial surrogate that preserves this dependence. Concretely, on a fixed local domain, we replace the sign, clipping and gradient blocks by polynomial approximations, which yields a time-indexed polynomial update map. Discrete-time Carleman lifting then applies exactly to this polynomial surrogate, while the discrepancy from the original robust-training dynamics is controlled separately through the local approximation and truncation analyses.

\subsection*{Folding nonsmooth updates into a polynomial surrogate}

Let $P_{\mathrm{s}}$ be a polynomial approximation to the sign map and $P_{\mathrm{c}}$ a polynomial approximation to coordinatewise clipping. Let $\mathcal{G}_{\delta,t}$ and $\mathcal{G}_{u,t}$ denote polynomial approximations to the perturbation and parameter gradients. In the same simplest $K=L=1$ case, one folded robust-training outer update can be written as
\begin{equation}
\label{eq:folded_step}
\begin{aligned}
\delta^{+}
&=
\epsilon P_{\mathrm{c}}\left(
\frac{\delta+\eta_{\delta,t}P_{\mathrm{s}}(\mathcal{G}_{\delta,t}(v)/\alpha_t)}{\epsilon}
\right),\\
u^{+}
&=
u-\eta_{u,t}\mathcal{G}_{u,t}(\delta^{+},u), v=(\delta,u),
\end{aligned}
\end{equation}
where $\alpha_t$ normalizes the perturbation-gradient approximation on the fixed local domain. This replacement turns one complete robust-training outer step into a polynomial map on the joint state. It therefore preserves the order of attack and learning inside the step while changing the mathematical form of the update into one that can be lifted. The more general deterministic schedule with $K_t$ attack substeps and $L_t$ learner substeps is obtained by finite composition of these basic blocks and remains polynomial, so the later Carleman lifting argument applies to the outer iteration after replacing one-step blocks by their finite composition. Accordingly, the Carleman lift is applied directly to the discrete update implemented during training, rather than to a continuous-time approximation.

Truncating the Carleman lift at order $N$ gives the lifted state dimension
$$
\Delta_N=\sum_{j=1}^{N} d^j, d=m+n,
$$
and stacking the truncated lifted vectors over $T$ steps yields the horizon linear system
\begin{equation}
\label{eq:horizon_system}
M Y = B_{\mathrm{rhs}},
\end{equation}
where $Y=(\hat y(0),\hat y(1),\ldots,\hat y(T))$ is the stacked truncated lifted trajectory and $M$ is block lower bidiagonal. Equation~(\ref{eq:horizon_system}) is the horizon system for the full training window. At this point the repeated attacker--learner computation is represented by a single sparse block-bidiagonal linear system over the full window. The corresponding normalized solution state is
\begin{equation}
\label{eq:trajectory_state}
\ket{Y}
=
\frac{1}{\|Y\|_2}
\sum_{t=0}^{T}\sum_{i=1}^{\Delta_N} Y_{t,i}\ket{t,i}.
\end{equation}
The right-hand side of the horizon system is likewise explicit:
\(
B_{\mathrm{rhs}}=(\hat y(0),c(0),\ldots,c(T-1))^\top
\).
The quantum procedure therefore starts from the normalized input state
\(
\ket{B_{\mathrm{rhs}}}=B_{\mathrm{rhs}}/\|B_{\mathrm{rhs}}\|_2
\),
and we keep its one-time preparation cost explicit as \(C_{\mathrm{prep}}(B_{\mathrm{rhs}})\). For structured inputs this cost can be favorable, with known routines for integrable distributions, black-box amplitude loading and sparse states\cite{grover2002stateprep,sanders2019blackboxStatePrep,ramacciotti2024sparseStatePrep}; under a qRAM-style loading model it can be specialized to polylogarithmic dependence on \(N_h\)\cite{giovannetti2008qram,hann2019qram}.
The quantum linear-system routine prepares this trajectory state. The solver naturally returns the solution of the whole windowed problem, so the final network-parameter state appears inside that global representation as its terminal degree-1 parameter block. Under a constant lower bound on the weight of that block, one ancilla can mark and isolate it with constant success probability; under the output-sparsity assumption, standard state-tomography tools then recover a classical description of the final state\cite{cramerEfficientQuantumState2010}.

\subsection*{Numerical simulation}

We use a reduced MNIST task on digits $0$--$4$ to validate the training process. Each sample is represented by $144$ input features, and the classifier has $60$ trainable parameters. We compare clean-only, robust-only and mixed training with $\mathcal{L}_{\mathrm{mix}}=(1-\alpha)\mathcal{L}_{\mathrm{clean}}+\alpha\mathcal{L}_{\mathrm{rob}}$ at $\alpha=0.50$, so clean-only and robust-only correspond to $\alpha=0$ and $\alpha=1$. We implemented the update matrix and fed the linear system into a HHL QSLP backend\cite{harrowQuantumAlgorithmSolving2009}, which uses the MindSpore Quantum framework~\cite{xu2024mindspore}. Figure~\ref{fig:qrt_numerics} tracks robust accuracy, clean accuracy and clean loss across the three training modes. In all three cases the dynamics show a short transient followed by stable plateaus over the full training window.

\begin{figure*}[t]
\centering
\includegraphics[width=0.84\textwidth]{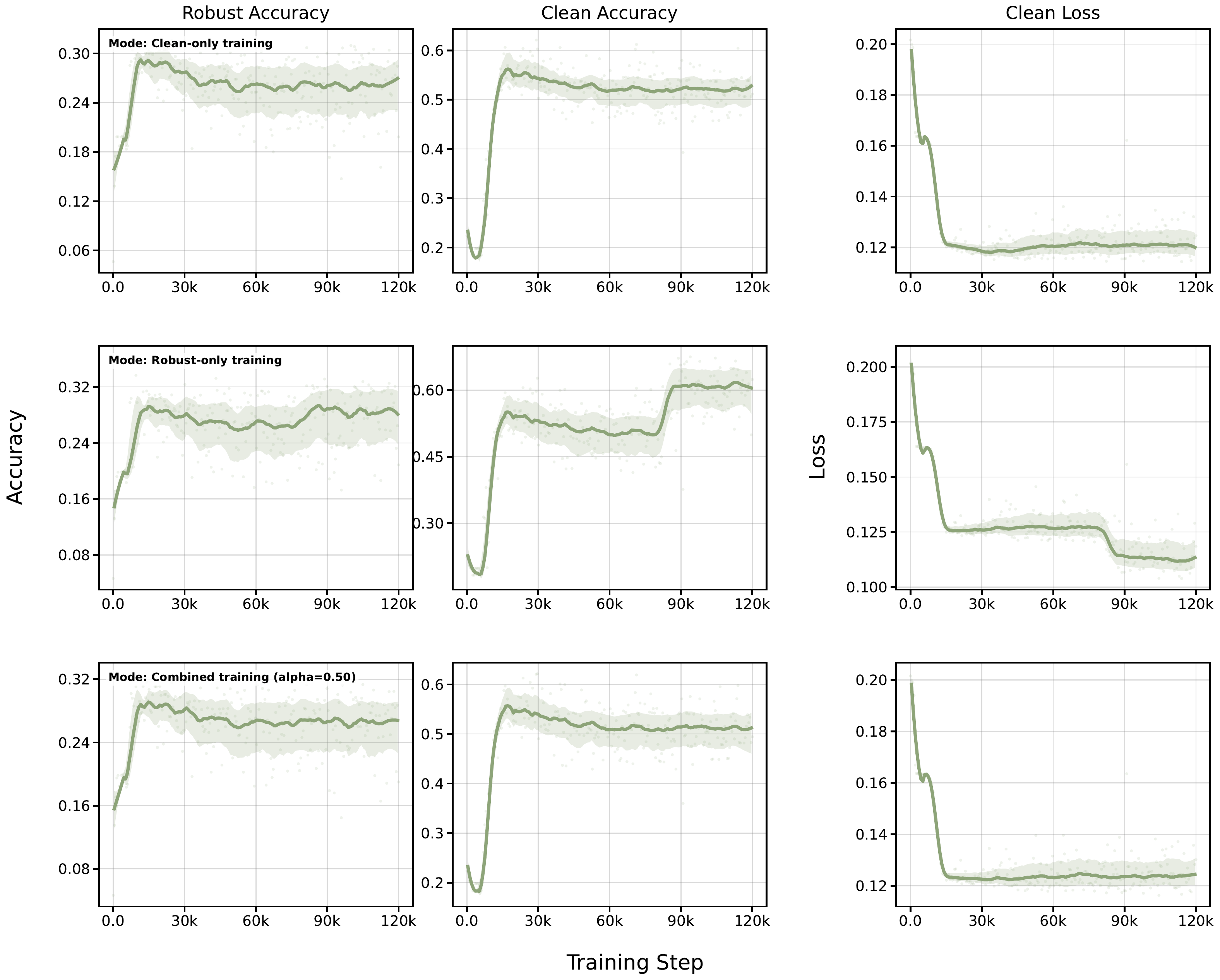}
\caption{\textbf{Numerical verification of the reduced robust-training algorithm on MNIST digits $0$--$4$.} Rows show clean-only, robust-only and mixed training ($\alpha=0.50$); columns show robust accuracy, clean accuracy and clean loss. Robust accuracy is evaluated with a $10$-step PGD attack. Each trajectory shows a short transient followed by stable plateaus over $1.2\times 10^5$ steps. The figure illustrates the behavior of the reduced update model used in the analysis.}
\label{fig:qrt_numerics}
\end{figure*}

\section*{Discussion}

We have shown that, under local stability and sparsity assumptions, a quantum procedure can return a classical description of the final network-parameter state produced by robust training. The core reduction rewrites the coupled attacker--learner computation over a finite training segment as a sparse linear system, so the repeated attacker loop is no longer treated as a sequence of separate optimization calls but as one structured linear-algebra object over the whole window.

Three technical directions remain. First, sharpen the transfer bound from the polynomial surrogate to the exact nonsmooth dynamics with additional practical structured assumptions, so that the approximation and truncation budgets remain controlled over longer training windows. Second, identify terminal quantities that are both security-relevant and measurement-efficient, so that useful outputs can be extracted without full classical recovery of the final parameter state. Third, understand which deterministic inner-loop schedules, model classes and threat models preserve the sparse and well-conditioned structure on which the present reduction relies. Extending the reduction to broader robust-training architectures and to transport-based threat models remains open\cite{sinha2018certifying,mohajerin2018wasserstein}.

\section*{Methods}

\subsection*{Exact robust-training step on the coupled state}
For a fixed sample order $\{(x_t,y_t)\}_{t=0}^{T-1}$ and fixed step sizes $\{(\eta_{\delta,t},\eta_{u,t})\}_{t=0}^{T-1}$, we treat one PGD-style robust-training iteration as a map on the coupled state
$$
v_t:=(\delta_t,u_t)\in\mathbb{R}^{d}, d=m+n.
$$
The perturbation gradient, the pre-projection perturbation, the projected adversarial perturbation and the parameter gradient are
$$
g_{\delta,t}(v_t):=\nabla_w\mathcal{L}\big(f_{u_t}(x_t+\delta_t),y_t\big),
$$
$$
\begin{aligned}
\bar\delta_{t+1}
&=\delta_t+\eta_{\delta,t}\operatorname{sign}\big(g_{\delta,t}(v_t)\big),\\
\delta_{t+1}
&=\Pi_{\mathbb{B}_{\infty}(0,\epsilon)}(\bar\delta_{t+1}),
\end{aligned}
$$
$$
g_{u,t}(\delta_{t+1},u_t)
:=\nabla_u\mathcal{L}\big(f_{u_t}(x_t+\delta_{t+1}),y_t\big),
$$
$$
u_{t+1}=u_t-\eta_{u,t}g_{u,t}(\delta_{t+1},u_t).
$$
Thus one exact iteration produces the sequence
$$
\begin{aligned}
v_t
&\longmapsto \bar\delta_{t+1}\longmapsto \delta_{t+1},\\
&\longmapsto u_{t+1}\longmapsto v_{t+1}:=(\delta_{t+1},u_{t+1}).
\end{aligned}
$$
We later approximate, lift, and stack this coupled trajectory over the full training window.

\subsection*{Polynomial surrogate}

The key obstacle to direct Carleman lifting is that the sign map and the projection are nonsmooth and non-polynomial. We therefore replace sign by a polynomial $P_{\mathrm{s}}$, coordinatewise clipping by a polynomial $P_{\mathrm{c}}$, and the two gradient blocks by polynomial approximations $\mathcal{G}_{\delta,t}$ and $\mathcal{G}_{u,t}$ that uniformly approximate the exact gradients on a fixed local domain. The resulting perturbation and parameter substeps are
$$
\bar\delta^{\mathrm{poly}}_{t+1}
=\delta_t+\eta_{\delta,t}P_{\mathrm{s}}\left(\mathcal{G}_{\delta,t}(v_t)/\alpha_t\right),
$$
$$
\delta^{\mathrm{poly}}_{t+1}
=\epsilon P_{\mathrm{c}}\left(\bar\delta^{\mathrm{poly}}_{t+1}/\epsilon\right),
$$
$$
u^{\mathrm{poly}}_{t+1}
=u_t-\eta_{u,t}\mathcal{G}_{u,t}(\delta^{\mathrm{poly}}_{t+1},u_t),
$$
where $\alpha_t$ rescales the perturbation-gradient surrogate on that region. The full one-step update is therefore a polynomial map
$$
\begin{aligned}
v^{\mathrm{poly}}_{t+1}=\Psi_t(v_t)
&=\sum_{\ell=0}^{D}Q_{\ell}^{(t)}v_t^{\otimes \ell},\\
Q_{\ell}^{(t)}
&\in\mathbb{R}^{d\times d^{\ell}}.
\end{aligned}
$$
The effective degree $D$ is determined by the base polynomial degree $q$ together with the sign and clipping degrees $K_s$ and $K_c$. The approximation degrees and sup-norm error budgets are specified in the Appendix. The main theorem applies to this polynomial surrogate, not to the original nonsmooth dynamics.

\subsection*{Truncated Carleman lift}

For the state sequence generated by the polynomial surrogate, define the lifted monomials
$$
\begin{aligned}
y_0(t)&:=1,\\
y_j(t)&:=v_t^{\otimes j}, j\ge 1,\\
y^{(N)}(t)&:=\bigl(y_1(t),y_2(t),\ldots,y_N(t)\bigr)\in\mathbb{R}^{\Delta_N}.
\end{aligned}
$$
with truncated lifted dimension
$$
\Delta_N=\sum_{j=1}^{N}d^j.
$$
For each lifted level $j$ and input degree $s$, the discrete-time Carleman block is
$$
\begin{aligned}
K_{j,s}(t)
&:=
\sum_{\substack{\alpha\in\{0,1,\ldots,D\}^{j}\\|\alpha|=s}}
Q_{\alpha_1}^{(t)}\otimes\cdots\otimes Q_{\alpha_j}^{(t)},\\
|\alpha|&:=\alpha_1+\cdots+\alpha_j.
\end{aligned}
$$
These blocks yield the exact lifted recurrence
$$
y_j(t+1)=\sum_{s=0}^{jD}K_{j,s}(t)y_s(t), j\ge 1.
$$
Truncating at order $N$, we keep only the blocks with $1\le j,s\le N$ and define
$$
c_j(t):=K_{j,0}(t), c(t):=\bigl(c_1(t),\ldots,c_N(t)\bigr),
$$
$$
B_{j,s}(t):=K_{j,s}(t), 1\le j,s\le N,
$$
which gives the finite lifted recurrence
$$
\begin{aligned}
\hat y(t+1)&=B(t)\hat y(t)+c(t),\\
\hat y(0)&=y^{(N)}(0).
\end{aligned}
$$

\subsection*{Horizon linear system}

The truncated lifted trajectory is stacked over the full training window as
$$
Y:=\bigl(\hat y(0),\hat y(1),\ldots,\hat y(T)\bigr)^{\top},
$$
$$
B_{\mathrm{rhs}}:=\bigl(\hat y(0),c(0),\ldots,c(T-1)\bigr)^{\top}.
$$
The one-step recurrence is equivalent to the block lower-bidiagonal system
$$
MY=B_{\mathrm{rhs}},
$$
$$
M=
\begin{pmatrix}
I & 0 & \cdots & 0\\
-B(0) & I & \ddots & \vdots\\
0 & -B(1) & \ddots & 0\\
\vdots & \ddots & \ddots & I\\
0 & \cdots & -B(T-1) & I
\end{pmatrix}.
$$
This block lower-bidiagonal system is the horizon matrix used in the theorem. When the truncated lifted step matrices satisfy $\sup_t\|B(t)\|_2\le \rho<1$, the horizon matrix is invertible and obeys the condition-number bound
$$
\kappa_2(M)\le \frac{1+\rho}{1-\rho}.
$$
The target of the quantum linear-system routine is the normalized stacked state of equation~(\ref{eq:trajectory_state}), that is, the quantum state proportional to the stacked trajectory vector in Theorem~\ref{thm:main}.

\subsection*{Formal theorem statement}

\begin{theorem}[Quantum recovery of the final parameter state over a fixed training window]
\label{thm:main}
Fix a final-output accuracy budget $\varepsilon_{\mathrm{out}}>0$. For the horizon system $M Y = B_{\mathrm{rhs}}$ associated with the truncated lifted trajectory of the polynomial surrogate dynamics, choose the cutoff $N$ and internal solver tolerance $\varepsilon_{\mathrm{LS}}$ according to the local assumptions described below. Assume further that the normalized right-hand-side state $\ket{B_{\mathrm{rhs}}}$ can be prepared once with cost $C_{\mathrm{prep}}(B_{\mathrm{rhs}})$, that the horizon matrix is available through the stated sparse-access model with query cost $C_{\mathrm{SA}}$, that the terminal degree-1 parameter block of $\ket{Y_N^{\mathrm{ex}}}$ carries probability weight at least $p_*>0$, and that the corresponding final parameter state satisfies the sparse-output tomography model described below. Then a quantum procedure produces, with success probability at least $2/3$, a classical description of the final network-parameter state with total error at most $\varepsilon_{\mathrm{out}}$. Its dominant quantum subroutine prepares a state $\ket{\widetilde Y}$ such that
$$
\bigl\|\ket{\widetilde Y}-\ket{Y_N^{\mathrm{ex}}}\bigr\|_2
\le \varepsilon_{\mathrm{out}},
$$
with query complexity
$$
\widetilde{\mathcal O}\left(
 s_M\kappa_2(M)
 \mathrm{polylog}\frac{N_h}{\varepsilon_{\mathrm{LS}}}
\right),
$$
and gate complexity
$$
\widetilde{\mathcal O}\left(
C_{\mathrm{prep}}(B_{\mathrm{rhs}})
+
s_M\kappa_2(M)C_{\mathrm{SA}}
\mathrm{polylog}\frac{N_h}{\varepsilon_{\mathrm{LS}}}
\right),
$$
where $N_h=(T+1)\Delta_N$ and $\Delta_N=\sum_{j=1}^{N}d^j$.
\end{theorem}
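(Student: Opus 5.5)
The proof composes four ingredients: a well-posedness and conditioning bound for the horizon matrix, a black-box quantum linear-system solver, an error budget that splits $\varepsilon_{\mathrm{out}}$ between truncation and solver errors, and amplitude amplification followed by sparse tomography for readout.

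\textbf{Step 1: conditioning.} I first establish invertibility and the bound $\kappa_2(M)\le\min\{(1+\rho)/(1-\rho),\,2(T+1)\}$ under $\sup_t\|B(t)\|_2\le\rho<1$. Write $M=I-L$, where $L$ is the block subdiagonal shift carrying the $B(t)$. Then $\|L\|_2\le\rho$, so $\|M\|_2\le 1+\rho$. The Neumann series terminates because $L^{T+1}=0$. It gives
\[
\|M^{-1}\|_2\le\sum_{k=0}^{T}\rho^k\le\min\Bigl\{\tfrac{1}{1-\rho},\,T+1\Bigr\},
\]
using $\rho<1$ for the second bound. Multiplying the two norm bounds yields both forms of the $\kappa$ estimate.

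\textbf{Step 2: sparsity and the solver.} The row sparsity $s_M$ of $M$ is at most $1+\max_t s(B(t))$. The truncated Carleman blocks $K_{j,s}$ are Kronecker products of the $Q_\ell^{(t)}$, so their sparsity is controlled by that of the $Q_\ell$; I take this as part of the sparse-access hypothesis with per-query cost $C_{\mathrm{SA}}$. I would then invoke an optimal-scaling QLSA, for example the Childs--Kothari--Somma Fourier/Chebyshev LCU or a QSVT-based inversion. Given a block encoding of $M/(s_M\|M\|_{\max})$ and one copy of $\ket{B_{\mathrm{rhs}}}$, it prepares an $\varepsilon_{\mathrm{LS}}$-approximation of $\ket{Y_N}$ using $\widetilde{\mathcal O}(s_M\kappa_2(M)\,\mathrm{polylog}(N_h/\varepsilon_{\mathrm{LS}}))$ queries. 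Here $\ket{Y_N}$ is the normalized solution of the truncated system. The $\log N_h$ factor comes from the register size and the index arithmetic inside the sparse oracle. Adding $C_{\mathrm{prep}}$ once and multiplying queries by $C_{\mathrm{SA}}$ gives the stated gate count.

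\textbf{Step 3: error budget.} I split $\varepsilon_{\mathrm{out}}$ as
\[
\|\ket{\widetilde Y}-\ket{Y_N^{\mathrm{ex}}}\|\le\|\ket{\widetilde Y}-\ket{Y_N}\|+\|\ket{Y_N}-\ket{Y_N^{\mathrm{ex}}}\|,
\]
where $Y_N^{\mathrm{ex}}$ stacks the exact monomials $y^{(N)}(t)$ of the surrogate trajectory. The first term is at most $\varepsilon_{\mathrm{LS}}$. For the second, the exact lifted vector satisfies the truncated recurrence plus a residual $r(t)$ consisting of the dropped blocks $K_{j,s}y_s$ with $s>N$. Hence $M(Y_N^{\mathrm{ex}}-Y_N)=R$, and
\[
\|Y_N^{\mathrm{ex}}-Y_N\|\le\|M^{-1}\|\,\|R\|\le\frac{1}{1-\rho}\,\|R\|.
\]
For normalized states, $\bigl\|\,x/\|x\|-y/\|y\|\,\bigr\|\le 2\|x-y\|/\|x\|$. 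On the local domain where $\|v_t\|\le r<1$ after rescaling, $\|R\|$ decays geometrically in $N$, roughly like $r^{N+1}$ times the Carleman block norms. So choosing $N=O(\log(1/\varepsilon_{\mathrm{out}}))$ and $\varepsilon_{\mathrm{LS}}=\varepsilon_{\mathrm{out}}/2$ closes the budget.

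\textbf{Main obstacle.} This truncation estimate is the step I expect to be hardest. It requires a uniform local bound showing that the surrogate trajectory stays in the domain and that $\sum_{s>N}\|K_{j,s}\|r^s$ is small. I would first try the standard discrete Carleman argument, under the assumption that the lifted step is contractive.

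\textbf{Step 4: readout.} A projector $\ket{T}\!\bra{T}\otimes\Pi_{\deg 1,u}$ marks the terminal parameter block on an ancilla. Since the block weight is at least $p_*-O(\varepsilon_{\mathrm{out}})$, amplitude amplification isolates it using $O(1/\sqrt{p_*})$ rounds, which is constant overhead. Sparse-state tomography (Cramer et al.) then returns a classical description. A median-of-repetitions argument boosts the overall success probability to $2/3$. The tomography and projection errors are absorbed into $\varepsilon_{\mathrm{out}}$ by rescaling the constants above.
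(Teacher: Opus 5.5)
Your overall route matches the paper's proof: $M=I-L$ with a terminating Neumann series for $\kappa_2(M)$, a black-box QLSA on a rescaled horizon matrix, and a triangle inequality separating solver error from truncation error. The normalization inequality and the projection--amplification--tomography readout are also the same. Your truncation estimate via $M(Y_N^{\mathrm{ex}}-\hat Y)=R$ and $\|M^{-1}\|_2\|R\|_2\le\sqrt{T}\,\Gamma_N/(1-\rho)$ is a tidier version of the paper's pointwise contractive-forcing sum, which gives $\sqrt{T+1}\,\Gamma_N/(1-\rho)$. The argument does not close as written in Step~3, for two reasons.

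\emph{First}, $\bigl\|a/\|a\|_2-b/\|b\|_2\bigr\|_2\le 2\|a-b\|_2/\|a\|_2$ divides by $\|Y_N^{\mathrm{ex}}\|_2$, which you never bound below. So a small unnormalized error does not yet give a small state error. The paper supplies this as hypothesis (H4), $\|y^{(N)}(0)\|_2\ge\beta_0$, which bounds $\|Y_N^{\mathrm{ex}}\|_2$ because its first block is the initial lift (and $\|y^{(N)}(0)\|_2\ge\|u_0\|_2$). The truncation share of the budget must then be $\sqrt{T+1}\,\Gamma_N/(1-\rho)\le\beta_0\varepsilon_{\mathrm{out}}/4$. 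Your $N=O(\log(1/\varepsilon_{\mathrm{out}}))$ silently drops the $\log(1/\beta_0)$, $\log\sqrt{T+1}$ and $\log(1/(1-\rho))$ contributions.

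\emph{Second}, contractivity of the lifted step does not make $R$ small; in your estimate it already does its job inside $\|M^{-1}\|_2$. $R$ is built from the discarded blocks $K_{j,s}$ with $s>N$, which are not entries of $B(t)$ and are not directly controlled by $\sup_t\|B(t)\|_2\le\rho$. Their norms are sums over multi-indices $\alpha\in\{0,\dots,D\}^j$ with $j$ up to $N$, so ``$r^{N+1}$ times the block norms'' need not decay in $N$. The paper does not derive this decay. It makes the cutoff a design condition, which is what ``choose $N$ according to the local assumptions'' means. It also gives a sufficient hypothesis: $\sup_t\sum_{\ell=0}^{D}\|Q_\ell^{(t)}\|_2(\lambda\bar v)^\ell\le\chi<1$ for some $\lambda>1$. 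Under it each row-$j$ tail is at most $\lambda^{-(N+1)}\chi^j$, so $\Gamma_N\le\lambda^{-(N+1)}\chi/\sqrt{1-\chi^2}$, and an explicit logarithmic cutoff follows. Your ``main obstacle'' should therefore be handled by adopting such a tail hypothesis, not by trying to extract it from contractivity.

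Two smaller points. In Step~4, renormalizing the projected terminal state multiplies the error by $2/\sqrt{p_*}$. The paper therefore imposes $\varepsilon_{\mathrm{LS}}+2\varepsilon^{\mathrm{hor}}_{\mathrm{tr}}/\beta_0\le\sqrt{p_*}\,\varepsilon_{\mathrm{out}}/4$, which also keeps the marked amplitude at least $\sqrt{p_*}/2$, together with $\varepsilon_{\mathrm{ro}}\le\varepsilon_{\mathrm{out}}/2$. With your Step~3 choice $\varepsilon_{\mathrm{LS}}=\varepsilon_{\mathrm{out}}/2$ the classical output error would be of order $\varepsilon_{\mathrm{out}}/\sqrt{p_*}$. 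State the rescaled split explicitly; it costs only a constant inside the polylog.

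Also, your claim that CKS or QSVT inversion consumes a single copy of $\ket{B_{\mathrm{rhs}}}$ is not accurate. Both amplify using the preparation unitary $\widetilde{\mathcal O}(\kappa)$ times. The additive $C_{\mathrm{prep}}(B_{\mathrm{rhs}})$ in the gate count rests on the paper's own input-state-model QLSA theorem, which is stated without proof. Cite that as the black box you are adopting rather than attributing the property to those algorithms.
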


Here $d=m+n$ is the dimension of the coupled perturbation--parameter state, $s_M$ is the sparsity of the horizon matrix $M$, and $\kappa_2(M)$ is its condition number. The parameter $\varepsilon_{\mathrm{LS}}$ is the internal solver tolerance, while $C_{\mathrm{prep}}(B_{\mathrm{rhs}})$ and $C_{\mathrm{SA}}$ denote the costs of preparing the right-hand side and querying the sparse-access matrix oracle, respectively. The additional readout assumptions are a lower bound $p_*$ on the weight of the terminal degree-1 parameter block inside the normalized trajectory state and a sparse-output tomography model for the final parameter state.

\subsection*{Approximation layers and oracle model}

The analysis separates four approximation layers. The first is the modeling error between the exact nonsmooth robust-training outer update and the polynomial surrogate. The second is the Carleman-truncation error from omitting lifted levels above $N$. The third is the solver error from preparing the quantum state associated with the horizon linear system. The fourth is the extraction-and-readout error incurred when the terminal parameter block is isolated and converted into a classical description. The theorem controls the total output error by choosing the polynomial degree parameters, the cutoff $N$, the internal solver tolerance $\varepsilon_{\mathrm{LS}}$ and the readout accuracy budget so that the final classical error remains below $\varepsilon_{\mathrm{out}}$.

The theorem further assumes oracle access to the rescaled system matrix together with a preparation routine for the right-hand side, a constant lower bound on the terminal parameter-block weight, and a sparse-output tomography model for the final state\cite{cramerEfficientQuantumState2010}. This distinction matters. Once the truncated blocks are explicit, a classical baseline can still generate the lifted iterates by forward recursion, whereas the quantum route here first prepares the intermediate trajectory state, then isolates the terminal parameter block with one ancilla, and finally returns a classical description of the final network parameters through readout. The detailed degree--error bounds for the sign approximation,\cite{eremenko2007signApprox} together with the underlying Jackson/Chebyshev approximation background,\cite{devore1968jackson,trefethen2019atap} the cutoff design conditions for the Carleman truncation,\cite{forets2017carlemanError,amini2022finiteSectionCarleman} and the terminal-readout analysis are deferred to the Appendix.

\section*{Acknowledgements}

Q.Z. acknowledges funding from Quantum Science and Technology-National Science and Technology Major Project 2024ZD0301900, National Natural Science Foundation of China (NSFC) via Project No. 12347104 and No. 12305030, Guangdong Basic and Applied Basic Research Foundation via Project 2023A1515012185, Hong Kong Research Grant Council (RGC) via No. 27300823, N\_HKU718/23, and R6010-23, Guangdong Provincial Quantum Science Strategic Initiative No. GDZX2303007.

This work was sponsored by CPS-Yangtze Delta Region Industrial Innovation Center of Quantum and Information Technology-MindSpore Quantum Open Fund.

\section*{Data availability}

The numerical convergence study uses a reduced MNIST classification task on digits $0$--$4$ derived from the publicly available MNIST dataset.

\section*{Code availability}

The source code is publicly available at \url{https://github.com/RamPrime/Quantum-Robust-Training-Mindquantum}.

\clearpage
\onecolumn
\appendix
\numberwithin{equation}{section}
\counterwithin{theorem}{section}
\counterwithin{figure}{section}
\counterwithin{algocf}{section}
\renewcommand{\thefigure}{\thesection.\arabic{figure}}
\renewcommand{\thealgocf}{\thesection.\arabic{algocf}}
\setcounter{theorem}{0}
\setcounter{figure}{0}
\setcounter{algocf}{0}

\section{Problem setup and notation}

This appendix contains the technical statements and proofs behind the main-text claim that a fixed training window of projected-gradient robust training can be rewritten as a sparse linear-system problem under the stated quantum access assumptions. We fix the polynomial update model and separate three approximation layers: the polynomial model, the Carleman truncation, and the QLSA error.

We study robust training as the min--max problem
\begin{equation}
    \label{eq:RT_Optimization}
    \min_{u}\;\frac{1}{|\Data|}\sum_{(x,y)\in\Data}\;\max_{\|\delta\|_p\le \epsilon}\;
    \Loss\big(f_u(x+\delta),y\big),
\end{equation}
where $x\in\mathbb{R}^{m}$ is the clean input, $\delta\in\mathbb{R}^{m}$ is the adversarial perturbation, $w=x+\delta$ is the adversarial input, $u\in\mathbb{R}^{n}$ is the trainable parameter vector, and
\[
v:=\begin{pmatrix}\delta\\ u\end{pmatrix}\in\mathbb{R}^{d},
\qquad d:=m+n,
\]
is the coupled state. Throughout this appendix, $q$ denotes the base polynomial degree of the gradient approximation, $K_s$ and $K_c$ denote the degrees of the sign and clipping polynomials, $D$ is the effective degree of the polynomial update, $N$ is the Carleman cutoff, $\Delta_N=\sum_{j=1}^{N}d^j$ is the lifted state dimension and $N_h=(T+1)\Delta_N$ is the horizon dimension.

\subsection{Polynomial update model of one robust-training step}
\label{sec:training_enlarged_params}

Let $x\in\mathbb{R}^m$ be the clean input, $w\in\mathbb{R}^m$ the adversarial input, and define the perturbation
$\delta:=w-x$ so that the constraint is $\|\delta\|_\infty\le \epsilon$.
Let $u\in\mathbb{R}^n$ denote trainable network parameters, and define the coupled state
\[
v:=\begin{pmatrix}\delta\\ u\end{pmatrix}\in\mathbb{R}^{d},\qquad d:=m+n.
\]
Using perturbation coordinates keeps the projection centered at the origin:
\begin{equation}
\label{eq:projection_shift_identity_clean}
\Pi_{\mathbb B_\infty(x,\epsilon)}(x+\zeta)
=
x+\Pi_{\mathbb B_\infty(0,\epsilon)}(\zeta)
\qquad\text{for all }\zeta\in\mathbb R^m.
\end{equation}
Hence the projector becomes zero-centered in $\delta$-coordinates. Throughout the core theorem we therefore linearize the robust-training dynamics in $v=(\delta,u)$. To simplify notation, we first state the construction for one attack step followed by one learner step; the deterministic \((K_t,L_t)\) case is given later.

We allow a deterministic time-indexed family of polynomial approximations (of base degree $q$) for the relevant gradients as functions of the coupled state $v$:
\begin{align}
g_{\delta,t}(v) &\approx \mathcal{G}_{\delta,t}(v):=\sum_{\ell=0}^{q} G_{\delta,\ell}^{(t)}\,v^{\otimes \ell}\in\mathbb{R}^{m},
\qquad G_{\delta,\ell}^{(t)}\in\mathbb{R}^{m\times d^\ell}, \label{eq:Gdelta_poly}\\
g_{u,t}(v) &\approx \mathcal{G}_{u,t}(v):=\sum_{\ell=0}^{q} G_{u,\ell}^{(t)}\,v^{\otimes \ell}\in\mathbb{R}^{n},
\qquad G_{u,\ell}^{(t)}\in\mathbb{R}^{n\times d^\ell}. \label{eq:Gu_poly}
\end{align}
The time index $t$ may encode a deterministic data schedule, a step-size schedule, or the time-homogeneous case $G_{\bullet,\ell}^{(t)}\equiv G_{\bullet,\ell}$.

For the polynomial $\ell_\infty$-PGD perturbation step, let $P_{\mathrm{s}}$ be an odd polynomial approximating $\operatorname{sign}$ after normalization and
let $P_{\mathrm{c}}$ be an odd polynomial approximating $\mathrm{sat}(x)=\operatorname{clip}(x,-1,1)$ on the relevant domain.
Define the polynomial perturbation update map
\begin{equation}
\label{eq:delta_folded_map}
\delta^{+}
:=
F_t(v)
:=
\epsilon\;P_{\mathrm{c}}\!\Big(\frac{\delta+\eta_{\delta,t}\,P_{\mathrm{s}}\!\big(\mathcal{G}_{\delta,t}(v)/\alpha_t\big)}{\epsilon}\Big),
\qquad (\text{applied coordinate-wise}),
\end{equation}
where $\alpha_t>0$ bounds $\|\mathcal{G}_{\delta,t}(v)\|_\infty$ on the local domain.
Because the state variable is $\delta$, not $w=x+\delta$, the clipping in \eqref{eq:delta_folded_map} acts around zero in perturbation coordinates.

After the perturbation step, we update the parameter block, matching the standard alternating robust-training convention:
\begin{equation}
\label{eq:u_folded_map}
u^{+}
:=
u-\eta_{u,t}\,\mathcal{G}_{u,t}\!\big(F_t(v),u\big).
\end{equation}
Here \(\mathcal G_{u,t}(F_t(v),u)\) is shorthand for evaluating the coupled-state polynomial \(\mathcal G_{u,t}\) at the vector \(H_t(v)=(F_t(v),u)\).
Equivalently, one complete adversarial-training iteration in the polynomial model is the full-step map
\begin{equation}
\label{eq:full_step_update_clean}
v(t+1)=\Psi_t(v(t)),
\qquad
\Psi_t(v)
:=
\begin{pmatrix}
F_t(v)\\
u-\eta_{u,t}\,\mathcal{G}_{u,t}\!\big(F_t(v),u\big)
\end{pmatrix}.
\end{equation}
Since $F_t$ is polynomial and $\mathcal{G}_{u,t}$ is polynomial in its state-vector argument, each $\Psi_t$ is polynomial on $\mathbb R^d$. We therefore write
\begin{equation}
\label{eq:full_step_vector_fields_clean}
\Psi_t(v)=\sum_{\ell=0}^{D}Q_{\ell}^{(t)}\,v^{\otimes \ell},
\qquad
Q_{\ell}^{(t)}\in \mathbb R^{d\times d^\ell}.
\end{equation}
The effective full-step degree $D$ is quantified in Proposition~\ref{prop:folded_effective_degree_clean} below.
The choice between perturbation coordinates $\delta$ and adversarial-input coordinates $w=x+\delta$ is independent of the choice between evaluating the outer update at the current perturbation $\delta$ or the updated perturbation $\delta^{+}$. Using $\delta$ as the state variable is what moves the projection center to zero through \eqref{eq:projection_shift_identity_clean}. The present theorem uses the alternating convention
\(
u^{+}=u-\eta_{u,t}\,\mathcal{G}_{u,t}(\delta^{+},u)\)
because it matches the usual adversarial-training step exactly at the polynomial-model level. A simultaneous variant using the current perturbation,
\(u^{+}=u-\eta_{u,t}\,\mathcal{G}_{u,t}(\delta,u)\),
is a simpler special case with a smaller effective degree and can be treated by the same exact discrete-time lifting argument.

\begin{algorithm}[t]
\caption{Polynomial update model via an exact discrete-time lift of one adversarial-training step}
\label{alg:folded_qrt}
\KwIn{Clean input $x\in\mathbb{R}^m$, label $y$, init $u_0\in\mathbb{R}^n$, radius $\epsilon$, steps $T$,
step sizes $(\eta_{\delta,t},\eta_{u,t})_{t=0}^{T-1}$, Carleman cutoff $N$,
sign and clip degrees $(K_s,K_c)$ chosen later by Item~(4) of \Cref{prop:sign_clip_design_clean},
polynomials $P_{\mathrm{s}}$ (sign), $P_{\mathrm{c}}$ (sat), normalization $(\alpha_t)_{t=0}^{T-1}$,
and a deterministic data/step-size schedule encoded in the full-step family $\{\Psi_t\}_{t=0}^{T-1}$.}
\KwOut{A quantum state proportional to the stacked trajectory vector $Y$, and under the terminal-weight and sparse-output readout assumptions stated later, a classical description of the final parameter state.}

\BlankLine
Set $\delta(0)\gets 0$ and $v(0)\gets (\delta(0),u_0)\in\mathbb{R}^{m+n}$.\;
Construct the polynomial gradient-approximation tensors $\{G_{\delta,\ell}^{(t)}\}_{\ell=0}^{q}$ and $\{G_{u,\ell}^{(t)}\}_{\ell=0}^{q}$ in \eqref{eq:Gdelta_poly}--\eqref{eq:Gu_poly} for each $t=0,1,\dots,T-1$.\;

For each training step $t$, form the polynomial update map $\Psi_t$ from \eqref{eq:delta_folded_map}--\eqref{eq:full_step_update_clean} and expand it into coefficient tensors $\{Q^{(t)}_\ell\}_{\ell=0}^{D}$ as in \eqref{eq:full_step_vector_fields_clean}.\;

For each training step $t$, build the truncated lifted step matrix $B(t)$ and constant block $c(t)$ from Item~(1) of \Cref{prop:carleman_horizon_cond_clean}.\;

Form the horizon matrix $M$ (block-lower-bidiagonal) for the full training window $t=0,1,\dots,T-1$.\;
Prepare the normalized right-hand-side state $|B_{\mathrm{rhs}}\rangle$ at gate cost $C_{\mathrm{prep}}(B_{\mathrm{rhs}})$.\;

Run one linear-system routine for the rescaled horizon system to accuracy $\varepsilon_{\mathrm{LS}}$, using the solver model specified later in hypothesis (H3), and produce a state proportional to the stacked trajectory vector $Y$.\;

If only a quantum encoding of the training trajectory is needed, stop here.\;
Mark the terminal degree-1 parameter block (the parameter coordinates inside the time-$T$ level-1 block) with one ancilla, postselect or amplitude-amplify that marked sector, and run the sparse-output tomography/readout routine assumed later in hypothesis (H6).\;
\end{algorithm}

\subsection{Main finite-step theorem}
\label{subsec:one_shot_folded_clean_spine}

\begin{figure}[t]
    \centering
    \includegraphics[width=\textwidth]{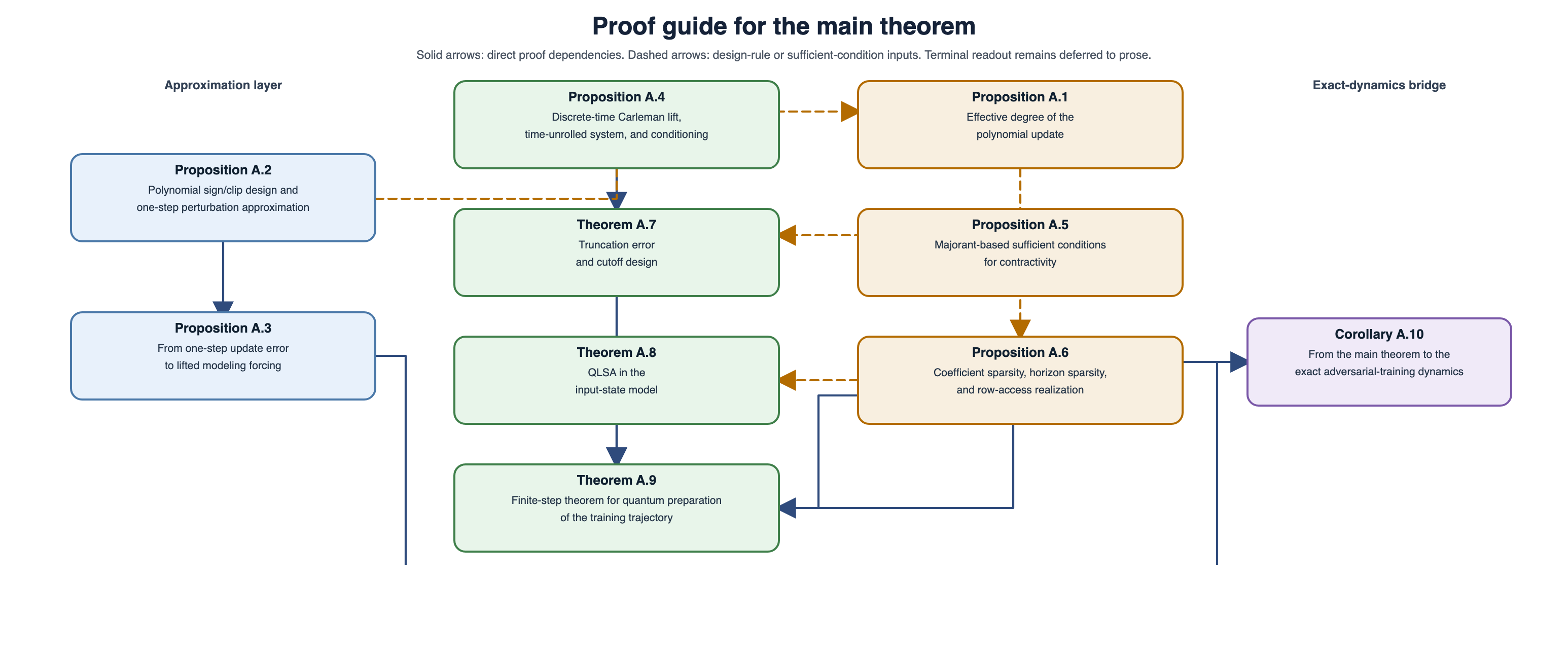}
    \caption{Logical dependencies among the main propositions and theorems.}
    \label{fig:proof_relation_overview}
\end{figure}

This subsection states the main theorem for the polynomial update model. Once the horizon system is explicit, a classical simulation can still generate its lifted trajectory by forward recursion; the theorem establishes conditions under which the same stacked trajectory can be prepared as a quantum state and its terminal degree-1 parameter block can be read out efficiently.
For reference, the exact \(\ell_\infty\)-robust one-step update is
\begin{equation}
\label{eq:clean_centered_exact_delta_step_module}
\delta^{+}
=
\operatorname{clip}\!
\Big(
\delta+\eta_{\delta,t}\,\operatorname{sign}(g_{\delta,t}(v)),
-\epsilon,
+\epsilon
\Big),
\qquad
u^{+}=u-\eta_{u,t}\,g_{u,t}(\delta^{+},u),
\end{equation}
where $x\in\mathbb{R}^{m}$ is the clean input and $v=(\delta,u)$ is the coupled state.
We set $\operatorname{sign}(0):=0$. The dead-zone conditions introduced below keep the relevant normalized gradient coordinates away from zero, so the approximation budgets are determined entirely by those nonzero coordinates.
In the polynomial update model, $g_{\delta,t}$ and $g_{u,t}$ are replaced by polynomial approximations, $\operatorname{sign}$ is replaced by an odd polynomial $P_{\mathrm s}$, and the saturation map $\operatorname{clip}(\cdot,-1,1)$ is replaced by an odd polynomial $P_{\mathrm c}$. The main theorem in this subsection is therefore a theorem about the full-step polynomial update model defined by \eqref{eq:delta_folded_map}--\eqref{eq:full_step_vector_fields_clean}. The transfer to the exact nonsmooth dynamics is stated later in Corollary~\ref{cor:exact_dynamics_bridge_clean}.

Each tensor-power space \(\mathbb{R}^{d^{j}}\) is equipped with the Euclidean norm.
The lifted space
\[
\mathcal H_N:=\bigoplus_{j=1}^{N}\mathbb{R}^{d^j}
\]
uses the direct-sum Euclidean norm
\(\|y\|_2^2=\sum_{j=1}^{N}\|y_j\|_2^2\).
The horizon space \(\mathbb{R}^{(T+1)\Delta_N}\), where
\[
\Delta_N:=\sum_{j=1}^{N} d^j,
\qquad
N_h:=(T+1)\Delta_N,
\]
likewise uses the Euclidean norm
\(\|Y\|_2^2=\sum_{t=0}^{T}\|y(t)\|_2^2\).

\begin{proposition}[Effective degree of the polynomial update]
\label{prop:folded_effective_degree_clean}
Let \(q\ge 1\), and let \(P_{\mathrm s}\) and \(P_{\mathrm c}\) be odd polynomials of degrees
\(K_s\) and \(K_c\), respectively.
Then the polynomial perturbation map \(F_t\) in \eqref{eq:delta_folded_map} has degree at most
\begin{equation}
\label{eq:D_delta_clean_module}
D_{\delta}\le qK_sK_c,
\end{equation}
and the full one-step map in \eqref{eq:full_step_vector_fields_clean} admits an expansion
\begin{equation}
\label{eq:full_step_field_expansions_clean}
\Psi_t(v)=\sum_{\ell=0}^{D}Q_{\ell}^{(t)}v^{\otimes \ell}
\end{equation}
with
\begin{equation}
\label{eq:D_full_clean_module}
D\le qD_{\delta}\le q^2K_sK_c.
\end{equation}
When the outer update is evaluated at the current perturbation \(\delta\), the same argument gives the smaller bound \(D\le \max\{D_{\delta},q\}\).
\end{proposition}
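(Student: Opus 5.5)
The argument is a degree count through the composition structure of \eqref{eq:delta_folded_map}--\eqref{eq:full_step_update_clean}. It rests on two elementary facts: the degree of a composition of polynomial maps is at most the product of the degrees, and sums and coordinatewise scalings do not raise the degree. Throughout, "degree" of a vector-valued polynomial map means the maximum total degree over its coordinates, in the variables of $v\in\mathbb{R}^d$.

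\textbf{Perturbation map $F_t$.}
\begin{itemize}
\item By \eqref{eq:Gdelta_poly}, each coordinate of $\mathcal{G}_{\delta,t}(v)$ is a polynomial of degree at most $q$ in $v$, and dividing by $\alpha_t$ preserves this.
\item Applying the univariate polynomial $P_{\mathrm s}$ of degree $K_s$ coordinatewise gives coordinates of degree at most $qK_s$.
\item Multiplying by $\eta_{\delta,t}$ and adding $\delta$ (degree $1\le qK_s$) leaves the bound at $qK_s$. Dividing by $\epsilon$ changes nothing.
\item Applying $P_{\mathrm c}$ of degree $K_c$ coordinatewise and scaling by $\epsilon$ gives degree at most $qK_sK_c$.
\end{itemize}
This is \eqref{eq:D_delta_clean_module}. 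Oddness of $P_{\mathrm s}$ and $P_{\mathrm c}$ is not needed for the upper bound.

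\textbf{Full step $\Psi_t$.}
\begin{itemize}
\item Set $H_t(v)=(F_t(v),u)$. Its first block has degree at most $D_\delta$, and its second block is linear in $v$. Since $q\ge1$, $K_s\ge1$ and $K_c\ge1$, we have $D_\delta\ge 1$ (or $F_t$ is constant, which is harmless), so $H_t$ has degree at most $D_\delta$.
\item Each coordinate of $\mathcal{G}_{u,t}$ is a polynomial of degree at most $q$ in its $d$-dimensional argument. Substituting $H_t(v)$ into a monomial of degree $\ell\le q$ gives a product of $\ell$ polynomials, each of degree at most $D_\delta$, hence degree at most $\ell D_\delta\le qD_\delta$.
\item The second block $u-\eta_{u,t}\mathcal{G}_{u,t}(H_t(v))$ therefore has degree at most $\max\{1,qD_\delta\}=qD_\delta$. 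The first block $F_t$ has degree at most $D_\delta\le qD_\delta$.
\item So $\Psi_t$ has degree $D\le qD_\delta\le q^2K_sK_c$, which is \eqref{eq:D_full_clean_module}.
\end{itemize}

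\textbf{Tensor expansion \eqref{eq:full_step_field_expansions_clean}.} Every polynomial map $\mathbb{R}^d\to\mathbb{R}^d$ of degree at most $D$ can be written as $\sum_{\ell=0}^D Q_\ell v^{\otimes\ell}$. Each monomial $v_{i_1}\cdots v_{i_\ell}$ is a coordinate of $v^{\otimes\ell}$, so collecting the coefficients of the degree-$\ell$ monomials into a $d\times d^\ell$ matrix $Q_\ell^{(t)}$ works. Symmetric redistribution of coefficients is allowed but not required.

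\textbf{Simultaneous variant.} Here $u^+=u-\eta_{u,t}\mathcal{G}_{u,t}(\delta,u)$, which has degree at most $\max\{1,q\}=q$. The perturbation block still has degree at most $D_\delta$, so $D\le\max\{D_\delta,q\}$.

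The main obstacle is bookkeeping rather than mathematics. The key point is the composition rule $\deg(p\circ h)\le\deg p\cdot\deg h$ for multivariate $p$ and vector-valued $h$. It follows because substituting into a degree-$\ell$ monomial gives a product of $\ell$ polynomials, and the degree of a product is at most the sum of the degrees.

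A secondary point is that the linear pass-through terms ($\delta$ inside $P_{\mathrm c}$, and $u$ in both blocks) never exceed the running degree bound; this is where $q\ge1$ and $D_\delta\ge1$ are used.
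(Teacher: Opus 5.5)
Your proposal is correct and follows the paper's proof essentially step for step: the same degree count through $P_{\mathrm s}(\mathcal G_{\delta,t}/\alpha_t)$ and then $P_{\mathrm c}$, the same auxiliary map $H_t(v)=(F_t(v),u)$ composed with the degree-$q$ map $\mathcal G_{u,t}$, and the same one-line treatment of the simultaneous variant. Your extra bookkeeping (the linear pass-through terms and the explicit assembly of the $Q_\ell^{(t)}$) fills in details the paper leaves implicit. The one caveat, which the paper shares, is that the step ``$H_t$ has degree at most $D_\delta$'' needs $D_\delta\ge 1$, so $D_\delta$ should be read as the bound $qK_sK_c$ rather than the exact degree of $F_t$; your remark that a constant $F_t$ is harmless holds for the final bound $q^2K_sK_c$ but not for the intermediate bound $qD_\delta$.
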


\begin{proof}
Each coordinate of \(\mathcal G_{\delta,t}(v)\) and \(\mathcal G_{u,t}(v)\) has degree at most \(q\).
Hence \(P_{\mathrm s}(\mathcal G_{\delta,t}(v)/\alpha_t)\) has degree at most \(qK_s\).
Therefore the argument of \(P_{\mathrm c}\) in \eqref{eq:delta_folded_map} has degree at most \(qK_s\), and composing with the degree-\(K_c\) polynomial \(P_{\mathrm c}\) gives
\[
D_\delta\le qK_sK_c.
\]
This proves \eqref{eq:D_delta_clean_module}.

Now define \(H_t(v):=(F_t(v),u)\). Its first block has degree at most \(D_\delta\), while its second block is linear, so \(H_t\) has degree at most \(D_\delta\).
Since \(\mathcal G_{u,t}\) has degree at most \(q\) in its state argument, the composition
\(\mathcal G_{u,t}(F_t(v),u)=\mathcal G_{u,t}(H_t(v))\)
has degree at most \(qD_\delta\).
Hence the second block of \(\Psi_t\) has degree at most \(qD_\delta\), whereas the first block already has degree at most \(D_\delta\). Therefore
\[
D\le qD_\delta\le q^2K_sK_c,
\]
which proves \eqref{eq:D_full_clean_module} and \eqref{eq:full_step_field_expansions_clean}.
When the outer update is evaluated at the current perturbation \(\delta\), the composition with \(F_t\) disappears in the second block, and the same argument gives \(D\le \max\{D_\delta,q\}\).
\end{proof}

\subsubsection{Sign and clip approximation}
\label{subsubsec:jackson_chebyshev_folded_clean}

The main theorem in \Cref{thm:one_shot_folded_clean} treats the sign degree $K_s$, the clip degree $K_c$, and the effective full-step degree $D\le q^2K_sK_c$ as design parameters. The next proposition combines a standard odd polynomial approximation of the sign function, the shifted-sign clip construction, the resulting one-step perturbation estimate, and the corresponding degree choice.

\begin{proposition}[Polynomial sign/clip design and one-step perturbation approximation]
\label{prop:sign_clip_design_clean}
There exist absolute constants $C_{\mathrm s},c_{\mathrm s}>0$ such that the following hold.
\begin{enumerate}
\item \textbf{Scaled gapped-sign approximation.}
For every $L>0$, $\tau\in(0,L)$, and $\delta\in(0,1/2)$, there exists an odd polynomial
$P_{\mathrm s}^{(L,\tau,\delta)}$ of degree at most
\begin{equation}
\label{eq:scaled_sign_degree_clean}
K_s
\;\le\;
C_{\mathrm s}\,\frac{L}{\tau}
\log\!\Big(\frac{c_{\mathrm s}}{\delta}\Big)
\end{equation}
such that
\begin{equation}
\label{eq:scaled_sign_spec_clean}
|P_{\mathrm s}^{(L,\tau,\delta)}(x)|\le 1\quad\text{for all }x\in[-L,L],
\qquad
\sup_{|x|\in[\tau,L]}
\big|P_{\mathrm s}^{(L,\tau,\delta)}(x)-\operatorname{sign}(x)\big|
\le \delta.
\end{equation}

\item \textbf{Shifted-sign clip design.}
Fix $L_c>1$, $\tau_c\in(0,L_c-1]$, and $\delta_c\in(0,1)$, and set
\begin{equation}
\label{eq:Rc_clean}
R_c:=L_c+1.
\end{equation}
Let $S$ be an odd polynomial on $[-R_c,R_c]$ such that
\begin{equation}
\label{eq:shifted_sign_assumption_clean}
|S(z)|\le 1\quad\text{for all }|z|\le R_c,
\qquad
\sup_{|z|\in[\tau_c,R_c]}
\big|S(z)-\operatorname{sign}(z)\big|
\le \frac{\delta_c}{L_c}.
\end{equation}
Define
\begin{equation}
\label{eq:clip_poly_from_sign_clean}
P_{\mathrm c}(x)
:=
\frac{1}{2}\Big((x+1)S(x+1)-(x-1)S(x-1)\Big).
\end{equation}
Then $P_{\mathrm c}$ is an odd polynomial of degree at most $\deg S+1$, and
\begin{equation}
\label{eq:clip_poly_inner_clean}
|P_{\mathrm c}(x)-x|\le \delta_c
\qquad\text{for every }|x|\le 1-\tau_c,
\end{equation}
\begin{equation}
\label{eq:clip_poly_outer_clean}
|P_{\mathrm c}(x)-\operatorname{sign}(x)|\le \delta_c
\qquad\text{for every }1+\tau_c\le |x|\le L_c,
\end{equation}
and
\begin{equation}
\label{eq:clip_poly_bounded_clean}
|P_{\mathrm c}(x)|\le 1
\qquad\text{for every }x\in[-1,1].
\end{equation}
If in addition
\begin{equation}
\label{eq:clip_degree_small_error_clean}
\frac{\delta_c}{L_c}\in(0,1/2),
\end{equation}
then one may choose $S$ from item~(1) on $[-R_c,R_c]$, and the resulting clip polynomial satisfies
\begin{equation}
\label{eq:clip_degree_choice_clean}
K_c
\;\le\;
1+C_{\mathrm s}\,\frac{L_c+1}{\tau_c}
\log\!\Big(\frac{c_{\mathrm s}L_c}{\delta_c}\Big).
\end{equation}
In particular, if $L_c=\mathcal O(1)$ on the local domain, then
\begin{equation}
\label{eq:clip_degree_choice_O_clean}
K_c
=
\mathcal O\!\Big(\tau_c^{-1}\log(1/\delta_c)\Big).
\end{equation}

\item \textbf{One-step perturbation approximation.}
Fix a state $v=(\delta,u)$. Assume that each normalized gradient coordinate satisfies
\begin{equation}
\label{eq:dead_zone_condition_clean}
\left|\frac{\mathcal G_{\delta,i}(v)}{\alpha}\right|\in[\tau_s,1],
\qquad i=1,\dots,m,
\end{equation}
and let $P_{\mathrm s}$ be an odd polynomial satisfying
\begin{equation}
\label{eq:sign_poly_condition_clean}
\sup_{|x|\in[\tau_s,1]}|P_{\mathrm s}(x)-\operatorname{sign}(x)|\le \delta_s,
\qquad
|P_{\mathrm s}(x)|\le 1\quad\text{for all }x\in[-1,1].
\end{equation}
Assume also that a clip polynomial $P_{\mathrm c}$ satisfies the conclusions of item~(2) on a domain $[-L_c,L_c]$, and that the polynomial-step argument avoids the transition band:
\begin{equation}
\label{eq:clip_safe_region_clean}
\left|
\frac{\delta_i+\eta_{\delta}P_{\mathrm s}(\mathcal G_{\delta,i}(v)/\alpha)}{\epsilon}
\right|
\in [0,1-\tau_c]\cup[1+\tau_c,L_c],
\qquad i=1,\dots,m.
\end{equation}
Define the exact sign/clip step and the polynomial sign/clip step coordinate-wise by
\begin{align}
\delta_i^{\mathrm{ex}}
&:=
\epsilon\,\operatorname{sat}\!\left(
\frac{\delta_i+\eta_{\delta}\operatorname{sign}(\mathcal G_{\delta,i}(v)/\alpha)}{\epsilon}
\right),
\label{eq:delta_exact_step_clean}
\\
\delta_i^{\mathrm{poly}}
&:=
\epsilon\,P_{\mathrm c}\!\left(
\frac{\delta_i+\eta_{\delta}P_{\mathrm s}(\mathcal G_{\delta,i}(v)/\alpha)}{\epsilon}
\right).
\label{eq:delta_poly_step_clean}
\end{align}
Then
\begin{equation}
\label{eq:onestep_inf_bound_clean}
\|\delta^{\mathrm{ex}}-\delta^{\mathrm{poly}}\|_\infty
\le
\eta_\delta\,\delta_s + \epsilon\,\delta_c,
\end{equation}
and therefore
\begin{equation}
\label{eq:onestep_l2_bound_clean}
\|\delta^{\mathrm{ex}}-\delta^{\mathrm{poly}}\|_2
\le
\sqrt m\,\bigl(\eta_\delta\,\delta_s + \epsilon\,\delta_c\bigr).
\end{equation}

\item \textbf{Choosing $(\delta_s,\delta_c)$ and $(K_s,K_c)$ from a one-step budget.}
Fix a target one-step nonlinear approximation budget
$\varepsilon_{\mathrm{nl,step}}>0$ in $\ell_2$ norm. Under the standing assumptions of item~(3), assume further that
\begin{equation}
\label{eq:budget_small_error_regime_clean}
0<\varepsilon_{\mathrm{nl,step}}<\min\!\bigl\{\eta_{\delta}\sqrt m,\,\epsilon L_c\sqrt m\bigr\}.
\end{equation}
It suffices to choose
\begin{equation}
\label{eq:delta_s_delta_c_choice_clean}
\delta_s:=\frac{\varepsilon_{\mathrm{nl,step}}}{2\eta_{\delta}\sqrt m},
\qquad
\delta_c:=\frac{\varepsilon_{\mathrm{nl,step}}}{2\epsilon\sqrt m}.
\end{equation}
Then
\begin{equation}
\label{eq:one_step_budget_result_clean}
\|\delta^{\mathrm{ex}}-\delta^{\mathrm{poly}}\|_2
\le
\varepsilon_{\mathrm{nl,step}}.
\end{equation}
Moreover one may choose the sign degree and clip degree so that
\begin{align}
K_s
&\le
C_{\mathrm s}\,\tau_s^{-1}
\log\!\Big(\frac{2c_{\mathrm s}\eta_{\delta}\sqrt m}{\varepsilon_{\mathrm{nl,step}}}\Big),
\label{eq:Ks_budget_clean}
\\
K_c
&\le
1+C_{\mathrm s}\,\frac{L_c+1}{\tau_c}
\log\!\Big(\frac{4c_{\mathrm s}L_c\epsilon\sqrt m}{\varepsilon_{\mathrm{nl,step}}}\Big).
\label{eq:Kc_budget_clean}
\end{align}
Consequently, the perturbation-step degree satisfies
\begin{equation}
\label{eq:D_budget_clean}
D_\delta\le qK_sK_c,
\end{equation}
and the full-step degree obeys \(D\le qD_\delta\le q^2K_sK_c\).
\end{enumerate}
\end{proposition}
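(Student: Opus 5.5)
The proof splits along the four items; items (2)–(4) are elementary once item (1) is available.

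\textbf{Item (1).} This is the standard gapped-sign approximation (cf.\ \cite{eremenko2007signApprox}). I will rescale: given an odd polynomial $p$ with $|p|\le 1$ on $[-1,1]$ and $|p(y)-\operatorname{sign}(y)|\le\delta$ for $|y|\in[\tau/L,1]$, set $P_{\mathrm s}^{(L,\tau,\delta)}(x):=p(x/L)$. The base construction convolves $\operatorname{sign}$ with a Gaussian-type smoothing (e.g.\ $\operatorname{erf}(kx)$ with $k\asymp(L/\tau)\sqrt{\log(1/\delta)}$). Truncating the Chebyshev expansion at degree $\mathcal O((L/\tau)\log(1/\delta))$ gives the error bound. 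Replacing $p$ by $\tfrac12(p(x)-p(-x))$ enforces oddness. Rescaling by a factor $(1+\delta)^{-1}$, and adjusting constants, gives the uniform bound $|p|\le1$.

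\textbf{Item (2).} I will use the identity $\operatorname{sat}(x)=\tfrac12\bigl(|x+1|-|x-1|\bigr)$ together with $|z|=z\operatorname{sign}(z)$. So $P_{\mathrm c}$ is obtained by substituting $S$ for $\operatorname{sign}$. Oddness follows by substituting $x\mapsto-x$: the two terms swap and change sign. The degree bound $\deg S+1$ is immediate.

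For $|x|\le 1-\tau_c$, both $x\pm1$ have modulus in $[\tau_c,2]\subset[\tau_c,R_c]$, using $\tau_c\le L_c-1$. Hence
\[
\bigl|(x\pm1)\bigl(S(x\pm1)-\operatorname{sign}(x\pm1)\bigr)\bigr|\le |x\pm1|\,\delta_c/L_c .
\]
This gives total error at most $\tfrac12(2+2)\delta_c/L_c\le\delta_c$ when $L_c\ge 2$; otherwise I use the sharper bound $|x+1|+|x-1|=2$, which gives $\delta_c/L_c\le\delta_c$. The same computation works on $1+\tau_c\le|x|\le L_c$: there $|x\pm1|\le L_c+1=R_c$, so the error is at most $\tfrac12\cdot 2L_c\cdot\delta_c/L_c=\delta_c$, using $|x+1|+|x-1|=2|x|\le 2L_c$. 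The one delicate point is that $|x\mp1|$ may lie in $[\tau_c,\infty)$ but must stay below $R_c$; this is guaranteed.

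For \eqref{eq:clip_poly_bounded_clean}, on $[-1,1]$ I will write
\[
P_{\mathrm c}(x)=\tfrac12\bigl((x+1)S(x+1)+(1-x)S(x-1)\bigr).
\]
With $|S|\le1$ and nonnegative weights summing to $2$, this gives $|P_{\mathrm c}|\le 1$. The degree bound \eqref{eq:clip_degree_choice_clean} then follows by applying item (1) with $L=R_c$, $\tau=\tau_c$, $\delta=\delta_c/L_c$.

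\textbf{Item (3).} I will work coordinatewise and insert the intermediate quantity $\epsilon\,\operatorname{sat}(z^{\mathrm{poly}}_i)$, where $z^{\mathrm{poly}}_i$ is the argument in \eqref{eq:delta_poly_step_clean} and $z^{\mathrm{ex}}_i$ the corresponding exact argument. Since $\operatorname{sat}$ is $1$-Lipschitz and $|z^{\mathrm{ex}}_i-z^{\mathrm{poly}}_i|\le\eta_\delta\delta_s/\epsilon$ by the dead-zone condition \eqref{eq:sign_poly_condition_clean}, the first term is at most $\eta_\delta\delta_s$. The second term is $\epsilon|\operatorname{sat}(z^{\mathrm{poly}}_i)-P_{\mathrm c}(z^{\mathrm{poly}}_i)|\le\epsilon\delta_c$, using the safe-region hypothesis \eqref{eq:clip_safe_region_clean} together with \eqref{eq:clip_poly_inner_clean}–\eqref{eq:clip_poly_outer_clean}. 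The $\ell_2$ bound follows from $\|\cdot\|_2\le\sqrt m\|\cdot\|_\infty$.

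\textbf{Item (4).} Substituting the chosen $\delta_s,\delta_c$ into \eqref{eq:onestep_l2_bound_clean} gives $\varepsilon_{\mathrm{nl,step}}/2+\varepsilon_{\mathrm{nl,step}}/2$. Condition \eqref{eq:budget_small_error_regime_clean} ensures $\delta_s<1/2$ and $\delta_c/L_c<1/2$, so items (1) and (2) apply. For the sign polynomial, item (1) with $L=1$, $\tau=\tau_s$ gives \eqref{eq:Ks_budget_clean}. For the clip polynomial, the bound \eqref{eq:clip_degree_choice_clean} with $\delta_c$ substituted gives $\log\bigl(2c_{\mathrm s}L_c\epsilon\sqrt m/\varepsilon_{\mathrm{nl,step}}\bigr)$, which is bounded above by the stated expression with the factor $4$. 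The degree statements then follow from Proposition~\ref{prop:folded_effective_degree_clean}.

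\textbf{Main obstacle.} I expect item (1) to be the hardest step: it requires a quantitative Chebyshev truncation argument with explicit constants and the sup-norm bound $\le1$. I would cite it rather than reprove it.
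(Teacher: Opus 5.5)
Your proposal is correct and follows the paper's proof essentially step for step. Item (1) is a cited standard gapped-sign result, rescaled via $x\mapsto x/L$. Item (2) uses the same shifted-sign identity $\operatorname{sat}(x)=\tfrac12\bigl((x+1)\operatorname{sign}(x+1)-(x-1)\operatorname{sign}(x-1)\bigr)$, with the weights $|x+1|+|x-1|$ controlling the error, and items (3)--(4) use the same Lipschitz/triangle-inequality split and parameter substitution. (The case split on $L_c\ge 2$ in your inner-region estimate is unnecessary: $|x+1|+|x-1|=2$ on $|x|\le 1$ always gives $\delta_c/L_c\le\delta_c$.)
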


\begin{proof}
\textbf{Item~(1).}
We use a standard odd bounded polynomial approximation result for the gapped sign function and do not reprove it here. Apply that result on $[-1,1]$ with gap $\widetilde\tau:=\tau/L\in(0,1)$ and accuracy $\delta$, obtaining an odd polynomial $Q$ of degree at most
\[
C_{\mathrm s}\,\widetilde\tau^{-1}\log\!\Big(\frac{c_{\mathrm s}}{\delta}\Big)
=
C_{\mathrm s}\,\frac{L}{\tau}\log\!\Big(\frac{c_{\mathrm s}}{\delta}\Big).
\]
Define
\begin{equation}
\label{eq:scaled_sign_def_clean}
P_{\mathrm s}^{(L,\tau,\delta)}(x):=Q(x/L).
\end{equation}
Because scaling by the positive number $L$ preserves oddness and the sign, this polynomial satisfies \eqref{eq:scaled_sign_degree_clean}--\eqref{eq:scaled_sign_spec_clean}.

\medskip\noindent\textbf{Item~(2).}
We first record the shifted-sign identity
\begin{equation}
\label{eq:sat_shifted_sign_clean}
\operatorname{sat}(x)
=
\frac{1}{2}\Big((x+1)\operatorname{sign}(x+1)-(x-1)\operatorname{sign}(x-1)\Big),
\end{equation}
which is checked by separating the regions $x>1$, $x=1$, $-1<x<1$, $x=-1$, and $x<-1$.
Because $S$ is odd, the polynomial $P_{\mathrm c}$ defined by \eqref{eq:clip_poly_from_sign_clean} is odd and has degree at most $\deg S+1$.
If $|x|\le 1-\tau_c$, then $x+1\in[\tau_c,2-\tau_c]$ and $x-1\in[-2+\tau_c,-\tau_c]$, so \eqref{eq:shifted_sign_assumption_clean} and \eqref{eq:sat_shifted_sign_clean} give
\[
P_{\mathrm c}(x)-x
=
\frac12\Big((x+1)(S(x+1)-1)-(x-1)(S(x-1)+1)\Big),
\]
hence \eqref{eq:clip_poly_inner_clean}. If $1+\tau_c\le x\le L_c$, then
\[
P_{\mathrm c}(x)-1
=
\frac12\Big((x+1)(S(x+1)-1)-(x-1)(S(x-1)-1)\Big),
\]
so \eqref{eq:clip_poly_outer_clean} follows for positive $x$, and oddness gives the negative side. Finally, if $|x|\le 1$, then $|x\pm1|\le 2\le R_c$, hence \eqref{eq:shifted_sign_assumption_clean} implies
\[
|P_{\mathrm c}(x)|
\le
\frac12\Big((x+1)|S(x+1)|+(1-x)|S(x-1)|\Big)\le 1,
\]
which is \eqref{eq:clip_poly_bounded_clean}.
Under \eqref{eq:clip_degree_small_error_clean}, apply item~(1) on $[-R_c,R_c]$ with gap $\tau_c$ and accuracy $\delta_c/L_c$.
This yields an odd polynomial $S$ with
\[
\deg S\le C_{\mathrm s}\,\frac{L_c+1}{\tau_c}
\log\!\Big(\frac{c_{\mathrm s}L_c}{\delta_c}\Big),
\]
and substituting that $S$ into \eqref{eq:clip_poly_from_sign_clean} proves \eqref{eq:clip_degree_choice_clean}. The order estimate \eqref{eq:clip_degree_choice_O_clean} is the immediate simplification when $L_c=\mathcal O(1)$.

\medskip\noindent\textbf{Item~(3).}
For each coordinate $i$, define
\[
z_i:=
\frac{\delta_i+\eta_{\delta}\operatorname{sign}(\mathcal G_{\delta,i}(v)/\alpha)}{\epsilon},
\qquad
\widetilde z_i:=
\frac{\delta_i+\eta_{\delta}P_{\mathrm s}(\mathcal G_{\delta,i}(v)/\alpha)}{\epsilon}.
\]
Then \eqref{eq:delta_exact_step_clean}--\eqref{eq:delta_poly_step_clean} become
\[
\delta_i^{\mathrm{ex}}=\epsilon\,\operatorname{sat}(z_i),
\qquad
\delta_i^{\mathrm{poly}}=\epsilon\,P_{\mathrm c}(\widetilde z_i).
\]
Using the triangle inequality,
\[
|\delta_i^{\mathrm{ex}}-\delta_i^{\mathrm{poly}}|
\le
\epsilon\,|\operatorname{sat}(z_i)-\operatorname{sat}(\widetilde z_i)|
+
\epsilon\,|\operatorname{sat}(\widetilde z_i)-P_{\mathrm c}(\widetilde z_i)|.
\]
The scalar saturation map is $1$-Lipschitz, so \eqref{eq:dead_zone_condition_clean} and \eqref{eq:sign_poly_condition_clean} imply
\begin{equation}
\label{eq:sign_part_bound_clean}
\epsilon\,|\operatorname{sat}(z_i)-\operatorname{sat}(\widetilde z_i)|
\le
\eta_\delta\,\delta_s.
\end{equation}
By \eqref{eq:clip_safe_region_clean}, the argument $\widetilde z_i$ lies either in the interior region of \eqref{eq:clip_poly_inner_clean} or in the outer region of \eqref{eq:clip_poly_outer_clean}, and in either case
\begin{equation}
\label{eq:clip_part_bound_clean}
\epsilon\,|\operatorname{sat}(\widetilde z_i)-P_{\mathrm c}(\widetilde z_i)|
\le
\epsilon\,\delta_c.
\end{equation}
Adding \eqref{eq:sign_part_bound_clean} and \eqref{eq:clip_part_bound_clean} proves \eqref{eq:onestep_inf_bound_clean}, and \eqref{eq:onestep_l2_bound_clean} follows from
\[
\|\delta^{\mathrm{ex}}-\delta^{\mathrm{poly}}\|_2
\le
\sqrt m\,\|\delta^{\mathrm{ex}}-\delta^{\mathrm{poly}}\|_\infty.
\]

\medskip\noindent\textbf{Item~(4).}
Substituting \eqref{eq:delta_s_delta_c_choice_clean} into \eqref{eq:onestep_l2_bound_clean} gives
\[
\sqrt m\big(\eta_{\delta}\delta_s+\epsilon\delta_c\big)
=
\varepsilon_{\mathrm{nl,step}},
\]
which proves \eqref{eq:one_step_budget_result_clean}. The small-error regime \eqref{eq:budget_small_error_regime_clean} ensures $\delta_s\in(0,1/2)$ and $\delta_c/L_c\in(0,1/2)$. Therefore item~(1) with $(L,\tau,\delta)=(1,\tau_s,\delta_s)$ yields \eqref{eq:Ks_budget_clean}, item~(2) yields \eqref{eq:Kc_budget_clean}, and \eqref{eq:D_budget_clean} is the degree estimate already proved in \Cref{prop:folded_effective_degree_clean}.
\end{proof}

Item~(1) uses only a standard odd polynomial approximation of the sign function. Any odd polynomial family satisfying the same degree and approximation bounds is sufficient for the rest of the argument. For classical background on Jackson/Chebyshev approximation on compact intervals, see DeVore~\cite{devore1968jackson} and Trefethen~\cite{trefethen2019atap}; for sign approximation on symmetric gapped intervals, see Eremenko and Yuditskii~\cite{eremenko2007signApprox}.

Item~(4) is stated for a single step. In the later transfer result and the full-window bounds the step sizes and normalizations may depend on \(t\), while the sign and clip polynomials are fixed once for the whole training window. To enforce one common pair \((K_s,K_c)\), it suffices to apply Item~(4) of \Cref{prop:sign_clip_design_clean} with the same local-domain parameters \(\tau_s,\tau_c,L_c\) and with
\[
\eta_\delta^{\max}:=\sup_{0\le t\le T-1}\eta_{\delta,t}.
\]
Then the one-step estimate \eqref{eq:onestep_l2_bound_clean} gives, for every admissible time \(t\),
\[
\|\delta_t^{\mathrm{ex}}-\delta_t^{\mathrm{poly}}\|_2
\le
\sqrt m\bigl(\eta_{\delta,t}\delta_s+\epsilon\delta_c\bigr)
\le
\sqrt m\bigl(\eta_\delta^{\max}\delta_s+\epsilon\delta_c\bigr)
\le
\varepsilon_{\mathrm{nl,step}},
\]
whenever the dead-zone and clip-safe conditions hold uniformly on the common local domain. The time dependence of \(\alpha_t\) causes no extra degree growth because the sign polynomial is always applied to the normalized quantity \(\mathcal G_{\delta,t}(v)/\alpha_t\in[-1,1]\).
Theorem~\ref{thm:one_shot_folded_clean} itself remains a theorem for the polynomial update model. Proposition~\ref{prop:sign_clip_design_clean} and Proposition~\ref{prop:base_to_lifted_bridge_clean} enter only when transferring that theorem to the exact adversarial-training dynamics in Corollary~\ref{cor:exact_dynamics_bridge_clean}.

\begin{proposition}[From one-step approximation error to the lifted error term]
\label{prop:base_to_lifted_bridge_clean}
Let $\Psi_t^{\mathrm{phys}}$ be the exact robust one-step map from
\eqref{eq:clean_centered_exact_delta_step_module}, and let $\Psi_t^{\mathrm{fold}}$ be the full-step polynomial update model from
\eqref{eq:full_step_update_clean}. Then the following hold.
\begin{enumerate}
\item \textbf{From perturbation-step error to a one-step state error.}
Fix a local domain on which the following hold uniformly for every admissible $v=(\delta,u)$ and every time $t$.
\begin{enumerate}
\item[(i)] There exists $\varepsilon_{\delta,\mathrm{grad}}\ge 0$ such that
\[
\|g_{\delta,t}(v)-\mathcal G_{\delta,t}(v)\|_\infty
\le
\varepsilon_{\delta,\mathrm{grad}},
\]
the normalized polynomial approximation stays away from the origin coordinate-wise,
\[
\left|\frac{\mathcal G_{\delta,i,t}(v)}{\alpha_t}\right|\in[\tau_s,1],
\qquad i=1,\dots,m,
\]
and
\begin{equation}
\label{eq:sign_consistency_margin_clean}
\varepsilon_{\delta,\mathrm{grad}}\le \frac{\alpha_t\tau_s}{2}.
\end{equation}
\item[(ii)] The clip-safe argument satisfies \eqref{eq:clip_safe_region_clean}.
\item[(iii)] There exists $\varepsilon_{u,\mathrm{grad}}\ge 0$ such that
\[
\|g_{u,t}(z)-\mathcal G_{u,t}(z)\|_2
\le
\varepsilon_{u,\mathrm{grad}}
\]
for every admissible state vector $z=(\delta,u)$ reached by either the exact perturbation step or the polynomial perturbation step.
\item[(iv)] There exists $L_{u,\delta}^{(t)}\ge 0$ such that
\[
\|g_{u,t}(\delta_a,u)-g_{u,t}(\delta_b,u)\|_2
\le
L_{u,\delta}^{(t)}\,\|\delta_a-\delta_b\|_2
\]
for every admissible pair $(\delta_a,u)$ and $(\delta_b,u)$.
\end{enumerate}
Then the one-step state error obeys
\begin{equation}
\label{eq:base_step_budget_sufficient_clean}
\|\Psi_t^{\mathrm{phys}}(v)-\Psi_t^{\mathrm{fold}}(v)\|_2
\le
\bigl(1+\eta_{u,t}L_{u,\delta}^{(t)}\bigr)
\varepsilon_{\delta,\mathrm{step}}
+
\eta_{u,t}\,\varepsilon_{u,\mathrm{grad}},
\end{equation}
where
\begin{equation}
\label{eq:delta_step_budget_symbol_clean}
\varepsilon_{\delta,\mathrm{step}}
:=
\|\delta_{\mathrm{phys}}^{+}-\delta_{\mathrm{fold}}^{+}\|_2.
\end{equation}
In particular, if Item~(4) of \Cref{prop:sign_clip_design_clean} is invoked so that
\(
\varepsilon_{\delta,\mathrm{step}}\le \varepsilon_{\mathrm{nl,step}}
\),
then
\begin{equation}
\label{eq:base_step_budget_nonlinearity_clean}
\|\Psi_t^{\mathrm{phys}}(v)-\Psi_t^{\mathrm{fold}}(v)\|_2
\le
\bigl(1+\eta_{u,t}L_{u,\delta}^{(t)}\bigr)
\varepsilon_{\mathrm{nl,step}}
+
\eta_{u,t}\,\varepsilon_{u,\mathrm{grad}}.
\end{equation}
Therefore the condition \eqref{eq:base_step_bridge_assumption_clean} holds with any uniform choice of
\begin{equation}
\label{eq:base_step_budget_uniform_clean}
\varepsilon_{\mathrm{base,step}}
\ge
\sup_{0\le t\le T-1}
\left[
\bigl(1+\eta_{u,t}L_{u,\delta}^{(t)}\bigr)
\varepsilon_{\mathrm{nl,step}}
+
\eta_{u,t}\,\varepsilon_{u,\mathrm{grad}}
\right].
\end{equation}

\item \textbf{From one-step state error to the lifted error term.}
Define the retained Carleman lift
\[
\mathcal L_N(v):=\bigl(v,v^{\otimes 2},\dots,v^{\otimes N}\bigr)\in\mathbb R^{\Delta_N},
\qquad
\Delta_N:=\sum_{j=1}^N d^j.
\]
Fix $\bar v>0$ and assume $\|a\|_2,\|b\|_2\le \bar v$. Then
\begin{equation}
\label{eq:lift_local_lipschitz_clean}
\|\mathcal L_N(a)-\mathcal L_N(b)\|_2
\le
L_{N,\bar v}^{\mathrm{lift}}\,\|a-b\|_2,
\end{equation}
where
\begin{equation}
\label{eq:lift_local_lipschitz_constant_clean}
L_{N,\bar v}^{\mathrm{lift}}
:=
\left(\sum_{j=1}^N j^2\bar v^{2j-2}\right)^{1/2}.
\end{equation}
Now let $\mathcal V\subseteq\mathbb R^d$ satisfy
\[
\mathcal V\subseteq \{v\in\mathbb R^d:\|v\|_2\le \bar v<1\},
\]
and assume that for every $v\in\mathcal V$ and every $t=0,1,\dots,T-1$ one has
\begin{equation}
\label{eq:base_step_bridge_assumption_clean}
\|\Psi_t^{\mathrm{phys}}(v)-\Psi_t^{\mathrm{fold}}(v)\|_2
\le
\varepsilon_{\mathrm{base,step}},
\end{equation}
and assume
\begin{equation}
\label{eq:bridge_image_invariance_clean}
\Psi_t^{\mathrm{phys}}(v)\in\mathcal V,
\qquad
\Psi_t^{\mathrm{fold}}(v)\in\mathcal V.
\end{equation}
Then
\begin{equation}
\label{eq:lifted_model_step_bound_clean}
\|\mathcal L_N(\Psi_t^{\mathrm{phys}}(v)) - \mathcal L_N(\Psi_t^{\mathrm{fold}}(v))\|_2
\le
L_{N,\bar v}^{\mathrm{lift}}\,\varepsilon_{\mathrm{base,step}}.
\end{equation}
In particular, if the exact truncated lifted dynamics is written in the form
\begin{equation}
\label{eq:model_forcing_decomposition_clean}
y_{\mathrm{phys}}(t+1)
=
B(t)\,y_{\mathrm{phys}}(t)+c(t)+e_{\mathrm{tr}}(t)+e_{\mathrm{model}}(t),
\end{equation}
where $e_{\mathrm{model}}(t)$ is exactly the discrepancy between the exact next-step truncated lift and the next-step truncated lift produced by the polynomial update model, evaluated on the same current state, then
\begin{equation}
\label{eq:model_forcing_step_bound_clean}
\|e_{\mathrm{model}}(t)\|_2
\le
L_{N,\bar v}^{\mathrm{lift}}\,\varepsilon_{\mathrm{base,step}}
\qquad
\text{for all }t=0,1,\dots,T-1.
\end{equation}
\end{enumerate}
\end{proposition}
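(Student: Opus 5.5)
The proof splits along the two items, and each reduces to a triangle inequality plus an earlier estimate.

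\textbf{Item~(1).} Write $\Psi_t^{\mathrm{phys}}(v)-\Psi_t^{\mathrm{fold}}(v)$ blockwise. The perturbation block is exactly $\delta^+_{\mathrm{phys}}-\delta^+_{\mathrm{fold}}$, whose norm is $\varepsilon_{\delta,\mathrm{step}}$. The parameter block equals $-\eta_{u,t}\bigl(g_{u,t}(\delta^+_{\mathrm{phys}},u)-\mathcal G_{u,t}(\delta^+_{\mathrm{fold}},u)\bigr)$. I will insert the intermediate term $g_{u,t}(\delta^+_{\mathrm{fold}},u)$:
\begin{itemize}
\item hypothesis (iv) bounds the first difference by $L^{(t)}_{u,\delta}\varepsilon_{\delta,\mathrm{step}}$;
\item hypothesis (iii), applied at the state $(\delta^+_{\mathrm{fold}},u)$ reached by the polynomial step, bounds the second difference by $\varepsilon_{u,\mathrm{grad}}$.
\end{itemize}
Bounding the Euclidean norm of the stacked vector by the sum of its block norms gives \eqref{eq:base_step_budget_sufficient_clean}.

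One subtlety concerns the sign argument. The exact map uses $\operatorname{sign}(g_{\delta,t})$, whereas Proposition~\ref{prop:sign_clip_design_clean}(3) compares against $\operatorname{sign}(\mathcal G_{\delta,t}/\alpha_t)$. The margin condition \eqref{eq:sign_consistency_margin_clean} together with the dead zone closes this gap. For each $i$ we have $|\mathcal G_{\delta,i,t}|\ge\alpha_t\tau_s$ and $|g_{\delta,i,t}-\mathcal G_{\delta,i,t}|\le\alpha_t\tau_s/2$, so the two signs agree. Hence $\delta^+_{\mathrm{phys}}$ coincides with $\delta^{\mathrm{ex}}$ of \eqref{eq:delta_exact_step_clean}. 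Since $\epsilon\operatorname{sat}(\cdot/\epsilon)$ is the clip to $[-\epsilon,\epsilon]$, item~(4) of that proposition then gives $\varepsilon_{\delta,\mathrm{step}}\le\varepsilon_{\mathrm{nl,step}}$. This yields \eqref{eq:base_step_budget_nonlinearity_clean}, and taking the supremum over $t$ gives \eqref{eq:base_step_budget_uniform_clean}.

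\textbf{Item~(2).} First I prove the Lipschitz bound levelwise. Use the telescoping identity
\[
a^{\otimes j}-b^{\otimes j}=\sum_{k=0}^{j-1}a^{\otimes k}\otimes(a-b)\otimes b^{\otimes(j-1-k)}.
\]
The Euclidean norm is multiplicative on tensor products, so
\[
\|a^{\otimes j}-b^{\otimes j}\|_2\le j\,\bar v^{\,j-1}\|a-b\|_2 .
\]
Squaring and summing over $j=1,\dots,N$ in the direct-sum norm gives \eqref{eq:lift_local_lipschitz_clean} with the stated constant. Note that $\bar v<1$ is not needed here, only $\|a\|_2,\|b\|_2\le\bar v$.

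Next, invariance \eqref{eq:bridge_image_invariance_clean} puts both images $\Psi_t^{\mathrm{phys}}(v)$ and $\Psi_t^{\mathrm{fold}}(v)$ in the $\bar v$-ball. Applying the Lipschitz bound with \eqref{eq:base_step_bridge_assumption_clean} gives \eqref{eq:lifted_model_step_bound_clean}.

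Finally, $e_{\mathrm{model}}(t)$ is by definition $\mathcal L_N(\Psi_t^{\mathrm{phys}}(v_t))-\mathcal L_N(\Psi_t^{\mathrm{fold}}(v_t))$ at the current state $v_t\in\mathcal V$, so \eqref{eq:model_forcing_step_bound_clean} follows immediately. The one point I would make explicit is that $B(t)y_{\mathrm{phys}}(t)+c(t)+e_{\mathrm{tr}}(t)$ equals the truncated lift of $\Psi_t^{\mathrm{fold}}(v_t)$. This holds because the exact Carleman recurrence reproduces $\mathcal L_N\circ\Psi_t^{\mathrm{fold}}$, with $e_{\mathrm{tr}}$ collecting the discarded levels $s>N$.

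\textbf{Main obstacle.} The step I expect to need the most care is the sign-consistency argument in Item~(1). It is the only place where the exact gradient $g_{\delta,t}$ and the surrogate $\mathcal G_{\delta,t}$ interact inside a discontinuous map, and there a strict versus non-strict inequality in the margin matters. If ties are a concern, I would handle them by assuming the margin inequality is strict, or by using $\operatorname{sign}(0)=0$ and noting that zeros are excluded by the dead zone.
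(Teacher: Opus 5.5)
Your proposal is correct and follows the paper's proof essentially step for step. It uses the same ingredients: sign agreement from the margin condition, the add-and-subtract of $g_{u,t}(\delta^+_{\mathrm{fold}},u)$ with (iv) and (iii), the block-sum bound, and the telescoping tensor-power estimate squared and summed over levels. The tie-breaking worry you flag is moot, because the factor $1/2$ in \eqref{eq:sign_consistency_margin_clean} already forces $|g_{\delta,i,t}(v)|\ge\alpha_t\tau_s/2>0$.
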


\begin{proof}
\textbf{Item~(1).}
We split the proof into perturbation and parameter blocks.

\medskip\noindent\textbf{Step 1. The exact and polynomial inner signs agree.}
Fix a coordinate $i$. By the dead-zone hypothesis,
\(
|\mathcal G_{\delta,i,t}(v)|\ge \alpha_t\tau_s
\).
Together with \eqref{eq:sign_consistency_margin_clean}, this gives two cases. If \(\mathcal G_{\delta,i,t}(v)\ge \alpha_t\tau_s\), then
\[
g_{\delta,i,t}(v)
\ge
\mathcal G_{\delta,i,t}(v)-|g_{\delta,i,t}(v)-\mathcal G_{\delta,i,t}(v)|
\ge
\alpha_t\tau_s-\varepsilon_{\delta,\mathrm{grad}}
\ge
\frac{\alpha_t\tau_s}{2}>0.
\]
If \(\mathcal G_{\delta,i,t}(v)\le -\alpha_t\tau_s\), then
\[
g_{\delta,i,t}(v)
\le
\mathcal G_{\delta,i,t}(v)+|g_{\delta,i,t}(v)-\mathcal G_{\delta,i,t}(v)|
\le
-\alpha_t\tau_s+\varepsilon_{\delta,\mathrm{grad}}
\le
-\frac{\alpha_t\tau_s}{2}<0.
\]
Hence $g_{\delta,i,t}(v)$ and $\mathcal G_{\delta,i,t}(v)$ have the same sign for every coordinate. Therefore the exact perturbation step from \eqref{eq:clean_centered_exact_delta_step_module} is exactly the sign/clip step treated in Item~(3) of \Cref{prop:sign_clip_design_clean}, with the same polynomial approximation built from $\mathcal G_{\delta,t}$.

\medskip\noindent\textbf{Step 2. Perturbation-step bound.}
By definition,
\[
\varepsilon_{\delta,\mathrm{step}}
=
\|\delta_{\mathrm{phys}}^{+}-\delta_{\mathrm{fold}}^{+}\|_2.
\]
This is exactly the quantity that appears in \eqref{eq:base_step_budget_sufficient_clean}. If, in addition, the sign and clip polynomials are chosen through Item~(4) of \Cref{prop:sign_clip_design_clean} on the same local domain, then Step~1 identifies the perturbation update with the sign/clip setting of Item~(3) of \Cref{prop:sign_clip_design_clean}, and hence
\(
\varepsilon_{\delta,\mathrm{step}}\le \varepsilon_{\mathrm{nl,step}}
\).

\medskip\noindent\textbf{Step 3. Bound the parameter-step discrepancy.}
Write
\[
\Delta_u
:=
 u_{\mathrm{phys}}^{+}-u_{\mathrm{fold}}^{+}
=
-\eta_{u,t}\Bigl(g_{u,t}(\delta_{\mathrm{phys}}^{+},u)-\mathcal G_{u,t}(\delta_{\mathrm{fold}}^{+},u)\Bigr).
\]
Add and subtract $g_{u,t}(\delta_{\mathrm{fold}}^{+},u)$ to obtain
\[
\|\Delta_u\|_2
\le
\eta_{u,t}\,
\|g_{u,t}(\delta_{\mathrm{phys}}^{+},u)-g_{u,t}(\delta_{\mathrm{fold}}^{+},u)\|_2
+
\eta_{u,t}\,
\|g_{u,t}(\delta_{\mathrm{fold}}^{+},u)-\mathcal G_{u,t}(\delta_{\mathrm{fold}}^{+},u)\|_2.
\]
Using hypothesis (iv) for the first term and hypothesis (iii) for the second gives
\begin{equation}
\label{eq:u_step_bridge_bound_clean}
\|\Delta_u\|_2
\le
\eta_{u,t}L_{u,\delta}^{(t)}\,\varepsilon_{\delta,\mathrm{step}}
+
\eta_{u,t}\,\varepsilon_{u,\mathrm{grad}}.
\end{equation}

\medskip\noindent\textbf{Step 4. Assemble the full one-step bound.}
The full-state discrepancy is
\[
\Psi_t^{\mathrm{phys}}(v)-\Psi_t^{\mathrm{fold}}(v)
=
\bigl(\delta_{\mathrm{phys}}^{+}-\delta_{\mathrm{fold}}^{+},\;u_{\mathrm{phys}}^{+}-u_{\mathrm{fold}}^{+}\bigr).
\]
Hence
\[
\|\Psi_t^{\mathrm{phys}}(v)-\Psi_t^{\mathrm{fold}}(v)\|_2
\le
\varepsilon_{\delta,\mathrm{step}}+\|\Delta_u\|_2.
\]
Substituting \eqref{eq:u_step_bridge_bound_clean} proves \eqref{eq:base_step_budget_sufficient_clean}. The remaining displayed inequalities are immediate consequences.

\medskip\noindent\textbf{Item~(2).}
For each $j\in\{1,\dots,N\}$, the tensor-power difference admits the telescoping expansion
\[
a^{\otimes j}-b^{\otimes j}
=
\sum_{r=1}^j a^{\otimes(r-1)}\otimes(a-b)\otimes b^{\otimes(j-r)}.
\]
Taking Euclidean norms and using submultiplicativity of tensor-product norms gives
\[
\|a^{\otimes j}-b^{\otimes j}\|_2
\le
\sum_{r=1}^j \|a\|_2^{r-1}\,\|a-b\|_2\,\|b\|_2^{j-r}
\le
j\bar v^{j-1}\|a-b\|_2.
\]
Squaring and summing over $j=1,\dots,N$ yields \eqref{eq:lift_local_lipschitz_clean}. Now apply \eqref{eq:lift_local_lipschitz_clean} with
\[
a=\Psi_t^{\mathrm{phys}}(v),
\qquad
b=\Psi_t^{\mathrm{fold}}(v).
\]
By \eqref{eq:base_step_bridge_assumption_clean},
\[
\|a-b\|_2\le \varepsilon_{\mathrm{base,step}},
\]
and by \eqref{eq:bridge_image_invariance_clean}, both images belong to $\mathcal V\subseteq\{z:\|z\|_2\le \bar v\}$. Therefore
\[
\|\mathcal L_N(\Psi_t^{\mathrm{phys}}(v)) - \mathcal L_N(\Psi_t^{\mathrm{fold}}(v))\|_2
\le
L_{N,\bar v}^{\mathrm{lift}}\,\varepsilon_{\mathrm{base,step}},
\]
which is \eqref{eq:lifted_model_step_bound_clean}. The bound \eqref{eq:model_forcing_step_bound_clean} is then just the definition of the modeling error term $e_{\mathrm{model}}(t)$ in \eqref{eq:model_forcing_decomposition_clean}.
\end{proof}

\subsubsection{Discrete-time Carleman lift for the full-step recurrence}
\label{subsubsec:clean_carleman_switched}

For each time \(t\), consider the full-step polynomial map
\[
\Psi_t(v)=\sum_{\ell=0}^{D}Q_{\ell}^{(t)}v^{\otimes \ell}
\qquad
\text{on }\mathbb R^d.
\]
Let
\[
y_0(t):=1,
\qquad
y_j(t):=v(t)^{\otimes j}\ \ (j\ge 1),
\qquad
y^{(N)}(t):=\big(y_1(t),y_2(t),\dots,y_N(t)\big)\in\mathcal H_N.
\]

\begin{proposition}[Discrete-time Carleman lift, time-unrolled system, and conditioning]
\label{prop:carleman_horizon_cond_clean}
Fix \(t\) and consider the polynomial map
\[
\Psi_t(v)=\sum_{\ell=0}^{D}Q_{\ell}^{(t)}v^{\otimes \ell}
\qquad
\text{on }\mathbb R^d.
\]
Then the following hold.
\begin{enumerate}
\item \textbf{Exact discrete-time Carleman blocks.}
For each \(j\ge 1\) and each \(s\in\{0,1,\dots,jD\}\), define the discrete-time Carleman block
\begin{equation}
\label{eq:discrete_carleman_blocks_clean}
K_{j,s}(t)
:=
\sum_{\substack{\alpha=(\alpha_1,\dots,\alpha_j)\in\{0,1,\dots,D\}^j\\|\alpha|=s}}
Q_{\alpha_1}^{(t)}\otimes Q_{\alpha_2}^{(t)}\otimes\cdots\otimes Q_{\alpha_j}^{(t)},
\qquad
|\alpha|:=\alpha_1+\cdots+\alpha_j.
\end{equation}
Then the lifted monomials satisfy the exact recurrence
\begin{equation}
\label{eq:lifted_component_equation_clean}
y_j(t+1)
=
\sum_{s=0}^{jD}K_{j,s}(t)\,y_s(t),
\qquad j\ge 1.
\end{equation}

\item \textbf{Time-unrolled horizon system.}
Fix a Carleman cutoff \(N\ge 1\). For \(j=1,\dots,N\), define
\[
c_j(t):=K_{j,0}(t)\in\mathbb R^{d^j},
\qquad
c(t):=\big(c_1(t),c_2(t),\dots,c_N(t)\big)\in\mathcal H_N,
\]
let \(B(t)\in\mathbb R^{\Delta_N\times\Delta_N}\) be the truncated lifted step matrix whose \((j,s)\)-block is
\[
B_{j,s}(t):=K_{j,s}(t),
\qquad
1\le j,s\le N,
\]
and define the truncated lifted recurrence
\begin{equation}
\label{eq:truncated_lifted_euler_clean}
\hat y(t+1)=B(t)\hat y(t)+c(t),
\qquad
\hat y(0)=y^{(N)}(0).
\end{equation}
If
\[
Y:=\big(\hat y(0),\hat y(1),\dots,\hat y(T)\big)\in\mathbb{R}^{(T+1)\Delta_N},
\qquad
B_{\mathrm{rhs}}:=\big(\hat y(0),c(0),c(1),\dots,c(T-1)\big),
\]
then \(Y\) solves the block-lower-bidiagonal system
\begin{equation}
\label{eq:horizon_system_clean}
M Y = B_{\mathrm{rhs}},
\qquad
M=
\begin{pmatrix}
I\\
-B(0) & I\\
& -B(1) & I\\
&& \ddots & \ddots\\
&&& -B(T-1) & I
\end{pmatrix},
\end{equation}
and conversely every solution of \eqref{eq:horizon_system_clean} satisfies \eqref{eq:truncated_lifted_euler_clean}.

\item \textbf{Conditioning under contractivity.}
Assume that
\begin{equation}
\label{eq:contractive_B_clean}
\sup_{0\le t\le T-1}\|B(t)\|_2\le \rho<1.
\end{equation}
Then the horizon matrix \(M\) in \eqref{eq:horizon_system_clean} is invertible and satisfies
\begin{equation}
\label{eq:M_norm_bound_clean}
\|M\|_2\le 1+\rho,
\qquad
\|M^{-1}\|_2\le \sum_{k=0}^{T}\rho^k\le \frac{1}{1-\rho},
\end{equation}
and therefore
\begin{equation}
\label{eq:kappa_bound_clean}
\kappa_2(M)
\le
(1+\rho)\sum_{k=0}^{T}\rho^k
\le
\min\!\left\{\frac{1+\rho}{1-\rho},\,2(T+1)\right\}.
\end{equation}
\end{enumerate}
\end{proposition}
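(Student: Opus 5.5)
The plan is to prove the three items in order, each by a direct algebraic computation; nothing earlier in the paper is needed except the definitions of $\Psi_t$, $y_j$, $B(t)$, $c(t)$ and $M$.

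\textbf{Item~(1).} I will expand
\[
y_j(t+1)=\Psi_t(v(t))^{\otimes j}=\Bigl(\sum_{\ell=0}^{D}Q_\ell^{(t)}v^{\otimes\ell}\Bigr)^{\otimes j}
\]
by multilinearity of the tensor product. This gives a sum over $\alpha\in\{0,\dots,D\}^j$ of the terms $Q_{\alpha_1}^{(t)}v^{\otimes\alpha_1}\otimes\cdots\otimes Q_{\alpha_j}^{(t)}v^{\otimes\alpha_j}$. By the mixed-product rule $(A\otimes B)(x\otimes y)=Ax\otimes By$ and associativity $v^{\otimes a}\otimes v^{\otimes b}=v^{\otimes(a+b)}$, each term equals $(Q_{\alpha_1}^{(t)}\otimes\cdots\otimes Q_{\alpha_j}^{(t)})\,v^{\otimes|\alpha|}$. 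For $\alpha_i=0$ I use the convention $v^{\otimes0}=1$, so that $Q_0^{(t)}$ acts as a constant vector. Grouping the terms by $s=|\alpha|\in\{0,\dots,jD\}$ then yields \eqref{eq:lifted_component_equation_clean} with the blocks \eqref{eq:discrete_carleman_blocks_clean}.

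\textbf{Item~(2).} I will read off the block rows of $MY$. The first block row gives $\hat y(0)=\hat y(0)$, which is the first entry of $B_{\mathrm{rhs}}$. Block row $t+1$, for $0\le t\le T-1$, gives $-B(t)\hat y(t)+\hat y(t+1)$, and requiring this to equal $c(t)$ is exactly \eqref{eq:truncated_lifted_euler_clean}. So the recurrence implies $MY=B_{\mathrm{rhs}}$. For the converse, take any solution $Z=(z_0,\dots,z_T)$. The same block rows give $z_0=\hat y(0)$ and $z_{t+1}=B(t)z_t+c(t)$. Hence $Z$ satisfies the recurrence, and by forward substitution it coincides with $Y$. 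In particular $M$, being block lower triangular with identity diagonal blocks, is invertible with no assumption on $\rho$.

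\textbf{Item~(3).} I will write $M=I-S$, where $S$ is the block subdiagonal shift sending block $t$ to block $t+1$ through $B(t)$.

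The key estimate, and the one step needing care, is $\|S\|_2\le\rho$. Since $S$ maps distinct input blocks into distinct, mutually orthogonal output blocks,
\[
\|SY\|_2^2=\sum_{t=0}^{T-1}\|B(t)y_t\|_2^2\le\rho^2\|Y\|_2^2 .
\]
This immediately gives $\|M\|_2\le1+\rho$.

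For the inverse, $S$ is nilpotent with $S^{T+1}=0$, because each application shifts by one block. So the Neumann series terminates, $M^{-1}=\sum_{k=0}^{T}S^k$, and
\[
\|M^{-1}\|_2\le\sum_{k=0}^T\|S\|_2^k\le\sum_{k=0}^T\rho^k\le\frac{1}{1-\rho}.
\]
Multiplying the two bounds gives $\kappa_2(M)\le(1+\rho)\sum_{k=0}^T\rho^k$. The minimum in \eqref{eq:kappa_bound_clean} follows because this product is bounded both by $(1+\rho)/(1-\rho)$ (geometric series) and by $2(T+1)$ (using $\rho^k\le1$ and $1+\rho\le2$).
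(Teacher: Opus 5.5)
Your proposal is correct and follows the paper's proof essentially step for step: the multilinear expansion of $\Psi_t(v)^{\otimes j}$ grouped by $|\alpha|$, a block-row reading of $MY=B_{\mathrm{rhs}}$ in both directions, and the splitting $M=I-S$ with nilpotent $S$ and a terminating Neumann series. Your extras are refinements of the same argument, not a different route. These are the orthogonal-block justification of $\|S\|_2\le\rho$, which the paper only asserts, and the remark that $M$ is invertible with a unique solution regardless of $\rho$.
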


\begin{proof}
For item~(1), since \(v(t+1)=\Psi_t(v(t))\), one has
\[
y_j(t+1)=\Psi_t(v(t))^{\otimes j}
=\Big(\sum_{\ell=0}^D Q_\ell^{(t)}v(t)^{\otimes \ell}\Big)^{\otimes j}.
\]
Expanding the \(j\)-fold tensor power and grouping terms by total degree \(s=|\alpha|\) yields
\eqref{eq:discrete_carleman_blocks_clean} and \eqref{eq:lifted_component_equation_clean}.

For item~(2), truncating the lifted recurrence to levels \(1,\dots,N\) gives
\(\hat y(t+1)=B(t)\hat y(t)+c(t)\).
Stacking these equations together with the initialization \(I\hat y(0)=\hat y(0)\) yields the block-lower-bidiagonal system \eqref{eq:horizon_system_clean}. The converse is immediate by reading the block rows of \(MY=B_{\mathrm{rhs}}\).

For item~(3), write \(M=I-L\), where \(L\) is the strict block-subdiagonal shift with
\((LY)_{t+1}=B(t)\hat y(t)\).
Then \(L^{T+1}=0\) and, by \eqref{eq:contractive_B_clean}, \(\|L\|_2\le \rho<1\).
Hence
\[
M^{-1}=(I-L)^{-1}=\sum_{k=0}^{T}L^k,
\qquad
\|M^{-1}\|_2\le \sum_{k=0}^{T}\rho^k\le \frac{1}{1-\rho}.
\]
Moreover \(\|M\|_2\le \|I\|_2+\|L\|_2\le 1+\rho\), so
\[
\kappa_2(M)
\le
(1+\rho)\sum_{k=0}^{T}\rho^k
\le
\min\!\left\{\frac{1+\rho}{1-\rho},\,2(T+1)\right\}.
\]
This proves the stated conditioning bounds.
\end{proof}

\begin{proposition}[Majorant-based sufficient conditions for contractivity]
\label{prop:contractivity_conditions_clean}
For each time $t$ and cutoff $N$, define the nonnegative scalar matrix
\[
\mathsf R^{(N)}(t)\in\mathbb R^{N\times N}
\]
by
\begin{equation}
\label{eq:contractive_majorant_clean}
\mathsf R^{(N)}_{j,s}(t)
:=
\sum_{\substack{\alpha\in\{0,1,\dots,D\}^{j}\\|\alpha|=s}}
\prod_{r=1}^{j}\|Q_{\alpha_r}^{(t)}\|_2,
\qquad 1\le j,s\le N.
\end{equation}
Then the following hold.
\begin{enumerate}
\item \textbf{General majorant criterion.}
The truncated lifted step matrix $B(t)$ satisfies
\begin{equation}
\label{eq:B_majorized_by_R_clean}
\|B(t)\|_2\le \|\mathsf R^{(N)}(t)\|_2.
\end{equation}
In particular, if
\begin{equation}
\label{eq:majorant_contractivity_clean}
\sup_{0\le t\le T-1}\|\mathsf R^{(N)}(t)\|_2\le \rho<1,
\end{equation}
then hypothesis \emph{(H1)} of Theorem~\ref{thm:one_shot_folded_clean} holds.

\item \textbf{Linear-dominant sufficient condition.}
For each $t$, define the diagonal scalar matrix
\begin{equation}
\label{eq:Rlin_clean}
\mathsf R_{\mathrm{lin}}^{(N)}(t)
:=
\operatorname{diag}\bigl(
\|Q_1^{(t)}\|_2,
\|Q_1^{(t)}\|_2^2,
\dots,
\|Q_1^{(t)}\|_2^N
\bigr)
\end{equation}
and the nonlinear remainder majorant
\begin{equation}
\label{eq:Rrem_clean}
\mathsf R_{\mathrm{rem}}^{(N)}(t)
:=
\mathsf R^{(N)}(t)-\mathsf R_{\mathrm{lin}}^{(N)}(t).
\end{equation}
Assume there exist numbers $\gamma\in(0,1)$ and $\sigma\in[0,\gamma)$ such that for every $t$,
\begin{equation}
\label{eq:linear_dominant_contractive_assumption_clean}
\|Q_1^{(t)}\|_2\le 1-\gamma,
\qquad
\|\mathsf R_{\mathrm{rem}}^{(N)}(t)\|_2\le \sigma.
\end{equation}
Then
\begin{equation}
\label{eq:linear_dominant_contractive_conclusion_clean}
\|B(t)\|_2\le 1-\gamma+\sigma<1
\qquad\text{for all }t,
\end{equation}
so \Cref{thm:one_shot_folded_clean} applies with \(\rho:=1-\gamma+\sigma\).
\end{enumerate}
\end{proposition}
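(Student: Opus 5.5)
The plan is to handle Item~(1) by a block-majorant argument and then derive Item~(2) from Item~(1) via a perturbation bound on the scalar majorant.

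\textbf{Item~(1).} Partition any $y\in\mathcal H_N$ into blocks $y=(y_1,\dots,y_N)$ with $y_s\in\mathbb R^{d^s}$, and form the scalar vector $\bar y:=(\|y_1\|_2,\dots,\|y_N\|_2)\in\mathbb R^N_{\ge0}$, so that $\|\bar y\|_2=\|y\|_2$. For each block row $j$,
\[
\|(B(t)y)_j\|_2\le\sum_{s=1}^N\|B_{j,s}(t)\|_2\,\|y_s\|_2 .
\]
Next bound each block norm. By the triangle inequality over the multi-indices in \eqref{eq:discrete_carleman_blocks_clean}, and multiplicativity of the operator $2$-norm under Kronecker products ($\|A\otimes C\|_2=\|A\|_2\|C\|_2$),
\[
\|B_{j,s}(t)\|_2=\|K_{j,s}(t)\|_2\le\mathsf R^{(N)}_{j,s}(t).
\]
Consequently $\|(B(t)y)_j\|_2\le(\mathsf R^{(N)}(t)\bar y)_j$ entrywise, with all entries nonnegative. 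Taking Euclidean norms gives
\[
\|B(t)y\|_2\le\|\mathsf R^{(N)}(t)\bar y\|_2\le\|\mathsf R^{(N)}(t)\|_2\,\|y\|_2,
\]
which is \eqref{eq:B_majorized_by_R_clean}. The ``in particular'' statement is immediate, since (H1) is the contractivity assumption \eqref{eq:contractive_B_clean}.

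\textbf{Item~(2).} Observe that $\mathsf R_{\mathrm{lin}}^{(N)}(t)$ is exactly the part of $\mathsf R^{(N)}(t)$ coming from the single multi-index $\alpha=(1,\dots,1)$. This multi-index contributes only to the diagonal entry $(j,j)$, with value $\|Q_1^{(t)}\|_2^j$. Hence $\mathsf R_{\mathrm{rem}}^{(N)}(t)$ is entrywise nonnegative; this is not needed, only the splitting is. By the triangle inequality,
\[
\|\mathsf R^{(N)}(t)\|_2\le\|\mathsf R_{\mathrm{lin}}^{(N)}(t)\|_2+\|\mathsf R_{\mathrm{rem}}^{(N)}(t)\|_2 .
\]
The first term is a diagonal matrix with norm $\max_j\|Q_1^{(t)}\|_2^j\le\max_j(1-\gamma)^j=1-\gamma$, because $0<1-\gamma<1$. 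The second term is at most $\sigma$. Combining with Item~(1) yields \eqref{eq:linear_dominant_contractive_conclusion_clean}, and $\sigma<\gamma$ gives $1-\gamma+\sigma<1$. Then Proposition~\ref{prop:carleman_horizon_cond_clean}(3) and the main theorem apply with $\rho=1-\gamma+\sigma$.

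\textbf{Main obstacle.} The only delicate point is the block-to-scalar reduction in Item~(1): one must check that the direct-sum Euclidean norm on $\mathcal H_N$ is exactly the Euclidean norm of the vector of block norms. One must also check that the majorant $\mathsf R^{(N)}$ has nonnegative entries, so that the entrywise inequality propagates through multiplication by $\mathsf R^{(N)}(t)$ and through the $2$-norm. The Kronecker norm identity and the triangle inequality over multi-indices are routine.
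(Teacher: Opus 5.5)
Your proposal is correct and follows essentially the same route as the paper. Item~(1) uses the same two steps: the blockwise bound $\|K_{j,s}(t)\|_2\le\mathsf R^{(N)}_{j,s}(t)$ from the Kronecker-norm identity and the triangle inequality, and then the reduction to the nonnegative vector of block norms. Item~(2) uses the same splitting $\mathsf R^{(N)}=\mathsf R^{(N)}_{\mathrm{lin}}+\mathsf R^{(N)}_{\mathrm{rem}}$, with $\|\mathsf R^{(N)}_{\mathrm{lin}}\|_2=\max_j\|Q_1^{(t)}\|_2^j\le 1-\gamma$, followed by the triangle inequality and Item~(1).
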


\begin{proof}
\textbf{Item~(1).}

Fix $t$. For each block, the definition of $K_{j,s}(t)$ together with the tensor-product norm identity gives
\[
\|B_{j,s}(t)\|_2
=
\|K_{j,s}(t)\|_2
\le
\sum_{\substack{\alpha\in\{0,\dots,D\}^{j}\\|\alpha|=s}}
\|Q_{\alpha_1}^{(t)}\otimes\cdots\otimes Q_{\alpha_j}^{(t)}\|_2
=
\mathsf R^{(N)}_{j,s}(t).
\]
Now let $y=(y_1,\dots,y_N)\in\mathcal H_N$ and define the nonnegative vector
\[
z:=(\|y_1\|_2,\dots,\|y_N\|_2)^\top\in\mathbb R^{N}_{\ge 0}.
\]
For each retained level $j$,
\[
\|(B(t)y)_j\|_2
\le
\sum_{s=1}^{N}\|B_{j,s}(t)\|_2\,\|y_s\|_2
\le
\sum_{s=1}^{N}\mathsf R^{(N)}_{j,s}(t)\,z_s
=
(\mathsf R^{(N)}(t)z)_j.
\]
Squaring and summing over $j$ yields
\[
\|B(t)y\|_2
\le
\|\mathsf R^{(N)}(t)z\|_2
\le
\|\mathsf R^{(N)}(t)\|_2\,\|z\|_2
=
\|\mathsf R^{(N)}(t)\|_2\,\|y\|_2.
\]
Taking the supremum over all nonzero $y$ proves \eqref{eq:B_majorized_by_R_clean}. The final implication is immediate.
\medskip\noindent\textbf{Item~(2).}

By construction,
\(
\mathsf R^{(N)}(t)=\mathsf R_{\mathrm{lin}}^{(N)}(t)+\mathsf R_{\mathrm{rem}}^{(N)}(t)
\).
Since the diagonal entries of \(\mathsf R_{\mathrm{lin}}^{(N)}(t)\) are
\(
\|Q_1^{(t)}\|_2^j
\) with \(1\le j\le N\), one has
\[
\|\mathsf R_{\mathrm{lin}}^{(N)}(t)\|_2
=
\max_{1\le j\le N}\|Q_1^{(t)}\|_2^j
\le
\|Q_1^{(t)}\|_2
\le 1-\gamma.
\]
Therefore,
\[
\|\mathsf R^{(N)}(t)\|_2
\le
\|\mathsf R_{\mathrm{lin}}^{(N)}(t)\|_2
+
\|\mathsf R_{\mathrm{rem}}^{(N)}(t)\|_2
\le 1-\gamma+\sigma.
\]
Applying item~(1) gives
\(
\|B(t)\|_2\le 1-\gamma+\sigma<1
\), as claimed.
\end{proof}
Item~(2) is most natural in the linear-dominant regime. For a purely linear map, $Q_{\ell}^{(t)}=0$ for all $\ell\neq 0,1$, so the condition reduces to the standard linear requirement $\|Q_1^{(t)}\|_2<1$. More generally, this is a local stability condition on a fixed training window: after translating to deviation coordinates and rescaling a small neighborhood, the linear block stays fixed while every coefficient of degree \(\ell\ge 2\) acquires a small-radius factor. If only segmentwise margins are available, see the segmented local-contractivity statement in the Extensions section.

\begin{proposition}[Coefficient sparsity, horizon sparsity, and row-access realization]
\label{prop:sparsity_oracle_clean}
For each time \(t\in\{0,\dots,T-1\}\) and each degree \(\ell\in\{0,\dots,D\}\), let
\(s_{\ell}^{(t)}\) denote a row-sparsity bound for the coefficient tensor \(Q_{\ell}^{(t)}\) when it is viewed as a matrix
\(\mathbb R^{d^\ell}\to \mathbb R^d\).
For \(1\le j,s\le N\), define
\begin{equation}
\label{eq:coeff_sparse_bound_clean}
S_{j,s}^{(t)}
:=
\sum_{\substack{\alpha\in\{0,\dots,D\}^{j}\\ |\alpha|=s}}
\prod_{r=1}^{j}s_{\alpha_r}^{(t)}.
\end{equation}
Then the following hold.
\begin{enumerate}
\item \textbf{Coefficient sparsity implies lifted-step sparsity.}
Every row of the Carleman block \(K_{j,s}(t)\) in \eqref{eq:discrete_carleman_blocks_clean} has at most
\(S_{j,s}^{(t)}\) nonzero entries. Consequently, every row in block row \(j\) of the truncated lifted step matrix
\(B(t)\) has at most
\(\sum_{s=1}^{N} S_{j,s}^{(t)}\)
nonzero entries, and therefore
\begin{equation}
\label{eq:sB_from_coeff_sparsity_clean}
s_B
\le
\max_{0\le t\le T-1}\;
\max_{1\le j\le N}
\sum_{s=1}^{N} S_{j,s}^{(t)}.
\end{equation}
If, in addition, \(s_{\ell}^{(t)}\le s_{\ast}\) for every \(\ell\) and every \(t\), then
\begin{equation}
\label{eq:sB_uniform_coeff_sparsity_clean}
s_B
\le
\max_{1\le j\le N}
\left(
 s_{\ast}^{\,j}\sum_{s=1}^{N} C_{j,s,D}
\right)
\le
\max_{1\le j\le N}
\left(
 s_{\ast}^{\,j}\Bigl(\binom{N+j}{j}-1\Bigr)
\right),
\end{equation}
where
\begin{equation}
\label{eq:CjsD_clean}
C_{j,s,D}:=
\bigl|
\{\alpha\in\{0,\dots,D\}^{j}:|\alpha|=s\}
\bigr|
\le
\binom{s+j-1}{j-1}.
\end{equation}

\item \textbf{Lifted-step sparsity implies horizon sparsity.}
If each truncated lifted step matrix \(B(t)\) is \(s_B\)-row sparse, then the horizon matrix \(M\) in \eqref{eq:horizon_system_clean} has row sparsity
\begin{equation}
\label{eq:sM_clean}
s_{\mathrm{row}}(M)\le s_B+1.
\end{equation}

\item \textbf{Optional row-access realization.}
Assume each truncated lifted step matrix \(B(t)\) is supplied by standard row-sparse access oracles and has row sparsity at most \(s_B\). Let
\[
\overline M:=\frac{1}{1+\rho}M.
\]
Then \(\overline M\) admits standard row-sparse access oracles with row sparsity at most \(s_B+1\). More precisely, one query to a row-sparse access oracle for \(\overline M\) can be implemented using \(O(1)\) arithmetic on the block/time indices together with \(O(1)\) calls to the row-sparse access oracle for the active truncated lifted step matrix \(B(t)\).
\end{enumerate}
\end{proposition}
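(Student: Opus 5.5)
The proof handles the three items in order. It uses only the block structure from Item~(2) of \Cref{prop:carleman_horizon_cond_clean} and elementary counting for Kronecker products.

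\textbf{Item~(1).} Fix a row of $K_{j,s}(t)$, indexed by a multi-index $(i_1,\dots,i_j)\in[d]^j$. For a single term $Q_{\alpha_1}^{(t)}\otimes\cdots\otimes Q_{\alpha_j}^{(t)}$, this row is the Kronecker product of the rows $i_r$ of the factors. Its support is therefore the Cartesian product of the individual row supports, which has size at most $\prod_r s_{\alpha_r}^{(t)}$. The support of a sum is contained in the union of the supports, so summing over all $\alpha$ with $|\alpha|=s$ gives at most $S_{j,s}^{(t)}$ nonzeros.

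A row in block row $j$ of $B(t)$ is the concatenation of the corresponding rows of $B_{j,s}(t)=K_{j,s}(t)$ for $s=1,\dots,N$. Its count is thus bounded by $\sum_{s=1}^N S_{j,s}^{(t)}$, and maximizing over $j$ and $t$ gives \eqref{eq:sB_from_coeff_sparsity_clean}.

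Under the uniform bound $s_\ell^{(t)}\le s_\ast$, each product is at most $s_\ast^j$, so $S_{j,s}^{(t)}\le s_\ast^j C_{j,s,D}$. Dropping the cap $\alpha_r\le D$ shows that $C_{j,s,D}$ is at most the number of weak compositions of $s$ into $j$ parts, namely $\binom{s+j-1}{j-1}$. The hockey-stick identity then gives
\[
\sum_{s=1}^N\binom{s+j-1}{j-1}=\binom{N+j}{j}-1,
\]
which is the second bound in \eqref{eq:sB_uniform_coeff_sparsity_clean}.

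\textbf{Item~(2).} This is immediate from the shape of $M$ in \eqref{eq:horizon_system_clean}. A row in block row $0$ contains only the identity entry. A row in block row $t+1$ contains one identity entry together with the nonzeros of the corresponding row of $-B(t)$, and these lie in a disjoint column block. Each row therefore has at most $s_B+1$ nonzeros.

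\textbf{Item~(3).} I will describe the oracle explicitly; this step has no analytic difficulty, only bookkeeping.

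\emph{Indexing.} Index the rows of $\overline M$ by pairs $(t',i)$ with $t'\in\{0,\dots,T\}$ and $i\in[\Delta_N]$. Rescaling by $1/(1+\rho)$ changes entries but not supports.

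\emph{Entry oracle.} Given a row $(t',i)$ and a column $(t'',k)$:
\begin{itemize}
\item if $t''=t'$, return $\delta_{ik}/(1+\rho)$;
\item if $t'\ge1$ and $t''=t'-1$, query the entry oracle of $B(t'-1)$ at $(i,k)$ and return $-B(t'-1)_{ik}/(1+\rho)$;
\item otherwise return $0$.
\end{itemize}
Each case needs one index comparison and at most one call to a $B$ oracle.

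\emph{Row-position oracle.} Order the nonzeros of row $(t',i)$ with the $s_B$ slots of $-B(t'-1)$ first, followed by the diagonal slot.
\begin{itemize}
\item For a slot $\ell\le s_B$, call the row oracle of $B(t'-1)$ on $(i,\ell)$ and attach the time label $t'-1$ to the returned column.
\item For slot $\ell=s_B+1$, return $(t',i)$.
\end{itemize}
For $t'=0$ the $B$-slots return a null marker, following the padding convention of the assumed oracle model.

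\emph{Points to verify.} The diagonal column $(t',i)$ cannot collide with a $B$-column, since their time labels differ. I also need to check that the padded (dummy) slots match the standard sparse-access convention used in (H3). This verification is the only step requiring care.

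The $1/(1+\rho)$ scaling is presumably chosen so that $\|\overline M\|_2\le1$ by \eqref{eq:M_norm_bound_clean}, as block-encodings require. However, this normalization is not claimed in the present item, so I will only remark on it.
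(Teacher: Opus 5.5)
Your proposal is correct and follows the paper's proof essentially step for step. The steps that match are the Cartesian-product support count for rows of each Kronecker summand with a union bound over $\alpha$, concatenation across the disjoint column blocks of block row $j$, stars-and-bars plus the hockey-stick identity, and direct inspection of the block rows of $M$. For Item~(3) you give a more explicit oracle construction than the paper: separate entry and position oracles, a fixed slot ordering, and a padding convention for $t'=0$. These are bookkeeping details the paper leaves implicit.
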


\begin{proof}
We prove the three items in order.

\medskip\noindent\textbf{Item~(1).}
Fix \(t,j,s\). By \eqref{eq:discrete_carleman_blocks_clean},
\[
K_{j,s}(t)
=
\sum_{\substack{\alpha\in\{0,\dots,D\}^{j}\\ |\alpha|=s}}
Q_{\alpha_1}^{(t)}\otimes\cdots\otimes Q_{\alpha_j}^{(t)}.
\]
Consider one summand \(Q_{\alpha_1}^{(t)}\otimes\cdots\otimes Q_{\alpha_j}^{(t)}\).
A row of this tensor product is indexed by a tuple of output coordinates \((i_1,\dots,i_j)\), and its nonzero locations are Cartesian products of the nonzero locations from the corresponding rows of the factors. Hence that row contains at most
\(
\prod_{r=1}^{j}s_{\alpha_r}^{(t)}
\)
nonzero entries.
Summing over all admissible multi-indices \(\alpha\) and using a union bound on supports gives the row-sparsity estimate \eqref{eq:coeff_sparse_bound_clean} for \(K_{j,s}(t)\).

Now fix a block row \(j\) of the truncated lifted step matrix \(B(t)\). The blocks
\(K_{j,1}(t),\dots,K_{j,N}(t)\) occupy disjoint column ranges, so a row in that block row is obtained by concatenating the corresponding rows of these blocks. Hence
\[
\operatorname{rowsp}\bigl(B(t)\text{ on block row }j\bigr)
\le
\sum_{s=1}^{N} S_{j,s}^{(t)}.
\]
Taking the maximum over retained levels and times yields \eqref{eq:sB_from_coeff_sparsity_clean}.

If \(s_{\ell}^{(t)}\le s_{\ast}\) uniformly, then each product in \eqref{eq:coeff_sparse_bound_clean} is at most \(s_{\ast}^{\,j}\), and the number of admissible multi-indices is \(C_{j,s,D}\). Hence
\[
\sum_{s=1}^{N} S_{j,s}^{(t)}
\le
s_{\ast}^{\,j}\sum_{s=1}^{N} C_{j,s,D},
\]
which proves the first inequality in \eqref{eq:sB_uniform_coeff_sparsity_clean}. The estimate \eqref{eq:CjsD_clean} is the standard stars-and-bars upper bound obtained by dropping the upper constraint \(\alpha_r\le D\). Summing that bound over \(s=1,\dots,N\) and using the hockey-stick identity gives
\[
\sum_{s=1}^{N}\binom{s+j-1}{j-1}
=
\binom{N+j}{j}-1,
\]
which proves the second inequality in \eqref{eq:sB_uniform_coeff_sparsity_clean}.

\medskip\noindent\textbf{Item~(2).}
Each row of a diagonal identity block contributes exactly one nonzero entry, while each row of a subdiagonal block \(-B(t)\) contains at most \(s_B\) nonzeros. Therefore a row of the full horizon matrix belongs either to the first block row, where it contains only the identity, or to a later block row, where it contains one diagonal identity entry together with one row from \(-B(t)\). Hence every row of \(M\) contains at most \(s_B+1\) nonzeros, proving \eqref{eq:sM_clean}.

\medskip\noindent\textbf{Item~(3).}
A row of \(M\in\mathbb{R}^{(T+1)\Delta_N\times (T+1)\Delta_N}\) is indexed by a pair
\[
(t,r),\qquad
t\in\{0,1,\dots,T\},\quad r\in\{1,2,\dots,\Delta_N\},
\]
where \(t\) is the block-row index and \(r\) is the in-block row index. If \(t=0\), the corresponding block row is the identity block, so row \((0,r)\) has exactly one nonzero entry at column \((0,r)\), and the rescaled row of \(\overline M\) still has exactly one nonzero entry. If \(t\ge 1\), then block row \(t\) contains the diagonal identity block in column block \(t\) and the strict subdiagonal block \(-B(t-1)\) in column block \(t-1\). Thus row \((t,r)\) consists of the diagonal entry at column \((t,r)\) together with the entries of row \(r\) of \(-B(t-1)\), so it has at most \(s_B+1\) nonzeros.

Given \((t,r)\), the optional row-sparse access oracle for \(\overline M\) therefore returns:
\begin{itemize}
\item if \(t=0\), the single diagonal entry in column \((0,r)\);
\item if \(t\ge 1\), the diagonal entry in column \((t,r)\) and the at most \(s_B\) nonzero entries of row \(r\) of \(-B(t-1)\) in column block \(t-1\).
\end{itemize}
The latter are obtained from the row-sparse access oracle for \(B(t-1)\). The final rescaling by \((1+\rho)^{-1}\) multiplies all returned values by a scalar and preserves sparsity. Thus one query to the row-sparse access oracle for \(\overline M\) uses only \(O(1)\) extra arithmetic on the block/time indices together with \(O(1)\) calls to the row-sparse access oracle of the active truncated lifted step matrix.
\end{proof}
Proposition~\ref{prop:sparsity_oracle_clean} therefore gives one row-sparse realization of the rescaled horizon matrix, although Theorem~\ref{thm:one_shot_folded_clean} still treats the final solver access model abstractly through hypothesis (H3).

\subsubsection{Truncation remainder and Carleman error}
\label{subsubsec:tail_truncation_clean}

The exact truncated lifted coordinates \(y^{(N)}(t)\) and the truncated lifted state \(\hat y(t)\)
coincide at time \(t=0\), but evolve differently thereafter because the truncated recurrence omits all
couplings from truncated levels into degrees above \(N\).
We now quantify that omission.

\begin{theorem}[Truncation error and cutoff design]
\label{thm:truncation_cutoff_clean}
Assume that the exact trajectory of the polynomial update model satisfies
\begin{equation}
\label{eq:vbar_clean}
\|v(t)\|_2\le \bar v<1,
\qquad t=0,1,\dots,T.
\end{equation}
For each \(j\in\{1,\dots,N\}\), define
\[
r_j(t):=\sum_{s=N+1}^{jD}K_{j,s}(t)\,y_s(t),
\qquad
r(t):=\big(r_1(t),r_2(t),\dots,r_N(t)\big)\in\mathcal H_N,
\]
and
\begin{equation}
\label{eq:GammaN_clean}
\Gamma_N
:=
\max_{0\le t\le T-1}
\left(
\sum_{j=1}^{N}
\left[
\sum_{s=N+1}^{jD}\|K_{j,s}(t)\|_2\,\bar v^{s}
\right]^2
\right)^{1/2}.
\end{equation}
Then the following hold.
\begin{enumerate}
\item \textbf{Exact truncated recurrence with tail forcing.}
The exact truncated lifted coordinates satisfy
\begin{equation}
\label{eq:exact_retained_recurrence_clean}
y^{(N)}(t+1)=B(t)\,y^{(N)}(t)+c(t)+r(t),
\qquad t=0,1,\dots,T-1,
\end{equation}
the truncation error
\[
\eta(t):=y^{(N)}(t)-\hat y(t)
\]
obeys
\begin{equation}
\label{eq:error_recurrence_clean}
\eta(t+1)=B(t)\eta(t)+r(t),
\qquad \eta(0)=0,
\end{equation}
and the tail forcing satisfies
\begin{equation}
\label{eq:tail_forcing_bound_clean}
\|r(t)\|_2\le \Gamma_N,
\qquad t=0,1,\dots,T-1.
\end{equation}

\item \textbf{Pointwise and stacked truncation error.}
Assume in addition \eqref{eq:contractive_B_clean}. Then for every \(t\in\{0,1,\dots,T\}\),
\begin{equation}
\label{eq:pointwise_truncation_bound_clean}
\|\eta(t)\|_2
\le
\frac{\Gamma_N}{1-\rho},
\end{equation}
and if
\[
Y_N^{\mathrm{ex}}:=\big(y^{(N)}(0),y^{(N)}(1),\dots,y^{(N)}(T)\big)\in\mathbb{R}^{N_h},
\qquad
\hat Y:=\big(\hat y(0),\hat y(1),\dots,\hat y(T)\big)\in\mathbb{R}^{N_h},
\]
then
\begin{equation}
\label{eq:horizon_truncation_bound_clean}
\|Y_N^{\mathrm{ex}}-\hat Y\|_2
\le
\sqrt{T+1}\,\frac{\Gamma_N}{1-\rho}.
\end{equation}

\item \textbf{Abstract cutoff design inequality.}
Let \(\varepsilon_{\mathrm{tr}}\in(0,1)\) be a target budget for the stacked truncation error in \eqref{eq:horizon_truncation_bound_clean}. It suffices to choose \(N\) so that
\begin{equation}
\label{eq:stacked_design_ineq_clean}
\sqrt{T+1}\,\frac{\Gamma_N}{1-\rho}
\le
\varepsilon_{\mathrm{tr}}.
\end{equation}
Equivalently, any explicit upper bound on \(\Gamma_N\) that makes \eqref{eq:stacked_design_ineq_clean} hold is sufficient for the main theorem.

\item \textbf{Explicit cutoff under a weighted coefficient sum.}
Fix \(\lambda>1\) and for each time index \(t\), define the weighted coefficient sum
\begin{equation}
\label{eq:weighted_coeff_sum_clean}
S_t(\lambda):=\sum_{\ell=0}^{D}\|Q_{\ell}^{(t)}\|_2\,(\lambda \bar v)^\ell.
\end{equation}
Assume additionally that
\begin{equation}
\label{eq:weighted_coeff_uniform_clean}
\sup_{0\le t\le T-1} S_t(\lambda)\le \chi<1.
\end{equation}
Then
\begin{equation}
\label{eq:GammaN_weighted_bound_clean}
\Gamma_N
\le
\lambda^{-(N+1)}\frac{\chi}{\sqrt{1-\chi^2}},
\end{equation}
and consequently a sufficient explicit cutoff choice is
\begin{equation}
\label{eq:weighted_cutoff_design_clean}
N\ge
\max\!\left\{
1,\;
\left\lceil
\frac{
\log\!\Big(
\frac{\sqrt{T+1}}{(1-\rho)\varepsilon_{\mathrm{tr}}}
\frac{\chi}{\sqrt{1-\chi^2}}
\Big)
}
{\log \lambda}
-1
\right\rceil
\right\},
\end{equation}
which guarantees \eqref{eq:stacked_design_ineq_clean}.
\end{enumerate}
\end{theorem}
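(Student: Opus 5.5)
The plan is to derive all four items from the exact lifted recurrence \eqref{eq:lifted_component_equation_clean} of Proposition~\ref{prop:carleman_horizon_cond_clean}. Every estimate then reduces to bounding tensor powers $\|y_s(t)\|_2=\|v(t)\|_2^s\le\bar v^s$ (Euclidean norms are multiplicative on tensor products) and to the contractivity $\|B(t)\|_2\le\rho$.

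\textbf{Item~(1).} Fix $j\le N$ and split the sum $\sum_{s=0}^{jD}K_{j,s}(t)y_s(t)$ into three ranges:
\begin{itemize}
\item $s=0$, which gives $c_j(t)$ because $y_0=1$;
\item $1\le s\le N$, which gives block row $j$ of $B(t)y^{(N)}(t)$;
\item $N+1\le s\le jD$, which is $r_j(t)$ (an empty sum when $jD\le N$).
\end{itemize}
This yields \eqref{eq:exact_retained_recurrence_clean}. Subtracting the truncated recurrence \eqref{eq:truncated_lifted_euler_clean} cancels $c(t)$ and gives \eqref{eq:error_recurrence_clean}. The initial condition $\eta(0)=0$ holds because $\hat y(0)=y^{(N)}(0)$. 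For the tail, the triangle inequality and \eqref{eq:vbar_clean} give
\[
\|r_j(t)\|_2\le\sum_{s=N+1}^{jD}\|K_{j,s}(t)\|_2\,\bar v^{s}.
\]
Squaring, summing over $j$, and taking the maximum over $t$ gives $\|r(t)\|_2\le\Gamma_N$.

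\textbf{Items~(2) and~(3).} Unroll the error recurrence:
\[
\eta(t)=\sum_{k=0}^{t-1}B(t-1)\cdots B(k+1)\,r(k).
\]
Under \eqref{eq:contractive_B_clean} this gives
\[
\|\eta(t)\|_2\le\Gamma_N\sum_{i=0}^{t-1}\rho^{i}\le\Gamma_N/(1-\rho),
\]
which is \eqref{eq:pointwise_truncation_bound_clean}. The stacked bound \eqref{eq:horizon_truncation_bound_clean} follows because $\|Y_N^{\mathrm{ex}}-\hat Y\|_2^2=\sum_{t=0}^{T}\|\eta(t)\|_2^2$ has $T+1$ terms, each at most $\Gamma_N^2/(1-\rho)^2$. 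Item~(3) is then a restatement: if \eqref{eq:stacked_design_ineq_clean} holds, the stacked error is at most $\varepsilon_{\mathrm{tr}}$.

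\textbf{Item~(4).} This is the only step needing a real idea: a generating-function bound on the tail. First, the triangle inequality and multiplicativity of the operator norm over tensor products give
\[
\|K_{j,s}(t)\|_2\le\sum_{|\alpha|=s}\prod_{r=1}^{j}\|Q^{(t)}_{\alpha_r}\|_2.
\]
Next, since $\lambda>1$ and $s\ge N+1$, write
\[
\bar v^{s}=\lambda^{-s}(\lambda\bar v)^{s}\le\lambda^{-(N+1)}(\lambda\bar v)^{s}.
\]
Extend the sum to all $s\in\{0,\dots,jD\}$; every term is nonnegative, so this only increases it. The key observation is that
\[
\sum_{\alpha\in\{0,\dots,D\}^j}\prod_{r=1}^{j}\|Q^{(t)}_{\alpha_r}\|_2(\lambda\bar v)^{\alpha_r}=S_t(\lambda)^j,
\]
because this is exactly the expansion of the $j$-th power of \eqref{eq:weighted_coeff_sum_clean}. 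Therefore
\[
\sum_{s=N+1}^{jD}\|K_{j,s}(t)\|_2\,\bar v^s\le\lambda^{-(N+1)}S_t(\lambda)^j\le\lambda^{-(N+1)}\chi^j.
\]
Squaring and summing over $j\ge1$ uses $\sum_{j\ge1}\chi^{2j}=\chi^2/(1-\chi^2)$, which gives \eqref{eq:GammaN_weighted_bound_clean}.

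Finally, substitute this bound into \eqref{eq:stacked_design_ineq_clean}. It suffices that
\[
\lambda^{N+1}\ge\frac{\sqrt{T+1}}{(1-\rho)\varepsilon_{\mathrm{tr}}}\cdot\frac{\chi}{\sqrt{1-\chi^2}}.
\]
Taking logarithms, this holds if $N+1\ge\log(\cdot)/\log\lambda$, which is the cutoff \eqref{eq:weighted_cutoff_design_clean}. The $\max\{1,\cdot\}$ only enforces $N\ge1$, and a larger $N$ only strengthens the inequality.

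I expect the main care to lie in the generating-function step: enlarging the tail sum to the full multi-index sum is legitimate only because all terms are nonnegative, and $\lambda>1$ is used only to pull out the factor $\lambda^{-(N+1)}$.
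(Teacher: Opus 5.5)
Your proposal is correct and follows the paper's proof essentially step for step. It uses the same three-way split $s=0$, $1\le s\le N$, $s\ge N+1$ of the exact lifted recurrence, and the same unrolled contractive-forcing estimate for $\eta(t)$ summed over the $T+1$ blocks; it also writes $\bar v^{s}\le\lambda^{-(N+1)}(\lambda\bar v)^{s}$, enlarges the nonnegative tail sum to $S_t(\lambda)^j$, and finishes with the geometric series in $\chi^2$ and the logarithmic inversion for the cutoff, exactly as the paper does.
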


\begin{proof}
For item~(1), Item~(1) of \Cref{prop:carleman_horizon_cond_clean} gives
\[
y_j(t+1)=\sum_{s=0}^{jD}K_{j,s}(t)y_s(t).
\]
Splitting the sum into the constant part \(s=0\), the retained levels \(1\le s\le N\), and the omitted tail \(s\ge N+1\) yields \eqref{eq:exact_retained_recurrence_clean}. Subtracting \eqref{eq:truncated_lifted_euler_clean} from this identity gives \eqref{eq:error_recurrence_clean}. Moreover,
\[
\|r_j(t)\|_2
\le
\sum_{s=N+1}^{jD}\|K_{j,s}(t)\|_2\,\|y_s(t)\|_2
=
\sum_{s=N+1}^{jD}\|K_{j,s}(t)\|_2\,\|v(t)\|_2^s
\le
\sum_{s=N+1}^{jD}\|K_{j,s}(t)\|_2\,\bar v^s,
\]
so, after squaring and summing over \(j=1,\dots,N\), one obtains \eqref{eq:tail_forcing_bound_clean}.

For item~(2), under \eqref{eq:contractive_B_clean}, iterating \eqref{eq:error_recurrence_clean} gives
\[
\eta(t)=\sum_{i=0}^{t-1}\bigl(B(t-1)\cdots B(i+1)\bigr)r(i),
\]
hence
\[
\|\eta(t)\|_2
\le
\sum_{i=0}^{t-1}\rho^{t-1-i}\|r(i)\|_2
\le
\frac{\Gamma_N}{1-\rho},
\]
which proves \eqref{eq:pointwise_truncation_bound_clean}. Summing this pointwise bound over \(t=0,\dots,T\) gives \eqref{eq:horizon_truncation_bound_clean}. Item~(3) is exactly the design inequality obtained by requiring \eqref{eq:horizon_truncation_bound_clean} to be at most \(\varepsilon_{\mathrm{tr}}\).

For item~(4), fix \(t\) and \(j\le N\). By definition of \(K_{j,s}(t)\),
\[
\sum_{s=N+1}^{jD}\|K_{j,s}(t)\|_2\bar v^s
\le
\sum_{s=N+1}^{jD}
\sum_{\substack{\alpha\in\{0,\dots,D\}^j\\|\alpha|=s}}
\prod_{r=1}^j\|Q_{\alpha_r}^{(t)}\|_2\bar v^{\alpha_r}.
\]
Since \(s\ge N+1\) and \(\lambda>1\), one has \(1\le \lambda^{s-(N+1)}\), so
\[
\begin{aligned}
\sum_{s=N+1}^{jD}\|K_{j,s}(t)\|_2\bar v^s
&\le
\lambda^{-(N+1)}
\sum_{s=N+1}^{jD}
\sum_{\substack{\alpha\in\{0,\dots,D\}^j\\|\alpha|=s}}
\prod_{r=1}^j\|Q_{\alpha_r}^{(t)}\|_2(\lambda\bar v)^{\alpha_r} \\
&\le
\lambda^{-(N+1)}
\sum_{\alpha\in\{0,\dots,D\}^j}
\prod_{r=1}^j\|Q_{\alpha_r}^{(t)}\|_2(\lambda\bar v)^{\alpha_r} \\
&=
\lambda^{-(N+1)}S_t(\lambda)^j.
\end{aligned}
\]
Using \eqref{eq:weighted_coeff_uniform_clean} and summing over \(j\) yields
\eqref{eq:GammaN_weighted_bound_clean}; substituting this into \eqref{eq:stacked_design_ineq_clean} gives \eqref{eq:weighted_cutoff_design_clean}.
\end{proof}
\subsection{QLSA complexity}
\begin{theorem}[QLSA in the input-state model]
\label{thm:qlsa_sparse}
Let $A\in\mathbb{R}^{N\times N}$ be invertible with $\|A\|_2\le 1$, and let $\kappa=\kappa_2(A)$. Assume that $A$ is supplied in the access model required by the chosen QLSA, with effective access parameter $s_A$, and let $C_A$ denote the gate cost of one query to that access model. Suppose the normalized right-hand side is supplied as an already prepared input state $|b\rangle$ proportional to $b\in\mathbb{R}^N$. Then there exists a quantum algorithm which, with success probability at least $2/3$, prepares a state $|\tilde x\rangle$ satisfying $\||\tilde x\rangle-|x\rangle\|_2\le \delta$, where $|x\rangle\propto A^{-1}b$, using
\[
\widetilde{\mathcal O}\!\left(s_A\,\kappa\,\polylog\!\frac{N}{\delta}\right)
\]
queries to the chosen solver access model and gate complexity
\[
\widetilde{\mathcal O}\!\left(s_A\,\kappa\,C_A\,\polylog\!\frac{N}{\delta}\right).
\]
\end{theorem}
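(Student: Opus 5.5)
This result is a wrapper around known quantum linear-system algorithms, so the proof reduces it to a standard optimal-scaling solver and records how the access parameters enter. I would invoke the discrete adiabatic/Zeno-based QLSA of Costa et al. (or, alternatively, the Childs--Kothari--Somma LCU/Fourier--Chebyshev solver combined with variable-time amplitude amplification). Both take as input a block-encoding of $A$ and a state-preparation unitary for $\ket b$. The plan has three steps:
\begin{enumerate}
\item build a block-encoding of $A$ from the sparse/access model;
\item run the solver with the stated $\kappa$ and precision;
\item boost the success probability and account for gate cost.
\end{enumerate}

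\textbf{Step 1: block-encoding.} Since $\|A\|_2\le 1$ and $A$ is $s_A$-row/column sparse with oracle access, the standard sparse-access construction (Gily\'en--Su--Low--Wiebe) gives an $(s_A,\,O(\log N),\,\epsilon')$-block-encoding of $A$. This uses $O(1)$ oracle queries plus $\polylog(N/\epsilon')$ additional gates. Rescaling, $A/s_A$ is block-encoded with subnormalization $s_A$. The effective condition parameter seen by the solver is therefore at most $s_A\kappa$, because the smallest singular value of $A/s_A$ is at least $1/(s_A\kappa)$. If $A$ is not Hermitian, I would first apply the standard dilation $\begin{pmatrix}0&A\\A^\dagger&0\end{pmatrix}$. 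This doubles the dimension and preserves both sparsity and condition number, and the solution is read from the second register.

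\textbf{Step 2: run the solver.} The optimal-scaling QLSA prepares $\ket{\tilde x}$ within $\delta$ of $\ket x$ using $O\bigl(s_A\kappa\log(1/\delta)\bigr)$ calls to the block-encoding and to the preparation unitary of $\ket b$. Since the input is an already prepared state, each use of $\ket b$ is counted as the preparation oracle. If only a single copy of $\ket b$ is available, one must instead use reflections built from the preparation routine. The block-encoding error $\epsilon'$ propagates linearly with at most a factor $\mathrm{poly}(s_A\kappa)$. Choosing $\epsilon'=\delta/\mathrm{poly}(s_A\kappa)$ therefore costs only $\polylog$ factors, which the $\widetilde{\mathcal O}$ notation absorbs.

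\textbf{Step 3: success probability and gate count.} The Zeno/filtering step succeeds with constant probability. A constant number of repetitions with a flag-qubit check raises the success probability to $2/3$. Multiplying the query count by the per-query gate cost $C_A$ and adding the $\polylog(N/\delta)$ gate overhead per query gives the stated gate complexity.

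The main obstacle is bookkeeping rather than mathematics: keeping the dependence exactly linear in $s_A\kappa$ without an extra $\kappa$ or $s_A$ factor. Such factors would arise from (a) naive amplitude amplification in HHL, giving $\kappa^2$, and (b) the subnormalization $s_A$ multiplying $\kappa$ a second time. I would resolve (a) by committing to the adiabatic/variable-time solver. For (b) I would argue that the block-encoding normalization enters only through the product $s_A\kappa$, which the statement already absorbs.
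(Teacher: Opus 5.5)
Your route (build a block-encoding from the sparse oracle, dilate if needed, then invoke an optimal-scaling solver) is the same as the paper's. The paper gives no argument beyond citing Childs--Kothari--Somma and a Hermitian-embedding/QSVT construction. There is, however, a genuine gap in how you use $\ket{b}$. The statement supplies $\ket{b}$ only as an already prepared input state, and its gate bound $\widetilde{\mathcal O}(s_A\kappa C_A\polylog(N/\delta))$ contains no preparation term. This is exactly what lets the main theorem charge $C_{\mathrm{prep}}(B_{\mathrm{rhs}})$ once, additively. Yet every solver you name needs the preparation unitary $U_b$ and $U_b^\dagger$ inside the algorithm. The discrete adiabatic walk of Costa et al.\ and its eigenstate filter are built from $I-\ket{b}\bra{b}$. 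Variable-time amplitude amplification reflects about $\ket{b}$. Each repetition in your Step~3 needs a fresh copy. So Step~2 makes $\mathcal O(s_A\kappa\log(1/\delta))$ calls to $U_b$, and your fallback ``use reflections built from the preparation routine'' just reintroduces that routine. Step~3 then drops its cost. What your argument actually proves is the standard statement with oracle access to $U_b$ and gate complexity $\widetilde{\mathcal O}\bigl(s_A\kappa\,(C_A+C_{\mathrm{prep}})\polylog(N/\delta)\bigr)$.

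This cannot be repaired within the single-copy model, because there the statement is false. Take $A=\mathrm{diag}(1,1/\kappa)$, which is $1$-sparse with $\|A\|_2=1$ and $\kappa_2(A)=\kappa$. Take the inputs $b_0=e_1$ and $b_1=\cos(1/\kappa)\,e_1+\sin(1/\kappa)\,e_2$. Their trace distance is $\sin(1/\kappa)\le 1/\kappa$, while the normalized solutions, $e_1$ and approximately $(e_1+e_2)/\sqrt2$, are at trace distance about $1/\sqrt2$. A procedure that consumes one copy of $\ket{b}$, with any number of $A$-queries, is a fixed channel and therefore contracts trace distance. But succeeding with probability $2/3$ and accuracy $\delta\le 0.1$ on both inputs would force its flagged outputs to be at trace distance at least roughly $\tfrac23(1/\sqrt2-2\delta)-1/\kappa$. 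That is a contradiction for large $\kappa$. In general, $\Theta(1/\sqrt p)$ calls to $U_b$ are necessary, where $p=\|A^{-1}b\|_2^2/(\|A^{-1}\|_2^2\|b\|_2^2)$, and $1/\sqrt p$ can be as large as $\kappa$ (Low--Su). You should therefore state the theorem with $U_b$ access and the corrected gate bound. You should also note that $C_{\mathrm{prep}}(B_{\mathrm{rhs}})$ in the main theorem gets multiplied by the number of such calls: at least $1/\sqrt p$, and $\widetilde{\mathcal O}(s_M\kappa_2(M))$ for the solvers you cite.

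A smaller point: your claim $\sigma_{\min}(A/s_A)\ge 1/(s_A\kappa)$ needs $\|A^{-1}\|_2\le\kappa$. That does not follow from $\|A\|_2\le1$ with $\kappa=\|A\|_2\|A^{-1}\|_2$ unless $\|A\|_2$ is bounded below. For $\overline M$ this costs only a factor $2$, since $\|M\|_2\ge 1$ gives $\|\overline M\|_2\ge 1/(1+\rho)$.
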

This is the only linear-system routine used below. For Hermitian sparse systems, one concrete realization is the Childs--Kothari--Somma framework~\cite{childsQuantumAlgorithmSystems2017}; for general non-Hermitian systems, one common route is to pass through a Hermitian embedding or a block-encoding construction~\cite{gilyen2019qsvt}. Hypothesis (H3) states the solver model assumed for the horizon matrix, and $C_{\mathrm{prep}}(B_{\mathrm{rhs}})$ records the one-time cost of preparing the normalized right-hand side.

Under the assumptions of \Cref{thm:qlsa_sparse}, the total qubit cost to implement QLSA is upper bounded by
\[
\left\lceil \log_2 N\right\rceil
+
\mathcal{O}\!\left(\log\kappa+\log\frac{1}{\delta}+a_A\right),
\]
where the first term is the index/solution register and $a_A$ is the ancilla overhead required by the chosen
QLSA-access implementation of $A$.
Standard QLSA constructions in the input-state model use an $O(\log N)$-qubit system register, plus polylogarithmic workspace for phase estimation,
polynomial approximation, and success amplification; all are logarithmic in $\kappa$ and $1/\delta$ up to constant factors,
plus whatever ancillas are needed to implement the chosen QLSA-access routine for $A$.

\subsection{Quantum preparation of the training trajectory and final-state readout}

\begin{theorem}[Finite-step theorem for quantum trajectory preparation and final-state readout]
\label{thm:one_shot_folded_clean}
Consider the deterministic polynomial update model
\[
v(t+1)=\Psi_t(v(t)),
\qquad
\Psi_t(v)=\sum_{\ell=0}^{D}Q_{\ell}^{(t)}v^{\otimes \ell},
\]
defined by \eqref{eq:delta_folded_map}--\eqref{eq:full_step_vector_fields_clean}, with Carleman cutoff \(N\), lifted dimension
\(\Delta_N=\sum_{j=1}^{N}d^j\), and horizon dimension \(N_h=(T+1)\Delta_N\).
Let \(y^{(N)}(t)\) denote the exact truncated lifted coordinates of the trajectory generated by this model,
and let \(\hat y(t)\) denote the solution of the truncated lifted recurrence
\eqref{eq:truncated_lifted_euler_clean}.
Define the stacked vectors
\[
Y_N^{\mathrm{ex}}:=\big(y^{(N)}(0),y^{(N)}(1),\dots,y^{(N)}(T)\big),
\qquad
\hat Y:=\big(\hat y(0),\hat y(1),\dots,\hat y(T)\big).
\]
Let \(\varepsilon_{\mathrm{out}}>0\) be a target final output error budget.
Assume the following.
\begin{itemize}
\item[(H1)] \textbf{Contractive truncated lifted step.}
For the chosen cutoff \(N\), the truncated lifted step matrices satisfy
\[
\sup_{0\le t\le T-1}\|B(t)\|_2\le \rho<1.
\]
\item[(H2)] \textbf{Bounded trajectory in the local stability regime.}
The exact trajectory of the polynomial update model satisfies \(\|v(t)\|_2\le \bar v<1\) for all \(t=0,1,\dots,T\).
\item[(H3)] \textbf{Solver access model and right-hand-side preparation.}
Assume that each truncated lifted step matrix \(B(t)\) is \(s_B\)-row sparse, and that the rescaled horizon matrix
\[
\overline M:=\frac{1}{1+\rho}M
\]
is supplied in the solver access model required by Theorem~\ref{thm:qlsa_sparse}, with effective access parameter \(s_M\) satisfying
\[
s_M\le s_B+1.
\]
Let \(C_{\mathrm{SA}}\) denote the gate cost of one query to that chosen QLSA access model.
Assume further that there is an input-state preparation routine which prepares the normalized right-hand-side state
\[
|B_{\mathrm{rhs}}\rangle:=\frac{B_{\mathrm{rhs}}}{\|B_{\mathrm{rhs}}\|_2}
\]
once, with gate cost \(C_{\mathrm{prep}}(B_{\mathrm{rhs}})\), and then supplies it to the QLSA as the right-hand-side input register.
\item[(H4)] \textbf{Quantitative initialization lower bound.}
The initial truncated lifted vector satisfies
\[
\|y^{(N)}(0)\|_2 \ge \beta_0 > 0.
\]
\item[(H5)] \textbf{Terminal degree-1 parameter-block weight.}
Let
\[
\Pi_{\mathrm{term},u}
\]
denote the orthogonal projector from the stacked horizon space onto the time-$T$, degree-1 parameter block, namely the last \(n\) coordinates of the level-1 block at terminal time \(T\) because \(v=(\delta,u)\).
Assume the normalized exact stacked trajectory satisfies the terminal-weight lower bound
\begin{equation}
\label{eq:terminal_weight_clean}
p_{\mathrm{term}}
:=
\|\Pi_{\mathrm{term},u}|Y_N^{\mathrm{ex}}\rangle\|_2^2
\ge p_* >0.
\end{equation}
\item[(H6)] \textbf{Sparse-output tomography/readout model.}
Let
\[
|u_T^{\mathrm{ex}}\rangle
:=
\frac{\Pi_{\mathrm{term},u}|Y_N^{\mathrm{ex}}\rangle}{\|\Pi_{\mathrm{term},u}|Y_N^{\mathrm{ex}}\rangle\|_2}
\]
denote the normalized terminal degree-1 parameter state.
Assume there is a readout routine which, using copies of an approximation to \(|u_T^{\mathrm{ex}}\rangle\), returns a classical description of that final parameter state to error at most \(\varepsilon_{\mathrm{ro}}\) with additional cost
\[
C_{\mathrm{ro}}(n,s_{\mathrm{out}},\varepsilon_{\mathrm{ro}})
\]
under an output-sparsity bound \(s_{\mathrm{out}}\).
In the sparse-output regime of interest, this cost is assumed to be efficient, for example poly\((s_{\mathrm{out}},\log n,1/\varepsilon_{\mathrm{ro}})\), as in structured and compressed tomography settings \cite{cramerEfficientQuantumState2010,grossQuantumStateTomographyCompressed2010,gutaFastStateTomography2020}.
\end{itemize}
Let \(\Gamma_N\) be the tail constant from \eqref{eq:GammaN_clean}. Then the following hold.

\begin{enumerate}
\item \textbf{Truncation error on the stacked trajectory.}
\begin{equation}
\label{eq:main_truncation_clean}
\|Y_N^{\mathrm{ex}}-\hat Y\|_2
\le
\varepsilon_{\mathrm{tr}}^{\mathrm{hor}}(N,T)
:=
\sqrt{T+1}\,\frac{\Gamma_N}{1-\rho}.
\end{equation}

\item \textbf{Condition number and structural row sparsity of the horizon matrix.}
The horizon matrix \(M\) in \eqref{eq:horizon_system_clean} satisfies
\begin{equation}
\label{eq:main_kappa_sparsity_clean}
\kappa_2(M)
\le
\min\!\left\{\frac{1+\rho}{1-\rho},\,2(T+1)\right\},
\qquad
s_{\mathrm{row}}(M)\le s_B+1.
\end{equation}

\item \textbf{QLSA query and gate complexity.}
Let
\[
\overline M:=\frac{1}{1+\rho}M.
\]
Then \(\|\overline M\|_2\le 1\), \(\kappa_2(\overline M)=\kappa_2(M)\), and hypothesis (H3) supplies the solver access model used by \Cref{thm:qlsa_sparse} with effective parameter \(s_M\) and per-query gate cost \(C_{\mathrm{SA}}\).
With success probability at least \(2/3\), a QLSA applied to
\[
\overline M Y=(1+\rho)^{-1}B_{\mathrm{rhs}}
\]
prepares a state \(|\widetilde Y\rangle\) satisfying
\begin{equation}
\label{eq:qlsa_state_error_clean}
\big\|\,|\widetilde Y\rangle-|\hat Y\rangle\,\big\|_2
\le
\varepsilon_{\mathrm{LS}}.
\end{equation}
Its query complexity is
\begin{equation}
\label{eq:one_shot_query_complexity_clean}
\widetilde{\mathcal O}\!\left(
s_M\,\kappa_2(M)\,
\operatorname{polylog}\!\frac{N_h}{\varepsilon_{\mathrm{LS}}}
\right),
\end{equation}
and its gate complexity is
\begin{equation}
\label{eq:one_shot_gate_complexity_clean}
\widetilde{\mathcal O}\!\left(
C_{\mathrm{prep}}(B_{\mathrm{rhs}})
+
s_M\,\kappa_2(M)\,C_{\mathrm{SA}}\,
\operatorname{polylog}\!\frac{N_h}{\varepsilon_{\mathrm{LS}}}
\right).
\end{equation}
Under the displayed hypothesis \(s_M\le s_B+1\), these bounds are in particular upper bounded by the same expressions with \(s_B+1\) in place of \(s_M\).

\item \textbf{End-to-end state error with respect to the exact truncated lifted trajectory.}
Conditioned on the successful QLSA preparation event from item~(3), the normalized quantum encoding produced by the QLSA obeys
\begin{equation}
\label{eq:one_shot_end_to_end_state_clean}
\big\|\,|\widetilde Y\rangle-|Y_N^{\mathrm{ex}}\rangle\,\big\|_2
\le
\varepsilon_{\mathrm{LS}}
+
\frac{2\,\varepsilon_{\mathrm{tr}}^{\mathrm{hor}}(N,T)}{\beta_0}.
\end{equation}
Consequently, if
\begin{equation}
\label{eq:one_shot_budget_split_clean}
\varepsilon_{\mathrm{LS}}\le \frac{\varepsilon_{\mathrm{out}}}{2},
\qquad
\varepsilon_{\mathrm{tr}}^{\mathrm{hor}}(N,T)\le \frac{\beta_0\,\varepsilon_{\mathrm{out}}}{4},
\end{equation}
then
\begin{equation}
\label{eq:one_shot_final_eps_clean}
\big\|\,|\widetilde Y\rangle-|Y_N^{\mathrm{ex}}\rangle\,\big\|_2
\le
\varepsilon_{\mathrm{out}}.
\end{equation}

\item \textbf{Terminal degree-1 parameter-block extraction and sparse-output readout.}
Assume, in addition to item~(4), that
\begin{equation}
\label{eq:terminal_nonzero_condition_clean}
\varepsilon_{\mathrm{LS}}
+
\frac{2\,\varepsilon_{\mathrm{tr}}^{\mathrm{hor}}(N,T)}{\beta_0}
\le \frac{\sqrt{p_*}}{2}.
\end{equation}
Then one extra ancilla can mark the known terminal degree-1 parameter block, and the corresponding ancilla-based preparation protocol isolates that normalized terminal state with constant success probability.
The resulting terminal state \(|\widetilde u_T\rangle\) satisfies
\begin{equation}
\label{eq:terminal_block_state_error_clean}
\bigl\|\,|\widetilde u_T\rangle-|u_T^{\mathrm{ex}}\rangle\,\bigr\|_2
\le
\frac{2}{\sqrt{p_*}}
\left(
\varepsilon_{\mathrm{LS}}
+
\frac{2\,\varepsilon_{\mathrm{tr}}^{\mathrm{hor}}(N,T)}{\beta_0}
\right).
\end{equation}
Applying the sparse-output readout routine from \emph{(H6)} with readout error \(\varepsilon_{\mathrm{ro}}\) yields a classical description of the final parameter state with total error at most
\begin{equation}
\label{eq:terminal_readout_total_error_clean}
\frac{2}{\sqrt{p_*}}
\left(
\varepsilon_{\mathrm{LS}}
+
\frac{2\,\varepsilon_{\mathrm{tr}}^{\mathrm{hor}}(N,T)}{\beta_0}
\right)
+
\varepsilon_{\mathrm{ro}},
\end{equation}
and with additional overhead
\begin{equation}
\label{eq:terminal_readout_total_cost_clean}
C_{\mathrm{ro}}(n,s_{\mathrm{out}},\varepsilon_{\mathrm{ro}}).
\end{equation}
Consequently, if
\[
\varepsilon_{\mathrm{LS}}
+
\frac{2\,\varepsilon_{\mathrm{tr}}^{\mathrm{hor}}(N,T)}{\beta_0}
\le
\frac{\sqrt{p_*}\,\varepsilon_{\mathrm{out}}}{4},
\qquad
\varepsilon_{\mathrm{ro}}\le \frac{\varepsilon_{\mathrm{out}}}{2},
\]
then the final classical output error is at most \(\varepsilon_{\mathrm{out}}\).
If, furthermore,
\[
C_{\mathrm{ro}}(n,s_{\mathrm{out}},\varepsilon_{\mathrm{ro}})
=
\widetilde{\mathcal O}\!\bigl(\poly(s_{\mathrm{out}},\log n,1/\varepsilon_{\mathrm{ro}})\bigr),
\]
then the final readout stage is efficient and does not change the theorem's dominant polylogarithmic dependence on model size.
\end{enumerate}
\end{theorem}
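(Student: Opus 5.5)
The plan is to assemble the theorem item by item from the earlier propositions; the only genuinely new work is the normalization estimate in item~(4) and the projection estimate in item~(5).

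\textbf{Items (1)--(3).} Item~(1) is exactly Theorem~\ref{thm:truncation_cutoff_clean}, Item~(2), eq.~\eqref{eq:horizon_truncation_bound_clean}. Its hypotheses are (H2), which supplies the bounded trajectory needed for $\Gamma_N$, and (H1), which supplies contractivity.

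Item~(2) follows directly from two earlier results:
\begin{itemize}
\item the condition-number bound is Item~(3) of Proposition~\ref{prop:carleman_horizon_cond_clean} under (H1);
\item the row-sparsity bound is Item~(2) of Proposition~\ref{prop:sparsity_oracle_clean} under the $s_B$-sparsity assumed in (H3).
\end{itemize}

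For item~(3), I would rescale by $(1+\rho)^{-1}$. Since $\|M\|_2\le 1+\rho$, this gives $\|\overline M\|_2\le1$, and a scalar rescaling leaves $\kappa_2$ unchanged. The solution of $\overline M Y=(1+\rho)^{-1}B_{\mathrm{rhs}}$ is still $\hat Y$, and the normalized right-hand side is still $|B_{\mathrm{rhs}}\rangle$. I then invoke Theorem~\ref{thm:qlsa_sparse} with $A=\overline M$, $\delta=\varepsilon_{\mathrm{LS}}$, dimension $N_h$, $s_A=s_M$, and $C_A=C_{\mathrm{SA}}$. Adding the one-time cost $C_{\mathrm{prep}}$ gives both complexity bounds. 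Replacing $s_M$ by $s_B+1$ is monotone.

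\textbf{Item (4).} Use the triangle inequality:
\[
\||\widetilde Y\rangle-|Y_N^{\mathrm{ex}}\rangle\|\le \varepsilon_{\mathrm{LS}}+\||\hat Y\rangle-|Y_N^{\mathrm{ex}}\rangle\|.
\]
For the second term I use the normalization lemma $\|a/\|a\|-b/\|b\|\|\le 2\|a-b\|/\|b\|$. This holds because
\[
\bigl\|a/\|a\|-b/\|b\|\bigr\|\le \|a-b\|/\|b\|+\|a\|\,\bigl|1/\|a\|-1/\|b\|\bigr|
\]
and the reverse triangle inequality bounds the last term by $\|a-b\|/\|b\|$. Take $b=Y_N^{\mathrm{ex}}$. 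Then $\|b\|\ge\|y^{(N)}(0)\|\ge\beta_0$ by (H4), and item~(1) bounds $\|a-b\|$, which yields \eqref{eq:one_shot_end_to_end_state_clean}. The stated budget split then gives $\varepsilon_{\mathrm{out}}/2+\varepsilon_{\mathrm{out}}/2$.

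\textbf{Item (5).} Write $e$ for the bound from item~(4), $\Pi=\Pi_{\mathrm{term},u}$, $\psi=|\widetilde Y\rangle$ and $\phi=|Y_N^{\mathrm{ex}}\rangle$.
\begin{enumerate}
\item Use one ancilla to coherently flag the terminal degree-1 parameter block. This block is a known, contiguous index range, so the flag is a comparator circuit.
\item By (H5), $\|\Pi\phi\|\ge\sqrt{p_*}$. Hence $\|\Pi\psi\|\ge\sqrt{p_*}-e\ge\sqrt{p_*}/2$ by \eqref{eq:terminal_nonzero_condition_clean}.
\item Postselecting on the flag therefore succeeds with probability at least $p_*/4$. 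This is a constant under the constant-$p_*$ assumption, or it can be boosted by $O(p_*^{-1/2})$ rounds of amplitude amplification.
\item For the error, apply the normalization lemma again with $a=\Pi\psi$, $b=\Pi\phi$. Since $\|\Pi(\psi-\phi)\|\le e$ and $\|b\|\ge\sqrt{p_*}$, we get $\|\,|\widetilde u_T\rangle-|u_T^{\mathrm{ex}}\rangle\|\le 2e/\sqrt{p_*}$, which is \eqref{eq:terminal_block_state_error_clean}.
\item Feeding copies into the (H6) routine adds $\varepsilon_{\mathrm{ro}}$ by the triangle inequality, assuming the readout error is measured in the same norm. The stated budget choices then sum to $\varepsilon_{\mathrm{out}}/2+\varepsilon_{\mathrm{out}}/2$.
\end{enumerate}

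\textbf{Main obstacle.} I expect the delicate step to be item~(5), specifically the bookkeeping of success probabilities. Each copy supplied to tomography requires a fresh QLSA run followed by postselection, so the overall $2/3$ guarantee needs a union bound or median amplification across the $C_{\mathrm{ro}}$ copies. I would handle this by treating the per-copy cost as a multiplicative overhead absorbed into $\widetilde{\mathcal O}$ and $C_{\mathrm{ro}}$, and by boosting each QLSA success probability to $1-O(1/\text{copies})$ at logarithmic cost.
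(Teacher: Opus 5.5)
Your proposal is correct and follows the paper's proof essentially step for step. Items (1)--(3) are assembled from the same earlier results (the truncation theorem, the conditioning and sparsity propositions, and the rescaled QLSA theorem), and items (4)--(5) use the same normalization inequality with the denominator on the exact vector (respectively its terminal projection, bounded below via (H4) and (H5)), contractivity of the projector, and the triangle inequality. Your extra bookkeeping of per-copy QLSA success probabilities in the readout stage is more careful than the paper, which simply invokes the constant-success-probability preparation step together with (H6).
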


\begin{proof}
Item~(1) is Item~(2) of \Cref{thm:truncation_cutoff_clean}. Item~(2) combines Item~(3) of \Cref{prop:carleman_horizon_cond_clean} with Item~(2) of \Cref{prop:sparsity_oracle_clean}.

For item~(3), Item~(3) of \Cref{prop:carleman_horizon_cond_clean} gives \(\|M\|_2\le 1+\rho\). Hence for \(\overline M=(1+\rho)^{-1}M\) one has \(\|\overline M\|_2\le 1\) and \(\kappa_2(\overline M)=\kappa_2(M)\). Hypothesis (H3) supplies the access model required by \Cref{thm:qlsa_sparse}, together with the normalized right-hand-side state \(|B_{\mathrm{rhs}}\rangle\). The normalized right-hand side is unchanged by the scalar factor \((1+\rho)^{-1}\). Applying \Cref{thm:qlsa_sparse} to
\[
\overline M Y=(1+\rho)^{-1}B_{\mathrm{rhs}}
\]
yields, with success probability at least \(2/3\), a state \(|\widetilde Y\rangle\) satisfying \eqref{eq:qlsa_state_error_clean} and the query/gate bounds \eqref{eq:one_shot_query_complexity_clean}--\eqref{eq:one_shot_gate_complexity_clean}.

For item~(4), the first block of \(Y_N^{\mathrm{ex}}\) is \(y^{(N)}(0)\), so
\begin{equation}
\label{eq:Ynorm_lower_clean}
\|Y_N^{\mathrm{ex}}\|_2\ge \|y^{(N)}(0)\|_2\ge \beta_0.
\end{equation}
Using
\[
\left\|\frac{a}{\|a\|_2}-\frac{b}{\|b\|_2}\right\|_2
\le
\frac{2\|a-b\|_2}{\|a\|_2},
\]
with \(a=Y_N^{\mathrm{ex}}\) and \(b=\hat Y\), and then applying item~(1), gives
\[
\bigl\|\,|Y_N^{\mathrm{ex}}\rangle-|\hat Y\rangle\,\bigr\|_2
\le
\frac{2\,\varepsilon_{\mathrm{tr}}^{\mathrm{hor}}(N,T)}{\beta_0}.
\]
Conditioning on the successful QLSA preparation event from item~(3), one also has
\(\bigl\|\,|\widetilde Y\rangle-|\hat Y\rangle\,\bigr\|_2\le \varepsilon_{\mathrm{LS}}\).
The triangle inequality therefore yields \eqref{eq:one_shot_end_to_end_state_clean}, and the budget split \eqref{eq:one_shot_budget_split_clean} implies \eqref{eq:one_shot_final_eps_clean}.

For item~(5), set
\[
\varepsilon_{\mathrm{state}}
:=
\bigl\|\,|\widetilde Y\rangle-|Y_N^{\mathrm{ex}}\rangle\,\bigr\|_2.
\]
By item~(4),
\[
\varepsilon_{\mathrm{state}}
\le
\varepsilon_{\mathrm{LS}}
+
\frac{2\,\varepsilon_{\mathrm{tr}}^{\mathrm{hor}}(N,T)}{\beta_0}.
\]
Since \(\Pi_{\mathrm{term},u}\) projects onto a known subset of the computational basis labels, a reversible indicator circuit with one ancilla marks exactly that sector.
Orthogonal projection is contractive, so
\[
\|\Pi_{\mathrm{term},u}|\widetilde Y\rangle - \Pi_{\mathrm{term},u}|Y_N^{\mathrm{ex}}\rangle\|_2
\le
\varepsilon_{\mathrm{state}}.
\]
Using \eqref{eq:terminal_weight_clean} and \eqref{eq:terminal_nonzero_condition_clean},
\[
\|\Pi_{\mathrm{term},u}|\widetilde Y\rangle\|_2
\ge
\|\Pi_{\mathrm{term},u}|Y_N^{\mathrm{ex}}\rangle\|_2-\varepsilon_{\mathrm{state}}
\ge
\sqrt{p_*}-\varepsilon_{\mathrm{state}}
\ge
\frac{\sqrt{p_*}}{2}.
\]
The ancilla-based preparation protocol for this known marked sector therefore produces the normalized terminal state with constant success probability.
Applying
\[
\left\|
\frac{a}{\|a\|_2}-\frac{b}{\|b\|_2}
\right\|_2
\le
\frac{2\|a-b\|_2}{\|a\|_2}
\]
to
\[
a:=\Pi_{\mathrm{term},u}|Y_N^{\mathrm{ex}}\rangle,
\qquad
b:=\Pi_{\mathrm{term},u}|\widetilde Y\rangle,
\]
and using \(\|a\|_2\ge \sqrt{p_*}\), proves \eqref{eq:terminal_block_state_error_clean}.
The total readout error \eqref{eq:terminal_readout_total_error_clean} and cost \eqref{eq:terminal_readout_total_cost_clean} then follow from \emph{(H6)} by the triangle inequality together with the constant-success-probability terminal-state preparation step.
\end{proof}

\begin{corollary}[From the main theorem to the exact adversarial-training dynamics]
\label{cor:exact_dynamics_bridge_clean}
Assume the hypotheses of Theorem~\ref{thm:one_shot_folded_clean}, with the same contractive matrices \(B(t)\) and the same truncated lifted recurrence used to define \(\hat Y\).
Assume in addition that the truncated lifted coordinates of the exact adversarial-training dynamics satisfy
\begin{equation}
\label{eq:exact_dynamics_forcing_decomposition_clean}
y_{\mathrm{phys}}(t+1)
=
B(t)\,y_{\mathrm{phys}}(t)+c(t)+e_{\mathrm{tr}}(t)+e_{\mathrm{model}}(t),
\qquad t=0,1,\dots,T-1,
\end{equation}
with the same initialization
\[
y_{\mathrm{phys}}(0)=y^{(N)}(0).
\]
Assume the truncation error term satisfies
\begin{equation}
\label{eq:bridge_truncation_forcing_budget_clean}
\|e_{\mathrm{tr}}(t)\|_2\le \Gamma_N
\qquad
\text{for all }t=0,1,\dots,T-1,
\end{equation}
where \(\Gamma_N\) is defined in \eqref{eq:GammaN_clean}, and assume the modeling error term satisfies
\begin{equation}
\label{eq:bridge_model_step_budget_clean}
\|e_{\mathrm{model}}(t)\|_2
\le
L_{N,\bar v}^{\mathrm{lift}}\,\varepsilon_{\mathrm{base,step}}
\qquad
\text{for all }t=0,1,\dots,T-1,
\end{equation}
where \(L_{N,\bar v}^{\mathrm{lift}}\) is defined in \eqref{eq:lift_local_lipschitz_constant_clean}.
Define the combined window error bound for the exact dynamics
\begin{equation}
\label{eq:phys_horizon_error_budget_clean}
\varepsilon_{\mathrm{phys}}^{\mathrm{hor}}(N,T)
:=
\frac{\sqrt{T+1}}{1-\rho}
\left(
\Gamma_N
+
L_{N,\bar v}^{\mathrm{lift}}\,\varepsilon_{\mathrm{base,step}}
\right).
\end{equation}
Let \(\varepsilon_{\mathrm{out}}>0\) be a target normalized-state error for the exact trajectory state.
Conditioned on the successful QLSA preparation from Theorem~\ref{thm:one_shot_folded_clean}, the QLSA output obeys
\begin{equation}
\label{eq:exact_dynamics_state_error_clean}
\bigl\|\,|\widetilde Y\rangle-|Y_{\mathrm{phys}}\rangle\,\bigr\|_2
\le
\varepsilon_{\mathrm{LS}}
+
\frac{2\,\varepsilon_{\mathrm{phys}}^{\mathrm{hor}}(N,T)}{\beta_0}.
\end{equation}
Consequently, if
\begin{equation}
\label{eq:exact_dynamics_budget_split_clean}
\varepsilon_{\mathrm{LS}}\le \frac{\varepsilon_{\mathrm{out}}}{2},
\qquad
\varepsilon_{\mathrm{phys}}^{\mathrm{hor}}(N,T)\le \frac{\beta_0\,\varepsilon_{\mathrm{out}}}{4},
\end{equation}
then
\begin{equation}
\label{eq:exact_dynamics_final_eps_clean}
\bigl\|\,|\widetilde Y\rangle-|Y_{\mathrm{phys}}\rangle\,\bigr\|_2
\le \varepsilon_{\mathrm{out}}.
\end{equation}
If, in addition, the physical terminal degree-1 parameter block satisfies the same lower bound
\begin{equation}
\label{eq:terminal_weight_phys_clean}
\|\Pi_{\mathrm{term},u}|Y_{\mathrm{phys}}\rangle\|_2^2\ge p_*>0,
\end{equation}
and the sparse-output readout model \emph{(H6)} is assumed for the normalized physical terminal state
\[
|u_{\mathrm{phys}}(T)\rangle
:=
\frac{\Pi_{\mathrm{term},u}|Y_{\mathrm{phys}}\rangle}{\|\Pi_{\mathrm{term},u}|Y_{\mathrm{phys}}\rangle\|_2},
\]
then, whenever
\begin{equation}
\label{eq:terminal_nonzero_condition_phys_clean}
\varepsilon_{\mathrm{LS}}
+
\frac{2\,\varepsilon_{\mathrm{phys}}^{\mathrm{hor}}(N,T)}{\beta_0}
\le \frac{\sqrt{p_*}}{2},
\end{equation}
one extra ancilla marking the terminal sector together with the same ancilla-based preparation protocol yields, with constant success probability, a terminal state \(|\widetilde u_T\rangle\) with
\begin{equation}
\label{eq:terminal_block_state_error_phys_clean}
\bigl\|\,|\widetilde u_T\rangle-|u_{\mathrm{phys}}(T)\rangle\,\bigr\|_2
\le
\frac{2}{\sqrt{p_*}}
\left(
\varepsilon_{\mathrm{LS}}
+
\frac{2\,\varepsilon_{\mathrm{phys}}^{\mathrm{hor}}(N,T)}{\beta_0}
\right),
\end{equation}
and the resulting classical final-output error is at most
\begin{equation}
\label{eq:terminal_readout_total_error_phys_clean}
\frac{2}{\sqrt{p_*}}
\left(
\varepsilon_{\mathrm{LS}}
+
\frac{2\,\varepsilon_{\mathrm{phys}}^{\mathrm{hor}}(N,T)}{\beta_0}
\right)
+
\varepsilon_{\mathrm{ro}},
\end{equation}
with additional overhead
\begin{equation}
\label{eq:terminal_readout_total_cost_phys_clean}
C_{\mathrm{ro}}(n,s_{\mathrm{out}},\varepsilon_{\mathrm{ro}}).
\end{equation}
In particular, this final classical output error is at most \(\varepsilon_{\mathrm{out}}\) whenever
\[
\varepsilon_{\mathrm{LS}}
+
\frac{2\,\varepsilon_{\mathrm{phys}}^{\mathrm{hor}}(N,T)}{\beta_0}
\le
\frac{\sqrt{p_*}\,\varepsilon_{\mathrm{out}}}{4},
\qquad
\varepsilon_{\mathrm{ro}}\le \frac{\varepsilon_{\mathrm{out}}}{2}.
\]
\end{corollary}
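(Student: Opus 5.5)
The corollary mirrors items (4)–(5) of Theorem~\ref{thm:one_shot_folded_clean}. The only change is that the truncated lifted trajectory of the polynomial model is replaced by the truncated lift $Y_{\mathrm{phys}}:=(y_{\mathrm{phys}}(0),\dots,y_{\mathrm{phys}}(T))$ of the exact dynamics. The plan is to build a horizon-level perturbation bound between $Y_{\mathrm{phys}}$ and $\hat Y$, and then reuse the normalization, triangle-inequality and projection arguments of that proof without change.

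\textbf{Step 1: error recurrence.} Define $\eta_{\mathrm{phys}}(t):=y_{\mathrm{phys}}(t)-\hat y(t)$. Subtracting the truncated recurrence \eqref{eq:truncated_lifted_euler_clean} from the decomposition \eqref{eq:exact_dynamics_forcing_decomposition_clean} gives
\[
\eta_{\mathrm{phys}}(t+1)=B(t)\eta_{\mathrm{phys}}(t)+e_{\mathrm{tr}}(t)+e_{\mathrm{model}}(t).
\]
The two initializations coincide, so $\eta_{\mathrm{phys}}(0)=0$. This is the same recurrence as \eqref{eq:error_recurrence_clean}, with the forcing bounded by $\Gamma_N+L_{N,\bar v}^{\mathrm{lift}}\varepsilon_{\mathrm{base,step}}$ via \eqref{eq:bridge_truncation_forcing_budget_clean} and \eqref{eq:bridge_model_step_budget_clean}.

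\textbf{Step 2: horizon bound.} Unroll the recurrence and use (H1), $\|B(t)\|_2\le\rho$, exactly as in Item~(2) of Theorem~\ref{thm:truncation_cutoff_clean}. This gives the pointwise bound $\|\eta_{\mathrm{phys}}(t)\|_2\le(\Gamma_N+L^{\mathrm{lift}}_{N,\bar v}\varepsilon_{\mathrm{base,step}})/(1-\rho)$. Summing the squares over the $T+1$ time slices gives
\[
\|Y_{\mathrm{phys}}-\hat Y\|_2\le\varepsilon^{\mathrm{hor}}_{\mathrm{phys}}(N,T).
\]

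\textbf{Step 3: normalization and QLSA error.} The first block of $Y_{\mathrm{phys}}$ is $y^{(N)}(0)$, so (H4) gives $\|Y_{\mathrm{phys}}\|_2\ge\beta_0$. The normalization inequality $\|a/\|a\|-b/\|b\|\|\le 2\|a-b\|/\|a\|$, applied with $a=Y_{\mathrm{phys}}$ and $b=\hat Y$, gives
\[
\bigl\||Y_{\mathrm{phys}}\rangle-|\hat Y\rangle\bigr\|_2\le 2\varepsilon^{\mathrm{hor}}_{\mathrm{phys}}/\beta_0.
\]
On the QLSA success event of item~(3) of Theorem~\ref{thm:one_shot_folded_clean}, we also have $\||\widetilde Y\rangle-|\hat Y\rangle\|_2\le\varepsilon_{\mathrm{LS}}$. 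The triangle inequality then yields \eqref{eq:exact_dynamics_state_error_clean}, and the budget split \eqref{eq:exact_dynamics_budget_split_clean} gives \eqref{eq:exact_dynamics_final_eps_clean}.

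\textbf{Step 4: terminal extraction and readout.} Set $\varepsilon_{\mathrm{state}}$ equal to the right-hand side of \eqref{eq:exact_dynamics_state_error_clean}. Since $\Pi_{\mathrm{term},u}$ is a projection, it is contractive, so $\|\Pi_{\mathrm{term},u}|\widetilde Y\rangle\|_2\ge\sqrt{p_*}-\varepsilon_{\mathrm{state}}\ge\sqrt{p_*}/2$ by \eqref{eq:terminal_weight_phys_clean} and \eqref{eq:terminal_nonzero_condition_phys_clean}. This lower bound gives constant success probability for the one-ancilla marking step. The normalization inequality, applied with $a=\Pi_{\mathrm{term},u}|Y_{\mathrm{phys}}\rangle$ (so $\|a\|_2\ge\sqrt{p_*}$) and $b=\Pi_{\mathrm{term},u}|\widetilde Y\rangle$, gives \eqref{eq:terminal_block_state_error_phys_clean}. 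Finally, (H6) and one more triangle inequality give the total error \eqref{eq:terminal_readout_total_error_phys_clean} and the cost \eqref{eq:terminal_readout_total_cost_phys_clean}, and the stated budget split makes the final error at most $\varepsilon_{\mathrm{out}}/2+\varepsilon_{\mathrm{out}}/2$.

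\textbf{Main obstacle.} The only delicate point is justifying the decomposition \eqref{eq:exact_dynamics_forcing_decomposition_clean} and the sizes of its forcing terms. The corollary assumes these outright, but consistency requires that the physical trajectory stay in the ball $\|v\|\le\bar v$, so that Proposition~\ref{prop:base_to_lifted_bridge_clean} and the tail bound both apply. I would note that Item~(2) of Proposition~\ref{prop:base_to_lifted_bridge_clean}, with the invariance condition \eqref{eq:bridge_image_invariance_clean}, supplies exactly this. Everything after that point is routine linear algebra.
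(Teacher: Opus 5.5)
Your proposal is correct and follows the paper's proof essentially step for step. The shared steps are the forced error recurrence $y_{\mathrm{phys}}(t)-\hat y(t)=\sum_{i<t}B(t-1)\cdots B(i+1)\,(e_{\mathrm{tr}}(i)+e_{\mathrm{model}}(i))$ with the contractive bound and $\sqrt{T+1}$ stacking, then the normalization inequality using $\|Y_{\mathrm{phys}}\|_2\ge\beta_0$ from the shared initial block, the triangle inequality with the QLSA error, and the projection-and-renormalization argument from item~(5) of Theorem~\ref{thm:one_shot_folded_clean}. Your closing remark about keeping the physical trajectory in the $\bar v$-ball is a reasonable consistency observation, but the proof does not need it, because the corollary takes the decomposition and both forcing bounds as hypotheses.
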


\begin{proof}
Set
\[
z(t):=\hat y(t),
\qquad
\tilde z(t):=y_{\mathrm{phys}}(t),
\qquad
e(t):=e_{\mathrm{tr}}(t)+e_{\mathrm{model}}(t).
\]
Then \eqref{eq:exact_dynamics_forcing_decomposition_clean} and the truncated lifted recurrence give
\[
z(t+1)=B(t)z(t)+c(t),
\qquad
\tilde z(t+1)=B(t)\tilde z(t)+c(t)+e(t),
\qquad
z(0)=\tilde z(0)=y^{(N)}(0).
\]
By \eqref{eq:bridge_truncation_forcing_budget_clean} and \eqref{eq:bridge_model_step_budget_clean},
\[
\|e(t)\|_2
\le
\Gamma_N+L_{N,\bar v}^{\mathrm{lift}}\,\varepsilon_{\mathrm{base,step}}
\qquad\text{for all }t.
\]
Using the standard contractive-forcing estimate under \(\sup_t\|B(t)\|_2\le \rho<1\),
\[
y_{\mathrm{phys}}(t)-\hat y(t)
=
\sum_{i=0}^{t-1}
\Big(B(t-1)B(t-2)\cdots B(i+1)\Big)e(i),
\]
so
\[
\|y_{\mathrm{phys}}(t)-\hat y(t)\|_2
\le
\sum_{i=0}^{t-1}\rho^{t-1-i}\|e(i)\|_2
\le
\frac{\Gamma_N+L_{N,\bar v}^{\mathrm{lift}}\,\varepsilon_{\mathrm{base,step}}}{1-\rho}.
\]
Summing this pointwise bound over the full training window yields
\[
\|Y_{\mathrm{phys}}-\hat Y\|_2
\le
\sqrt{T+1}\,
\frac{\Gamma_N+L_{N,\bar v}^{\mathrm{lift}}\,\varepsilon_{\mathrm{base,step}}}{1-\rho}
=
\varepsilon_{\mathrm{phys}}^{\mathrm{hor}}(N,T).
\]
Because the first block of \(Y_{\mathrm{phys}}\) is \(y^{(N)}(0)\), assumption (H4) gives
\[
\|Y_{\mathrm{phys}}\|_2\ge \|y^{(N)}(0)\|_2\ge \beta_0.
\]
Using
\[
\left\|
\frac{a}{\|a\|_2}-\frac{b}{\|b\|_2}
\right\|_2
\le
\frac{2\|a-b\|_2}{\|a\|_2},
\]
with \(a=Y_{\mathrm{phys}}\) and \(b=\hat Y\), yields
\[
\bigl\|\,|Y_{\mathrm{phys}}\rangle-|\hat Y\rangle\,\bigr\|_2
\le
\frac{2\,\varepsilon_{\mathrm{phys}}^{\mathrm{hor}}(N,T)}{\beta_0}.
\]
Combining this with
\(\bigl\|\,|\widetilde Y\rangle-|\hat Y\rangle\,\bigr\|_2\le \varepsilon_{\mathrm{LS}}\)
on the successful QLSA event proves \eqref{eq:exact_dynamics_state_error_clean}, and \eqref{eq:exact_dynamics_budget_split_clean} implies \eqref{eq:exact_dynamics_final_eps_clean}.

The terminal-readout part is the same projection-and-normalization argument as in item~(5) of Theorem~\ref{thm:one_shot_folded_clean}, now applied to the pair \((|\widetilde Y\rangle,|Y_{\mathrm{phys}}\rangle)\) and the state-error bound \eqref{eq:exact_dynamics_state_error_clean}. Using \eqref{eq:terminal_weight_phys_clean} and \eqref{eq:terminal_nonzero_condition_phys_clean} gives the constant-success-probability terminal-state preparation step, the normalized terminal-state bound \eqref{eq:terminal_block_state_error_phys_clean}, and then \eqref{eq:terminal_readout_total_error_phys_clean}--\eqref{eq:terminal_readout_total_cost_phys_clean} by the triangle inequality and the assumed readout routine from \emph{(H6)}.
\end{proof}
The $T+1$ factor in Item~(2) of \Cref{thm:truncation_cutoff_clean}, Theorem~\ref{thm:one_shot_folded_clean}, and Corollary~\ref{cor:exact_dynamics_bridge_clean} appears because the QLSA prepares the full stacked trajectory encoding across the entire window. When only a terminal output is required, item~(5) of \Cref{thm:one_shot_folded_clean} and the final paragraph of \Cref{cor:exact_dynamics_bridge_clean} isolate the additional extraction and readout overhead from the core horizon-state preparation cost.

\section{Appendix discussion}

The main theorem rewrites a fixed training window of projected-gradient robust training as a sparse linear-system problem for the polynomial update model and, under terminal-weight and sparse-output assumptions, turns that trajectory encoding into an efficiently readable final parameter state.

For classical simulation, the main burden lies in the lifted dimension \(\Delta_N=\sum_{j=1}^N d^j\) and the horizon dimension \(N_h=(T+1)\Delta_N\). In the QLSA formulation, by contrast, the dependence on \(N_h\) is only polylogarithmic, and the solver register uses \(O(\log N_h)\) qubits up to ancilla overhead.

The transfer to the exact robust-training dynamics is given in \Cref{cor:exact_dynamics_bridge_clean}. The numerical experiment in the main text verifies the classical implementation of the reduced update model analyzed in the theorem in one concrete setting.

\section{Numerical implementation details for the main-text experiment}

The numerical experiment in the main text verifies the classical implementation of the reduced update model analyzed in the theorem, using one concrete trajectory on the reduced MNIST task.

We restrict MNIST to digits $0$--$4$. Each image is resized to $12\times 12$, flattened to $144$ features, and mapped by a fixed random projection to $10$ dimensions. The classifier is a bias-free $10\to 4\to 5$ model with $60$ trainable parameters. We consider three training modes: clean-only, robust-only, and mixed training with
\[
\Loss_{\mathrm{mix}}=(1-\alpha)\Loss_{\mathrm{clean}}+\alpha\Loss_{\mathrm{rob}},
\qquad \alpha=0.50.
\]
Here \(\Loss_{\mathrm{clean}}=\Loss(f_u(x),y)\), while \(\Loss_{\mathrm{rob}}\) is used only inside the reduced classical simulation that implements the reduced update model analyzed in the main text. The reported figure uses $1.2\times 10^5$ alternating adversarial-training steps with batch size $5$.

Robust evaluation uses a PGD attack with $\epsilon=0.025$, step size $0.01$, and $10$ attack steps. The nine panels in the main-text figure report robust accuracy, clean accuracy, and clean loss for the three training modes. For this experiment, the internal linear-system component is implemented with the MindQuantum HHL QSLP backend. The plotted trajectories are used to confirm that the reduced update model is implemented consistently over the full training window and produces the expected behavior in the three training modes.

\section{Extensions}

Deterministic-$(K_t,L_t)$ schedules can be handled as follows.
\begin{proposition}[Deterministic-$(K_t,L_t)$ outer-update composition]
\label{prop:fixed_KL_composition_clean}
Fix deterministic schedules \(\{K_t\}_{t=0}^{T-1}\) and \(\{L_t\}_{t=0}^{T-1}\) satisfying
\[
1\le K_t\le K_{\max},
\qquad
1\le L_t\le L_{\max}
\qquad
\text{for all }t.
\]
For each outer time \(t\), let the exact attack substeps and learner substeps be
\[
A_{t,j}^{\mathrm{phys}}(\delta,u)
:=
\begin{pmatrix}
\operatorname{clip}\!\Big(
\delta+\eta_{\delta,t,j}\,\operatorname{sign}\big(g_{\delta,t,j}(\delta,u)\big),
 -\epsilon,+\epsilon
\Big)\\
u
\end{pmatrix},
\qquad j=0,1,\dots,K_t-1,
\]
\[
U_{t,\ell}^{\mathrm{phys}}(\delta,u)
:=
\begin{pmatrix}
\delta\\
u-\eta_{u,t,\ell}\,g_{u,t,\ell}(\delta,u)
\end{pmatrix},
\qquad \ell=0,1,\dots,L_t-1,
\]
where the attack substeps act with the learner frozen and the learner substeps act with the perturbation frozen. Let the corresponding polynomial attack and learner substeps be
\[
A_{t,j}(\delta,u)
:=
\begin{pmatrix}
\epsilon\,P_{\mathrm c}\!\Big(
\big[\delta+\eta_{\delta,t,j}P_{\mathrm s}\big(\mathcal G_{\delta,t,j}(\delta,u)/\alpha_{t,j}\big)\big]/\epsilon
\Big)\\
u
\end{pmatrix},
\]
\[
U_{t,\ell}(\delta,u)
:=
\begin{pmatrix}
\delta\\
u-\eta_{u,t,\ell}\,\mathcal G_{u,t,\ell}(\delta,u)
\end{pmatrix},
\]
with deterministic substep schedules encoded in the indices \(t,j,\ell\). Define the exact and polynomial outer maps
\[
\Psi_t^{\mathrm{phys},K_t,L_t}
:=
U_{t,L_t-1}^{\mathrm{phys}}\circ\cdots\circ U_{t,0}^{\mathrm{phys}}
\circ
A_{t,K_t-1}^{\mathrm{phys}}\circ\cdots\circ A_{t,0}^{\mathrm{phys}},
\]
\[
\Psi_t^{K_t,L_t}
:=
U_{t,L_t-1}\circ\cdots\circ U_{t,0}
\circ
A_{t,K_t-1}\circ\cdots\circ A_{t,0}.
\]
Then the following hold.
\begin{enumerate}
\item \textbf{Polynomiality and degree growth under finite composition.}
Assume each polynomial gradient approximation \(\mathcal G_{\delta,t,j}\) and \(\mathcal G_{u,t,\ell}\) has degree at most \(q\). Then each attack substep \(A_{t,j}\) has degree at most
\[
D_A\le qK_sK_c,
\]
and each learner substep \(U_{t,\ell}\) has degree at most \(q\). Consequently the composed outer map \(\Psi_t^{K_t,L_t}\) is polynomial with degree at most
\begin{equation}
\label{eq:fixed_KL_degree_bound_clean}
D_t
\le
D_A^{K_t} q^{L_t}
\le
q^{K_{\max}+L_{\max}}(K_sK_c)^{K_{\max}}
=:
D_{\mathrm{sched}}.
\end{equation}
Hence, after padding missing higher-degree coefficients by zero, one may write
\[
\Psi_t^{K_t,L_t}(v)=\sum_{\ell=0}^{D_{\mathrm{sched}}}Q_{\ell}^{(t)}\,v^{\otimes \ell}.
\]

\item \textbf{One-step state error for the composed outer update.}
Let \(\mathcal V\subseteq\mathbb R^d\) be a local domain that is forward invariant under every exact and polynomial substep above. Assume there exist constants \(\varepsilon_{A,\mathrm{sub}}\ge 0\), \(\varepsilon_{U,\mathrm{sub}}\ge 0\), and \(\Lambda\ge 0\) such that uniformly for all admissible \(t,j,\ell\),
\[
\sup_{v\in\mathcal V}
\|A_{t,j}^{\mathrm{phys}}(v)-A_{t,j}(v)\|_2
\le
\varepsilon_{A,\mathrm{sub}},
\qquad
\sup_{v\in\mathcal V}
\|U_{t,\ell}^{\mathrm{phys}}(v)-U_{t,\ell}(v)\|_2
\le
\varepsilon_{U,\mathrm{sub}},
\]
and
\[
\operatorname{Lip}(A_{t,j}^{\mathrm{phys}}|_{\mathcal V}),
\ \operatorname{Lip}(A_{t,j}|_{\mathcal V}),
\ \operatorname{Lip}(U_{t,\ell}^{\mathrm{phys}}|_{\mathcal V}),
\ \operatorname{Lip}(U_{t,\ell}|_{\mathcal V})
\le
\Lambda.
\]
Set
\[
M_t:=K_t+L_t,
\qquad
\varepsilon_{\mathrm{sub}}
:=
\max\{\varepsilon_{A,\mathrm{sub}},\varepsilon_{U,\mathrm{sub}}\},
\qquad
C_t(\Lambda)
:=
\sum_{r=0}^{M_t-1}\Lambda^r.
\]
Define the uniform schedule constant
\[
C_{\mathrm{sched}}(\Lambda)
:=
\max_{0\le \tau\le T-1} C_{\tau}(\Lambda)
=
\max_{0\le \tau\le T-1}\sum_{r=0}^{K_{\tau}+L_{\tau}-1}\Lambda^r.
\]
Then every \(v\in\mathcal V\) satisfies
\begin{equation}
\label{eq:fixed_KL_base_step_bound_clean}
\|\Psi_t^{\mathrm{phys},K_t,L_t}(v)-\Psi_t^{K_t,L_t}(v)\|_2
\le
C_t(\Lambda)\,\varepsilon_{\mathrm{sub}}
\le
C_{\mathrm{sched}}(\Lambda)\,\varepsilon_{\mathrm{sub}}.
\end{equation}
In particular, since \(M_t\le K_{\max}+L_{\max}\) and \(\Lambda\ge 0\),
\[
C_{\mathrm{sched}}(\Lambda)
\le
\sum_{r=0}^{K_{\max}+L_{\max}-1}\Lambda^r
=
\begin{cases}
K_{\max}+L_{\max}, & \Lambda=1,\\[1ex]
\dfrac{\Lambda^{K_{\max}+L_{\max}}-1}{\Lambda-1}, & \Lambda\neq 1.
\end{cases}
\]
In particular, \(\varepsilon_{A,\mathrm{sub}}\) may be supplied by the one-attack-substep sign/clip approximation bound of \Cref{prop:sign_clip_design_clean}, while \(\varepsilon_{U,\mathrm{sub}}\) may be bounded by the learner-gradient approximation budget for one learner substep.

\item \textbf{Application of Theorem~\ref{thm:one_shot_folded_clean} to deterministic-$(K_t,L_t)$ schedules.}
If the composed polynomial outer maps \(\Psi_t^{K_t,L_t}\) satisfy the coefficient-tail, sparsity, contractivity, and initialization assumptions used in \Cref{thm:one_shot_folded_clean}, then that theorem applies unchanged with \(D\) replaced by \(D_{\mathrm{sched}}\). Moreover, item~(2) implies \eqref{eq:base_step_bridge_assumption_clean} with
\begin{equation}
\label{eq:fixed_KL_eps_base_clean}
\varepsilon_{\mathrm{base,step}}
:=
C_{\mathrm{sched}}(\Lambda)\,\varepsilon_{\mathrm{sub}},
\end{equation}
so item~(2) of \Cref{prop:base_to_lifted_bridge_clean} and Corollary~\ref{cor:exact_dynamics_bridge_clean} also extend unchanged to any deterministic schedules \(K_t\) and \(L_t\) with finite uniform bounds.
\end{enumerate}
\end{proposition}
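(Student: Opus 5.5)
For item~(1) I will work substep by substep and then compose. Each attack substep $A_{t,j}$ has $u$-block equal to the identity, which has degree $1$. Its $\delta$-block is $\epsilon P_{\mathrm c}\big([\delta+\eta P_{\mathrm s}(\mathcal G_{\delta,t,j}/\alpha)]/\epsilon\big)$. The argument of Proposition~\ref{prop:folded_effective_degree_clean} applies verbatim to this block: $\mathcal G$ has degree $\le q$, so $P_{\mathrm s}\circ\mathcal G$ has degree $\le qK_s$, and applying $P_{\mathrm c}$ gives degree $\le qK_sK_c$. Hence $D_A\le qK_sK_c$. Each learner substep $U_{t,\ell}$ is the identity on $\delta$ and $u-\eta\mathcal G_{u,t,\ell}$ on $u$, so its degree is $\le\max\{1,q\}=q$ (using $q\ge1$).

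The key elementary fact is that for vector-valued polynomial maps, $\deg(P\circ R)\le \deg P\cdot\deg R$. Each coordinate of $P$ is a sum of monomials of degree $\le\deg P$ in the coordinates of $R$, and each of those coordinates has degree $\le \deg R$. Iterating this along the composition gives $D_t\le D_A^{K_t}q^{L_t}$. Monotonicity in $K_t\le K_{\max}$ and $L_t\le L_{\max}$ then gives $D_{\mathrm{sched}}$; this uses $D_A,q\ge1$, which holds after replacing any degree by $\max\{\cdot,1\}$. Padding the missing coefficient tensors with zeros yields the displayed expansion.

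For item~(2) I will use a telescoping (hybrid) argument. Write the exact outer map as $\Phi^{\mathrm{phys}}_{M_t}\circ\cdots\circ\Phi^{\mathrm{phys}}_1$ and the polynomial one as $\Phi_{M_t}\circ\cdots\circ\Phi_1$, with the attack substeps first. For $k=0,\dots,M_t$, define the hybrids $H_k:=\Phi^{\mathrm{phys}}_{M_t}\circ\cdots\circ\Phi^{\mathrm{phys}}_{k+1}\circ\Phi_k\circ\cdots\circ\Phi_1$. Then $H_0$ is the exact map, $H_{M_t}$ is the polynomial map, and
\[
\|H_{k-1}(v)-H_k(v)\|_2\le \Lambda^{M_t-k}\,\|\Phi^{\mathrm{phys}}_k(w_k)-\Phi_k(w_k)\|_2\le \Lambda^{M_t-k}\varepsilon_{\mathrm{sub}},
\]
where $w_k:=\Phi_{k-1}\circ\cdots\circ\Phi_1(v)$. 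Two facts justify this step. First, forward invariance of $\mathcal V$ under every substep keeps all intermediate points, including $w_k$ and both images $\Phi_k(w_k)$, $\Phi^{\mathrm{phys}}_k(w_k)$, inside $\mathcal V$. Second, on $\mathcal V$ the Lipschitz constant of the remaining exact composition is at most $\Lambda^{M_t-k}$. Summing over $k$ gives $\sum_{r=0}^{M_t-1}\Lambda^r\varepsilon_{\mathrm{sub}}=C_t(\Lambda)\varepsilon_{\mathrm{sub}}\le C_{\mathrm{sched}}(\Lambda)\varepsilon_{\mathrm{sub}}$. The closed form follows because the sum is monotone in the number of terms when $\Lambda\ge0$, combined with the geometric series.

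Item~(3) is essentially bookkeeping. Theorem~\ref{thm:one_shot_folded_clean}, Proposition~\ref{prop:carleman_horizon_cond_clean} and Theorem~\ref{thm:truncation_cutoff_clean} use the update only through its coefficient expansion $\Psi_t(v)=\sum_\ell Q^{(t)}_\ell v^{\otimes\ell}$ with a uniform degree bound. By item~(1) we may substitute $D_{\mathrm{sched}}$ for $D$, and the stated hypotheses (H1)--(H6) are assumed directly for the composed maps. Item~(2) supplies \eqref{eq:base_step_bridge_assumption_clean} with $\varepsilon_{\mathrm{base,step}}=C_{\mathrm{sched}}(\Lambda)\varepsilon_{\mathrm{sub}}$, so item~(2) of Proposition~\ref{prop:base_to_lifted_bridge_clean} and Corollary~\ref{cor:exact_dynamics_bridge_clean} apply as written.

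The main obstacle is making the telescoping in item~(2) rigorous. We must make sure every hybrid point lies in $\mathcal V$, so that the Lipschitz and sup-error hypotheses are legitimately applicable at each stage. I would handle this explicitly via the forward-invariance assumption, stated for both families of substeps, and note that only the Lipschitz bounds for the exact substeps are actually used.
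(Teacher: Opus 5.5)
Your proposal is correct. Items~(1) and~(3) match the paper's argument: a per-substep degree count, the bound that degree is at most multiplied under composition, and then bookkeeping.

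For item~(2) you run the mirror image of the paper's estimate. The paper uses the forward recursion $e_r\le\|\Phi^{\mathrm{phys}}_{t,r}(z_{r-1})-\Phi_{t,r}(z_{r-1})\|_2+\|\Phi_{t,r}(z_{r-1})-\Phi_{t,r}(\tilde z_{r-1})\|_2\le\varepsilon_{\mathrm{sub}}+\Lambda e_{r-1}$. It therefore evaluates the substep discrepancy at the exact iterates and propagates it with the Lipschitz bound of the \emph{polynomial} substeps. Your hybrids $H_k$ instead evaluate the discrepancy at the polynomial iterates $w_k$ and propagate it through the exact tail $\Phi^{\mathrm{phys}}_{M_t}\circ\cdots\circ\Phi^{\mathrm{phys}}_{k+1}$. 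Unrolling the paper's recursion gives the same telescoping sum with the two families interchanged, so the bound $C_t(\Lambda)\,\varepsilon_{\mathrm{sub}}$ comes out identically. Since the hypothesis grants the constant $\Lambda$ to both families, your argument is fully valid as written.

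What the paper's orientation buys is that it uses only the polynomial-side Lipschitz bound. That bound is automatically finite on a bounded $\mathcal V$, because polynomials are smooth. Your orientation uses the \emph{exact}-side bound instead. For the attack substeps that map contains the discontinuous $\operatorname{sign}$, so a finite Lipschitz constant there needs extra structure, for instance that no coordinate of $g_{\delta,t,j}$ changes sign on $\mathcal V$. Your closing remark that only the exact Lipschitz bounds are used is accurate, but it relies on the more demanding half of the hypothesis.
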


\begin{proof}
\textbf{Item~(1).}
The attack-substep degree bound is exactly the same calculation as in \Cref{prop:folded_effective_degree_clean}: composing the degree-\(q\) perturbation-gradient surrogate with the degree-\(K_s\) sign polynomial and the degree-\(K_c\) clipping polynomial gives \(D_A\le qK_sK_c\). Each learner substep keeps the perturbation block unchanged and composes the current state with a degree-\(q\) gradient surrogate, so \(\deg U_{t,\ell}\le q\). Since degrees multiply under composition of polynomial maps, composing \(K_t\) attack substeps and then \(L_t\) learner substeps gives the first inequality in \eqref{eq:fixed_KL_degree_bound_clean}, and the uniform bounds \(K_t\le K_{\max}\), \(L_t\le L_{\max}\) give the displayed time-uniform cutoff \(D_{\mathrm{sched}}\). The coefficient expansion is the standard polynomial expansion on \(\mathbb R^d\), with zero padding above the true degree if needed.

\medskip\noindent\textbf{Item~(2).}
Let \(M_t:=K_t+L_t\), and enumerate the exact and polynomial substeps in application order by
\[
\Phi_{t,1}^{\mathrm{phys}},\dots,\Phi_{t,M_t}^{\mathrm{phys}},
\qquad
\Phi_{t,1},\dots,\Phi_{t,M_t},
\]
so that the first \(K_t\) maps are the attack substeps and the last \(L_t\) maps are the learner substeps. Starting from the same \(v\in\mathcal V\), define
\[
z_0=\tilde z_0:=v,
\qquad
z_r:=\Phi_{t,r}^{\mathrm{phys}}(z_{r-1}),
\qquad
\tilde z_r:=\Phi_{t,r}(\tilde z_{r-1}),
\qquad r=1,\dots,M_t.
\]
Forward invariance keeps every intermediate point inside \(\mathcal V\). By the triangle inequality, the uniform substep discrepancy bound, and the Lipschitz bound,
\[
\|z_r-\tilde z_r\|_2
\le
\|\Phi_{t,r}^{\mathrm{phys}}(z_{r-1})-\Phi_{t,r}(z_{r-1})\|_2
+
\|\Phi_{t,r}(z_{r-1})-\Phi_{t,r}(\tilde z_{r-1})\|_2
\le
\varepsilon_{\mathrm{sub}}+\Lambda\|z_{r-1}-\tilde z_{r-1}\|_2.
\]
Set \(e_r:=\|z_r-\tilde z_r\|_2\), so \(e_0=0\) and
\[
e_r\le \varepsilon_{\mathrm{sub}}+\Lambda e_{r-1},
\qquad r=1,\dots,M_t.
\]
An induction on \(r\) gives
\[
e_r\le \left(\sum_{q=0}^{r-1}\Lambda^q\right)\varepsilon_{\mathrm{sub}}.
\]
Therefore
\[
\|z_{M_t}-\tilde z_{M_t}\|_2
=
e_{M_t}
\le
\left(\sum_{q=0}^{M_t-1}\Lambda^q\right)\varepsilon_{\mathrm{sub}}
=
C_t(\Lambda)\,\varepsilon_{\mathrm{sub}}
\le
C_{\mathrm{sched}}(\Lambda)\,\varepsilon_{\mathrm{sub}}.
\]
Since \(z_{M_t}=\Psi_t^{\mathrm{phys},K_t,L_t}(v)\) and \(\tilde z_{M_t}=\Psi_t^{K_t,L_t}(v)\), this proves \eqref{eq:fixed_KL_base_step_bound_clean}.

\medskip\noindent\textbf{Item~(3).}
Once \(\Psi_t^{K_t,L_t}\) is viewed as the polynomial outer map for one training step, \Cref{thm:one_shot_folded_clean} uses only its coefficient family \(\{Q_\ell^{(t)}\}\), the associated truncated lifted blocks, and the hypotheses (H1)--(H4). Item~(1) provides the time-uniform degree bound needed to define those coefficients, and item~(2) gives \eqref{eq:base_step_bridge_assumption_clean} with the choice \eqref{eq:fixed_KL_eps_base_clean}. Therefore item~(2) of \Cref{prop:base_to_lifted_bridge_clean} yields the same lifted error-term bound, and Corollary~\ref{cor:exact_dynamics_bridge_clean} follows without further changes.
\end{proof}

A qRAM specialization follows under the following additional assumptions. Under the hypotheses of \Cref{thm:one_shot_folded_clean}, assume in addition that
\begin{equation}
\label{eq:qram_prep_cost_clean}
C_{\mathrm{prep}}(B_{\mathrm{rhs}})
=
\widetilde{\mathcal O}(\operatorname{polylog} N_h),
\end{equation}
for example because the entries of \(B_{\mathrm{rhs}}\) are stored in qRAM and the corresponding amplitude-loading routine is available.
Assume also that one query to the chosen QLSA access model for the rescaled horizon matrix \(\overline M\) can be implemented in
\begin{equation}
\label{eq:qram_sparse_query_cost_clean}
\widetilde{\mathcal O}(1)
\quad\text{or more generally}\quad
\widetilde{\mathcal O}(\operatorname{polylog} N_h)
\end{equation}
gates.
Then the gate complexity in item~(3) of \Cref{thm:one_shot_folded_clean} simplifies to
\begin{equation}
\label{eq:qram_one_shot_gate_complexity_clean}
\widetilde{\mathcal O}\!\left(
s_M\,\kappa_2(M)\,
\operatorname{polylog}\!\frac{N_h}{\varepsilon_{\mathrm{LS}}}
\right),
\end{equation}
while the success probability from item~(3) and the state-error bounds
\eqref{eq:one_shot_end_to_end_state_clean} and
\eqref{eq:one_shot_final_eps_clean} remain unchanged. Substitute \eqref{eq:qram_prep_cost_clean} and \eqref{eq:qram_sparse_query_cost_clean} into \eqref{eq:one_shot_gate_complexity_clean}; both contributions are then absorbed into the displayed \(\widetilde{\mathcal O}\) bound.

Segmented local contractivity can also be stated directly at the dynamics level. Partition the training window as
\[
0=T_0<T_1<\cdots<T_R=T,
\qquad
L_r:=T_{r+1}-T_r.
\]
For each segment $r\in\{0,1,\dots,R-1\}$, let
\[
Y_{N,r}^{\mathrm{ex}}
:=
\bigl(y^{(N)}(T_r),y^{(N)}(T_r+1),\dots,y^{(N)}(T_{r+1})\bigr)
\]
denote the exact truncated lifted trajectory restricted to that segment, and let \(\hat Y_r\) be the solution of the same truncated lifted recurrence on the same interval, initialized with the exact boundary value \(\hat y(T_r)=y^{(N)}(T_r)\). Assume that on each segment one has
\begin{equation}
\label{eq:segment_contractivity_clean}
\sup_{T_r\le t\le T_{r+1}-1}\|B(t)\|_2\le \rho_r<1,
\end{equation}
and define the segmentwise tail constant
\begin{equation}
\label{eq:segment_Gamma_clean}
\Gamma_{N,r}
:=
\max_{T_r\le t\le T_{r+1}-1}
\left(
\sum_{j=1}^{N}
\left[
\sum_{s=N+1}^{jD}\|K_{j,s}(t)\|_2\,\bar v_r^{s}
\right]^2
\right)^{1/2},
\end{equation}
where $\bar v_r<1$ bounds the exact trajectory of the polynomial update model on that segment. Then
\begin{equation}
\label{eq:segment_truncation_clean}
\|Y_{N,r}^{\mathrm{ex}}-\hat Y_r\|_2
\le
\sqrt{L_r+1}\,\frac{\Gamma_{N,r}}{1-\rho_r},
\end{equation}
and, if \(\hat Y^{\mathrm{seg}}\) denotes the concatenated segmented truncated lifted trajectory obtained by discarding duplicate terminal copies at interior boundaries, then
\begin{equation}
\label{eq:segment_global_truncation_clean}
\|Y_N^{\mathrm{ex}}-\hat Y^{\mathrm{seg}}\|_2^2
\le
\sum_{r=0}^{R-1}
(L_r+1)
\left(\frac{\Gamma_{N,r}}{1-\rho_r}\right)^2.
\end{equation}
The proof is the same contractive-forcing argument used in Item~(2) of \Cref{thm:truncation_cutoff_clean}, applied on each shorter window and then combined with the contractivity of the coordinate-deletion map that removes duplicated boundary blocks. This result controls only the dynamical approximation; restart and readout costs require separate analysis.

A final specialization worth recording is the affine-gradient case. Assume that on a local domain \(\mathcal V\subset \mathbb R^d\) the exact gradients are affine:
\begin{equation}
\label{eq:affine_gradient_class_clean}
g_{\delta,t}(v)=A_{\delta,t}v+b_{\delta,t},
\qquad
g_{u,t}(v)=A_{u,t}v+b_{u,t},
\qquad
t=0,1,\dots,T-1.
\end{equation}
Choose the polynomial gradient approximations to be exact, namely
\(
\mathcal G_{\delta,t}=g_{\delta,t}
\)
and
\(
\mathcal G_{u,t}=g_{u,t}
\).
Then \(q=1\), \(\varepsilon_{\delta,\mathrm{grad}}=0\), and \(\varepsilon_{u,\mathrm{grad}}=0\).
Assume further that:
\begin{enumerate}
\item the local domain \(\mathcal V\) is forward invariant under both the exact one-step map and the polynomial one-step model;
\item every \(v\in\mathcal V\) satisfies the dead-zone condition \eqref{eq:dead_zone_condition_clean} and the clip-safe condition \eqref{eq:clip_safe_region_clean};
\item there exists a target \(\varepsilon_{\mathrm{nl,step}}>0\) such that the sign and clip degrees are fixed once for the whole training window and chosen, for example by Item~(4) of \Cref{prop:sign_clip_design_clean} with the common local-domain parameters and \(\eta_\delta=\sup_{0\le t\le T-1}\eta_{\delta,t}\), so that
\begin{equation}
\label{eq:affine_uniform_nl_budget_clean}
\sup_{0\le t\le T-1}\ \sup_{v\in\mathcal V}
\|\delta_t^{\mathrm{ex}}(v)-\delta_t^{\mathrm{poly}}(v)\|_2
\le
\varepsilon_{\mathrm{nl,step}};
\end{equation}
\item \(\|v\|_2\le \bar v<1\) for every \(v\in\mathcal V\);
\item if \(A_{u,t}^{(\delta)}\in\mathbb R^{n\times m}\) denotes the perturbation block of \(A_{u,t}\), then \(\|A_{u,t}^{(\delta)}\|_2\le L_{u,\delta}^{(t)}\);
\item the full-step coefficients \(Q_{\ell}^{(t)}\) satisfy the weighted-tail condition \eqref{eq:weighted_coeff_uniform_clean} and the coefficient-sparsity condition used in Item~(1) of \Cref{prop:sparsity_oracle_clean}.
\end{enumerate}
Then the one-step approximation bound can be taken as
\begin{equation}
\label{eq:affine_base_step_budget_clean}
\varepsilon_{\mathrm{base,step}}
\le
\sup_{0\le t\le T-1}
\bigl(1+\eta_{u,t}\|A_{u,t}^{(\delta)}\|_2\bigr)\,
\varepsilon_{\mathrm{nl,step}}.
\end{equation}
Consequently, the modeling error bound in \eqref{eq:bridge_model_step_budget_clean} holds with this choice of \(\varepsilon_{\mathrm{base,step}}\), the cutoff \(N\) may be chosen by \eqref{eq:weighted_cutoff_design_clean}, and \(s_B\) may be bounded by \eqref{eq:sB_from_coeff_sparsity_clean} or \eqref{eq:sB_uniform_coeff_sparsity_clean}. If, in addition, the contractive and initialization assumptions (H1)--(H4) of Theorem~\ref{thm:one_shot_folded_clean} hold, then Corollary~\ref{cor:exact_dynamics_bridge_clean} yields a state-preparation guarantee for the exact nonsmooth dynamics. Since the exact gradients are affine, Item~(1) of \Cref{prop:base_to_lifted_bridge_clean} applies with \(\varepsilon_{u,\mathrm{grad}}=0\), and Item~(2) then gives the displayed modeling error bound.

\bibliographystyle{unsrtnat}
\bibliography{arxivbib}

\end{document}